\newtheorem{assumption}{A}
\newcommand{\R}{{\mathbb R}}  
\newcommand{\N}{{\mathbb N}}  
\DeclareMathOperator*{\argmax}{argmax}
\DeclareMathOperator*{\argmin}{argmin}
\crefname{hypothesis}{Hypothesis}{Hypotheses}
\title{Marginal likelihood estimation in semi-blind image deconvolution: A stochastic approximation approach.}
\author{ Charlesquin Kemajou Mbakam\thanks{Heriot-Watt University 
  (\email{cmk2000@hw.ac.uk}).}
\and Marcelo Pereyra\thanks{ Heriot-Watt University
  (\email{mp71@hw.ac.uk}).}
\and Jean-François Giovannelli\thanks{University of Bordeaux 
	(\email{Giova@IMS-Bordeaux.fr}).}}
\newcommand*{\addFileDependency}[1]{
  \typeout{(#1)}
  \@addtofilelist{#1}
  \IfFileExists{#1}{}{\typeout{No file #1.}}
}
\begin{document}
	
	\maketitle
	
	\begin{abstract}
		 This paper presents a novel stochastic optimisation methodology to perform empirical Bayesian inference in semi-blind image deconvolution problems. Given a blurred image and a parametric class of possible operators, the proposed optimisation approach automatically calibrates the parameters of the blur model by maximum marginal likelihood estimation, followed by (non-blind) image deconvolution by maximum-a-posteriori estimation conditionally to the estimated model parameters. In addition to the blur model, the proposed approach also automatically calibrates the noise variance as well as any regularisation parameters. The marginal likelihood of the blur, noise variance, and regularisation parameters is generally computationally intractable, as it requires calculating several integrals over the entire solution space. Our approach addresses this difficulty by using a stochastic approximation proximal gradient optimisation scheme, which iteratively solves such integrals by using a Moreau-Yosida regularised unadjusted Langevin Markov chain Monte Carlo algorithm. This optimisation strategy can be easily and efficiently applied to any model that is log-concave, and by using the same gradient and proximal operators that are required to compute the maximum-a-posteriori solution by convex optimisation. We provide convergence guarantees for the proposed optimisation scheme under realistic and easily verifiable conditions and subsequently demonstrate the effectiveness of the approach with a series of deconvolution experiments and comparisons with alternative strategies from the state of the art.
  
	\end{abstract}
	\begin{keywords}
	Image deblurring, semi-blind inverse problems, Empirical Bayes, Markov chain Monte Carlo, Stochastic approximation proximal gradient optimisation, Model selection.	
	\end{keywords}
	\begin{AMS}
	    60J22, 65C40, 68U10, 62E17, 62F15, 62H10, 65J22, 68W25, 65C60, 62C12, 65J20
	\end{AMS}

\section{Introduction}
We consider image deconvolution problems that seek to recover an unknown image $x^\star\in \R^d$ from a blurred and noisy observation $y\in \R^d$ related to $x^\star$ by the following linear model
\begin{equation} \label{pb:forward_model}
	y = H x^\star + w\, 
\end{equation}
where $w$ is a realisation of zero-mean white Gaussian noise with covariance variance $\sigma^2 \mathbb{I}_d$, and $H$ is a circulant matrix of size $d\times d$ representing the discrete convolution with a blur kernel $h$ (also commonly known in the image deconvolution literature as the point-spread-function (PSF)). Image deconvolution problems are central to imaging sciences (see, e.g., \cite{almeida2009blind, fergus2006removing, jefferies2002blind, bertero2000image,markham1999parametric}, for examples in photographic imaging, astronomical imaging, medical imaging).
	
A main challenge in performing image deblurring is that the estimation of $x^\star$ from $y$ by direct inversion of $H$ is often either ill-conditioned or ill-posed \cite{kaipio2006statistical,idier2013bayesian}. This difficulty can be addressed by exploiting prior knowledge about $x^\star$ in order to regularise the estimation problem and derive solutions that are well-posed\footnote{A problem is said to be well-posed if the following properties are satisfied:
$i)$ existence of a solution,
$ii)$ uniqueness of the solution and 
$iii)$ the solution changes in a Lipschitz continuous manner w.r.t. perturbations in the data $y$ (stability).}. Following several decades of active research on the topic, the image deconvolution literature proposes a wide range of methods to tackle these problems. This paper considers image deconvolution methodology for situations in which there is no suitable training data available to use as ground truth (with regards to methodology for situations with abundant training data available, we refer the reader to the recent works \cite{dong2015image, kupyn2018deblurgan, albluwi2018image,monga2021algorithm,pesquet2021learning,li2020efficient,bertocchi2020deep} based on deep learning strategies). Within the class of image deconvolution methods that do not involve training data, we broadly classify the methods as non-blind, blind and semi-blind based on their assumptions on $h$. 
		
Non-blind image deblurring methods seek to estimate $x^\star$ from $y$ under the assumption that the blur kernel $h$ is perfectly known. There are many regularisation strategies available to make these problems well-posed, offering a wide range of possibilities in terms of estimation accuracy and computational efficiency (see, e.g., \cite{afonso2010fast,beck2009fast,laumont2022bayesian,chantas2006bayesian, guerrero2006deblurring, bar2006image}). By comparison to blind or semi-blind methods, state-of-the-art non-blind methods are fast and accurate. However, deploying non-blind image deconvolution methods in real-world settings often require expert supervision and onerous calibration experiments to determine $h$. In some applications, regular recalibration experiments are also required.
		
In contrast to non-blind methods, blind image deblurring methods assume that $h$ is completely unknown. This leads to a challenging estimation problem that is severely ill-posed and requires significant regularisation to deliver well-posed solutions (we refer the reader to \cite{campisi2017blind} for an excellent introduction to the topic and an overview of the state-of-the-art until 2017). Blind image deconvolution methods often seek to address this difficulty in one of the following three ways. One approach is to construct estimators to determine $x^\star$ and $h$ jointly from $y$, such as joint penalised least-squares estimators \cite{krishnan2011blind,chan1998total}, or joint maximum-a-posteriori (MAP) estimators formulated in a hierarchical Bayesian framework \cite{kotera2013blind,almeida2009blind,michaeli2014blind}. Another hierarchical Bayesian approach is to estimate $x^\star$ directly from $y$ without assigning a value to $h$, e.g., by marginal MAP or minimum mean squared error (MMSE) estimation (with the marginalisation of $h$) \cite{chantas2006bayesian, huang2022unrolled}. A third approach is to estimate $h$ from $y$, without involving the unknown image, followed by inference on $x^\star$ in a non-blind manner by using the estimate of $h$ as the truth. This arises, for example, in empirical Bayesian methods that estimate $h$ from $y$ by maximum marginal likelihood estimation (MMLE), followed by MAP estimation of $x^\star$ with the (pseudo)-posterior density (see, e.g., \cite{levin2009understanding,levin2011efficient,babacan2010variational,abdulaziz2021blind}. Naturally, blind methods are often noticeably more computationally expensive and inaccurate than non-blind methods that are correctly calibrated. However, blind methods can be applied more widely, as they do not assume prior knowledge of $h$ or require calibration experiments to determine $h$. Blind methods are also more robust than non-blind methods to errors in the estimation of $h$: in most applications, the blur $h$ is not known a priori with good accuracy; and the non-blind methods are sensitive to mismatches between the PSF used in the method and the true PSF. It is worth mentioning at this point that blind deconvolution is a longstanding focus of effort by the community and there are many excellent classical blind deconvolution strategies based on spectral methods and filtering (see, e.g., \cite{kundur1996blind,kundur1996blindrev} for an overview of the topic).

Semi-blind methods are an intermediate approach which stems from the observation that, in many applications of image deconvolution, there is enough information to specify $h$ partially. Indeed, in many disciplines, while practitioners do not know $h$ exactly a priori, they do have the expertise required to approximate $h$ by using their knowledge of the imaging system (see \cite{morin2013semi, michailovich2007blind,pantin2007deconvolution, holmes2006blind} for examples in medical imaging, astronomy, and microscopy). Formally, semi-blind image deconvolution methods posit that $h$ belongs to some appropriate parametric family of blur operators. Common choices to model $h$ are parametric functions (e.g., the Gaussian, Laplace, Moffat, and Cauchy blur models), or polynomial sequences such as the Zernike polynomials \cite{Niu_2022}. Constraining $h$ to a given parametric family significantly regularises the estimation problem, at the expense of introducing non-linearity in the relationship to $y$. Moreover, in a manner akin to blind problems, semi-blind problems can be solved by joint MAP estimation to determine the unknown image and the parameters of $h$ from $y$ \cite{almeida2009blind}, by marginal MAP or MMSE estimation (following the marginalisation of the parameters of $h$) \cite{orieux2010bayesian,Park2014}, or by estimating the parameter of $h$ by MMLE followed by empirical Bayesian MAP estimation of $x^\star$ given $y$ and the estimated blur parameters. From an implementation viewpoint, most blind deconvolution methods can be adapted for semi-blind inference by parametrising $h$ and either optimising or marginalising w.r.t. to the parameter of $h$, instead of $h$ itself. For example, in Section \ref{results} we report comparisons with two semi-blind methods \cite{almeida2009blind,orieux2010bayesian}, as well with two blind methods \cite{levin2011efficient,abdulaziz2021blind} that can be straightforwardly adapted for semi-blind inference. We have chosen to report comparisons with these methods because they provide a wide range of complementary alternative strategies to perform inference: the method \cite{almeida2009blind} is based on joint MAP estimation, \cite{orieux2010bayesian} implements a hierarchical Bayesian formulation by using a Markov chain Monte Carlo (MCMC) algorithm, \cite{levin2011efficient} adopts an empirical Bayesian approach by using a variational Bayesian approximation, and \cite{abdulaziz2021blind} also adopts an approximation of empirical Bayesian inference  by using an expectation propagation algorithm. 
 
 {Performing Bayesian inference in semi-blind image deconvolution problems is computationally challenging because of the high dimensionality involved. Prior work such as   \cite{levin2011efficient,abdulaziz2021blind,orieux2010bayesian} addresses this difficulty by making careful modelling choices and by leveraging Bayesian computation algorithms that are specialised to these models. This approach has traditionally relied strongly on Gaussian approximations, Gaussian models and conjugate priors. However, as we demonstrate empirically in Section \ref{results}, these approximations and modelling simplifications limit the performance of the resulting semi-blind inference methods. Moreover, from a Bayesian computation viewpoint, the joint MAP approach of \cite{almeida2009blind,almeida2013parameter} is significantly less challenging because it can be tackled by alternating optimisation and without resorting to approximations. This approach is particularly effective in semi-blind problems that are convex w.r.t. the unknown image $x$, where it can leverage modern proximal optimisation algorithms \cite{chambolle2016introduction}. However, the experiments reported in Section \ref{results} suggest that the resulting inference methods are less accurate than the empirical Bayesian estimation strategies previously mentioned. This is in agreement with the analysis of \cite{levin2011efficient} for blind deconvolution problems, as well as with the experiments reported in \cite{vidal2020maximum} for estimators of regularisation parameters in non-blind image deconvolution problems.}

 This paper presents a new and general method for performing empirical Bayesian inference in semi-blind image deblurring problems that are convex w.r.t. the unknown image $x$. The proposed method estimates the parameters of the blur $h$ directly from $y$ by MMLE, as well as other parameters such as the regularisation parameter and the noise variance. This is then followed by (non-blind) empirical Bayesian MAP inference on the unknown image $x^\star$ conditionally on the estimated parameters by proximal convex optimisation. Unlike existing empirical Bayesian strategies that rely on deterministic algorithms and approximations, the proposed method is based on a state-of-the-art stochastic approximation proximal gradient (SAPG) algorithm \cite{vidal2020maximum} that performs the required computations efficiently without resorting to Gaussian approximations.
 The wide range of numerical experiments presented in \Cref{results} suggest that the method is remarkably accurate as a result. In addition to experimental results, we also present detailed theoretical convergence results for the proposed method. To the best of our knowledge, this is the first provably convergent empirical Bayesian semi-blind image deconvolution method, as well as the first method of this kind that can be straightforwardly applied to any image deconvolution model with an underlying convex geometry.

 The remainder of the paper is organised as follows: \Cref{problem statement} introduces notation and defines the class of semi-blind image deconvolution problems considered in the paper.  \Cref{methodology} presents the proposed SAPG algorithm for the semi-blind image deconvolution, establishes convergence guarantees, and provides implementation guidelines.  \Cref{results} demonstrates the proposed method with a range of experiments involving three classes of blur operators (Gaussian, Laplace and Moffat) and a selection of noise levels, as well as via comparisons with alternative approaches from the state-of-the-art. The conclusions and perspectives for future work are finally reported in \Cref{conclusion}.
	
	\section{Problem statement}\label{problem statement}
	We consider semi-blind image deconvolution problems that seek to estimate an unknown image $x^\star \in\R^d$ from a blurred and noisy observation $y\in\R^d$, related to $x^\star$ by
	\begin{equation} \label{pb:forward_model_sb}
		y = H(\alpha) x^\star + w\, 
	\end{equation}
	where $w$ is a realisation of zero-mean white Gaussian noise with variance $\sigma^2$ and where $H(\alpha)$ belongs to a known parametric family of convolution operators
	\begin{equation}
		\mathcal{K} = \left\{H(\alpha) : \R^d\longrightarrow \R^d, \alpha\in\Theta_{\alpha} \right\}\, .
	\end{equation} 
	Note that $\mathcal{K}$ is parametrised by $\alpha\in\Theta_{\alpha}$, where $\Theta_\alpha \subset \R^{d_\alpha}$ is assumed to be convex and compact. In this paper, the values of $\alpha$ and $\sigma^2$ are assumed to be unknown, and we will seek to estimate them from $y$ jointly with $x^\star$. We denote by $\Theta_{\sigma^2}\subset \R_+$ the set of admissible values for $\sigma^2$. 
	
	We formulate the semi-blind image deconvolution problem in the Bayesian statistical framework. First, we model $x^\star$ as a realisation of a random variable $\mathbbm{x}$. The distribution of $\mathbbm{x}$, known as the \emph{prior}, is assumed to admit a probability density function of the form \begin{equation}\label{eq:prior}
		p(x|\theta) = \exp{(-\theta^\top g(x))}/Z(\theta),
	\end{equation}
	where $g: \R^d \longrightarrow (-\infty,+\infty]$ is a convex, proper, and lower semi-continuous function, and  $\theta\in\Theta_{\theta}$ is the regularisation parameter taking values in $\Theta_\theta\subset \R^{d_\theta}$. We assume that $\Theta_\theta$ is compact and convex. Note that $g$ is possibly non-smooth. The normalisation constant of $p(x|\theta)$ is given by 
	\begin{equation}\label{eq:normalisation}
		Z(\theta) = \int_{\R^d}\exp{(-\theta^T g(\tilde{x}))}d\tilde{x}.
	\end{equation}
	For computational efficiency arguments that will become clear later, we also assume $g$ is $q-$homogeneous, as this allows calculating $\nabla_\theta \log Z(\theta)$ analytically \cite{pereyra2015maximum}. (A function $g:  \R^d \longrightarrow \R_+$ is $q-$homogeneous for $q>0$ if for all $u\in \R^d$ and $c>0$, $f(c u) = c^qf(u)$ \cite{sauer1993local}). Note that many widely used image priors verify this assumption, including priors based on the $l_1$-norm, the total-variation (TV) pseudo-norm, and Huber norms ($\ell_{1-2}$ norm). In Appendix \ref{g:generalecase}, we describe how to adapt the proposed methodology to address semi-blind deconvolution problems where $g$ is not homogeneous, at the expense of a higher computational cost. 
	
	Moreover, this paper is predominantly concerned with applications where the prior is specified analytically because of a lack of suitable training data sets to adopt a data-driven regularisation strategy \cite{Mukherjee2022}. The extension of the proposed methodology to problems with data-driven priors encoded by neural networks is discussed in \Cref{conclusion} as a main perspective for future work.
	
	With regards to the likelihood function, under \eqref{pb:forward_model_sb}, $y$ as a realisation of a $\mathbb{R}^d$-valued random $\mathbbm{y}|\mathbbm{x}=x^\star$ with conditional density function
	\begin{equation}\label{eq:likelihood}
		p(y|x,\alpha, \sigma^2) \propto \exp{(-f_{\alpha,\sigma^2}^y(u))},
	\end{equation} 
	where $f_{\alpha,\sigma^2}^y(x) = \frac{1}{2\sigma^2}||y - H(\alpha)x||^2$. From \eqref{eq:likelihood} and \eqref{eq:prior}, and by using Bayes' theorem \cite{robert2007bayesian}, the posterior distribution for $(\mathbbm{x}|\mathbbm{y}=y)$ has density given by
	\begin{equation}\label{dis: posterior}
		p(x|y,\theta,\alpha, \sigma^2) = \frac{p(y|x,\alpha, \sigma^2)p(x|\theta)}{p(y|\theta, \alpha,\sigma^2)}\, .
	\end{equation}
	Note that the posterior distribution \eqref{dis: posterior} is log-concave, as $g$ and $f_{\alpha,\sigma^2}^y$ are convex functions. The marginal likelihood
	\begin{equation}\label{eq:marginal}
		p(y|\theta,\alpha,\sigma^2) =  \int_{\mathbb{R}^d}p(y,\tilde{x}|\theta,\alpha,\sigma^2)d\tilde{x} \, , 
	\end{equation} 
	is known in the literature as the evidence or marginal likelihood of the model, and it provides a measure of the quality of the model. Adopting an empirical Bayesian approach, we will use $p(y|\theta,\alpha,\sigma^2)$ to estimate $\theta$, $\alpha$, and $\sigma^2$.

	In a non-blind problem, where $\theta$, $\alpha$ and $\sigma^2$ are known, the method of choice to perform inference on $\mathbbm{x}|\mathbbm{y}=y$ is maximum-a-posteriori (MAP) estimation, i.e.,
	\begin{equation}\label{point estimate}
		\bar{x} = \argmin_{x\in \mathbb{R}^d}\left\lbrace f_{\alpha,\sigma^2}^y(x) + \theta^Tg(x)\right\rbrace,
	\end{equation}
	which usually delivers very accurate solutions that can be efficiently calculated by using the state-of-art proximal convex optimisation algorithms \cite{afonso2010fast, chambolle2016introduction, chan2019performance}. Alternatively, one could also use a proximal Markov chain Monte Carlo to compute other quantities of interest, such as the posterior mean $\textrm{E}(\mathbbm{x}|\mathbbm{y}=y)$ or other quantities related to uncertainty quantification analyses \cite{durmus2018efficient, pereyra2020accelerating,cai2022proximal}.  
	
	As stated previously, we consider the significantly more challenging semi-blind setting in which $\theta, \alpha$ and $\sigma^2$ are unknown. In a manner akin to \cite{vidal2020maximum}, we address this difficulty by adopting an empirical Bayesian strategy to automatically estimate $\theta, \alpha,\sigma^2$ from $y$ by maximum likelihood estimation. Given accurate estimates for $\theta, \alpha,\sigma^2$, we then revert to a non-blind deconvolution problem that can be efficiently solved by MAP estimation.
	
	\section{Proposed Empirical Bayesian method for semi-blind image deconvolution}\label{methodology}
	\subsection{Empirical Bayesian MAP estimation} 
	Adopting an empirical Bayesian approach, we propose to estimate the unknown parameters $\theta, \alpha,\sigma^2$ from $y$ by maximising the marginal likelihood \eqref{eq:marginal}, i.e.,
	\begin{equation}\label{eq:MAP}
		(\bar{\theta}, \bar{\alpha}, \bar{\sigma}^2) \in \argmax_{\theta\in\Theta_{\theta}, ~\alpha\in\Theta_{\alpha}, ~\sigma^2\in\Theta_{\sigma^2}}p(y|\theta, \alpha, \sigma^2)\, ,
	\end{equation}
	followed by maximum-a-posteriori inference on $x^\star$ given the estimated parameters $\bar{\theta}, \bar{\alpha}$ and  $\bar{\sigma}^2$, i.e., 
	\begin{equation}\label{eq:speudo-posterior}
		\bar{x} = \argmax_{x\in\R^d} p(x|y,\bar{\theta}, \bar{\alpha},\bar{\sigma}^2)\, .
	\end{equation}
	Under the assumption that problem \eqref{eq:speudo-posterior} can be solved efficiently by using convex optimisation algorithms, similarly to a non-blind problem, we focus our attention on solving \eqref{eq:MAP}.
	
	Empirical Bayesian inference was introduced by Robbins \cite{robbins1964empirical} and extended in \cite{Johns1971, Mara1984}. It is increasingly studied in the Bayesian statistics literature as a powerful alternative to hierarchical Bayesian inference (see, e.g., \cite{carlin2000bayes, de2020maximum, vidal2020maximum, pankajakshan2009blind, thiebaut1995strict, dobigeon2009hierarchical, zhang2007gaussian, chen2009empirical, jalobeanu2002estimation, orieux2013estimating}).
	However, applying this approach to imaging inverse problems is very challenging because the marginal likelihood $p(y|\theta, \alpha,\sigma^2)$ is analytically and computationally intractable, as it involves integration on $\R^d$. As a result, the optimisation problem \eqref{eq:MAP} is highly non-trivial to solve. In a manner akin to \cite{vidal2020maximum,de2020maximum}, we address this difficulty by adopting a stochastic approximation proximal gradient scheme, underpinned by a proximal Markov chain Monte Carlo sampling method. For exposition clarity, we focus on the case where $g$ is $q$-homogeneous, as this leads to some simplifications (the case of $g$ convex but not homogeneous is treated in Appendix \ref{g:generalecase}).\\
	
	\subsection{Computation of $\theta$, $\alpha$ and $\sigma^2$} 
	In this section, we describe the proposed stochastic optimisation scheme used to solve the marginal maximum likelihood estimation problem \eqref{eq:MAP}. As stated above, the difficulties with this estimation problem come from the intractability of $p(y|\theta,\alpha,\sigma^2)$ and its derivatives. The scheme that we use is closely related to an iterative projected gradient descent algorithm to solve \eqref{eq:marginal} when gradients are tractable, given by
	\begin{eqnarray}\label{GPA}
		\theta_{n+1} &=&~~ \Pi_{\Theta_\theta}\left[\theta_n + \delta_{n+1} \nabla_{\theta}\log p(y|\theta_n,\alpha_n,\sigma^2)\right]\label{update: para exact1},\\
		\alpha_{n+1} &=&~~ \Pi_{\Theta_{\alpha}}\left[\alpha_n + \delta_{n+1} \nabla_{\alpha}\log p(y|\theta_n,\alpha_n,\sigma^2)\right]\label{update: para exact2},\\
		\sigma^2_{n+1} &=&~~ \Pi_{\Theta_{\sigma^2}}\left[\sigma^2_n + \delta_{n+1} \nabla_{\sigma^2}\log p(y|\theta_n,\alpha_n,\sigma^2)\right], \label{update: para exact3}
	\end{eqnarray}
	where $(\delta_{n})_{n\in\N}$ is a non-increasing sequence of positive step-sizes, and $\Pi_{\Theta_\theta}$, $\Pi_{\Theta_{\alpha}}$ and $\Pi_{\Theta_{\sigma^2}}$ are projection operators onto $\Theta_\theta$, $\Theta_{\alpha}$ and $\Theta_{\sigma^2}$ (i.e., for any closed convex set $C$, $\Pi_{C}(s) = \argmin_{s'\in C}||s - s'||^2$ denotes the projection operator onto $C$). 
	
	Applying \eqref{update: para exact1}-\eqref{update: para exact3} to the semi-blind image deconvolution problems considered here is difficult because $p(y|\theta,\alpha,\sigma^2)$, and its gradient w.r.t $\theta, \alpha$ and $\sigma^2$, are usually analytically and computationally intractable. To address this difficulty, we use a stochastic variant of \eqref{update: para exact1}-\eqref{update: para exact3} where the intractable gradients are replaced with stochastic estimates, leading to a stochastic approximation proximal gradient (SAPG) scheme. In a manner akin to \cite{vidal2020maximum}, these stochastic estimates are derived by applying Fisher's identity as follows:
	\begin{equation}\label{grad: theta}
		\nabla_{\theta}\log p(y|\theta,\alpha, \sigma^2) =-\int_{\mathbb{R}^d}g(x)p(x|y,\theta,\alpha,\sigma^2)dx - \nabla_{\theta}\log Z(\theta)\, ,
	\end{equation} 
	\begin{equation}\label{grad: Alpha}
		\nabla_{\alpha}\log p(y|\theta,\alpha,\sigma^2) = -\int_{\mathbb{R}^d} \nabla_\alpha \left[f^y_{\alpha,\sigma^2}(x)\right]p(x|y,\theta, \alpha,\sigma^2) dx\, ,
	\end{equation} 
	and,
	\begin{equation}\label{grad: Sigma}
		\nabla_{\sigma^2}\log p(y|\theta,\alpha,\sigma^2) = -\int_{\mathbb{R}^d} \nabla_{\sigma^2} \left[f_{\alpha,\sigma^2}^y(x)\right]p(x|y,\theta, \alpha, \sigma^2) dx - \dfrac{d}{2\sigma^2}\, .
	\end{equation} 
 Please see \cref{appendix: Gradient fiher identity} for more details regarding the derivation of these gradients.

{Given a sample $(X_k)_{k=0}^{m_n}$ distributed according to $p(x|y,\theta, \alpha, \sigma^2)$, where $m_n$ is a sequence of non-decreasing sample sizes, the expectation w.r.t. $p(x|y,\theta, \alpha, \sigma^2)$ in \eqref{grad: theta}, \eqref{grad: Alpha} and \eqref{grad: Sigma} are replaced by the following Monte Carlo estimates:}
\begin{eqnarray}
    \Delta_{m_n,\theta}  &=& \frac{1}{m_n}\sum_{k =0}^{m_n}\nabla_{\theta}\log p(X_k,y|\theta,\alpha, \sigma^2) =-\frac{1}{m_n}\sum_{k =0}^{m_n}g(X_k) - \nabla_{\theta}\log Z(\theta)\, ,\label{delta1}\\
    \Delta_{m_n,\alpha}  &=&  \frac{1}{m_n}\sum_{k =0}^{m_n}\nabla_{\alpha}\log p(X_k,y|\theta,\alpha, \sigma^2) = -\frac{1}{m_n}\sum_{k =0}^{m}\nabla_{\alpha}f^y_{\alpha,\sigma^2}(X_k)\, ,\label{delta2}\\
    \Delta_{m_n,\sigma^2}  &=& \frac{1}{m_n}\sum_{k =0}^{m_n}\nabla_{\sigma^2}\log p(X_k,y|\theta,\alpha, \sigma^2) =  - \frac{1}{m_n}\sum_{k =0}^{m_n}\left[\nabla_{\sigma^2}f^y_{\alpha, \sigma^2}(X_k) + \dfrac{d}{2\sigma^2}\right]\, .\label{delta3}
\end{eqnarray} 		 
Because $g$ is a q-homogeneous function, we note that  $\nabla_{\theta}\log Z(\theta)$ can be expressed as follows \cite{vidal2020maximum},
\begin{equation}
    \nabla_{\theta}\log Z(\theta) = -\frac{d}{ q\theta},
\end{equation}
Putting all together, \eqref{update: para exact1}, \eqref{update: para exact2} and \eqref{update: para exact3} become:
\begin{eqnarray}
    \theta_{n+1} & =&  \Pi_{\Theta_\theta}\left[\theta_n + \delta_{n+1} \Delta_{m_n,\theta}\right]\label{update: para appro1}\, , \\
    \alpha_{n+1} & =&  \Pi_{\Theta_{\alpha}}\left[\alpha_n + \delta_{n+1} \Delta_{m_n,\alpha}\right] \label{update: para appro2}\, ,\\
    \sigma^2_{n+1} & =&  \Pi_{\Theta_{\sigma^2}}\left[\sigma^2_n + \delta_{n+1} \Delta_{m_n,\sigma^2}\right]\, . \label{update: para appro3}
\end{eqnarray}
From the sequence of iterates generated by the SAPG recursion \eqref{update: para appro1}-\eqref{update: para appro3}, we compute the estimates
\begin{eqnarray}
    \bar{\theta}_N &= \sum_{n = 1}^{N} \omega_n\theta_n \bigg/ \sum_{n = 1}^{N}\omega_n\label{eq: optimal para1}\\
    \bar{\alpha}_N &= \sum_{n = 1}^{N} \omega_n\alpha_n \bigg/ \sum_{n = 1}^{N}\omega_n\label{eq: optimal para2}\\
    \bar{\sigma}^2_N &= \sum_{n = 1}^{N} \omega_n\sigma^2_n \bigg/ \sum_{n = 1}^{N}\omega_n\label{eq: optimal para3},
\end{eqnarray}
where $(\omega_n)_n$ is a sequence of weights that controls a bias-variance trade-off. 
	
The convergence properties of this SAPG optimisation scheme depend critically on the choice of the Monte Carlo sampling method that is used to generate the sample $(X_k)_{k=0}^{m_n}$, as well as on the choice of the sequences $(m_n)_n$, $(\delta_n)_n$, and $(\omega_n)_n$, which we discuss below.
	
\subsection{Markov chain Monte Carlo method}
A critical element in the implementation of the stochastic optimisation method described above is the choice of the Monte Carlo algorithm used to generate the samples from the posterior \eqref{dis: posterior}, which are used to compute the estimate \eqref{update: para appro1}-\eqref{update: para appro3}. Direct simulation from \eqref{dis: posterior} is rarely possible in imaging inverse problems, so a Markov chain Monte Carlo (MCMC) method is used instead \cite{pereyra2015survey}. For applications in imaging, the MCMC method used should scale efficiently to large problems and have convergence properties that depend smoothly on $\theta,\alpha,\sigma^2$, so that the gradient estimates computed from the Markov chain also behave smoothly.
	
In particular, because $p(x|y,\theta,\alpha,\sigma^2)$ is high-dimensional, log-concave, and not smooth, we consider the Moreau-Yosida regularised unadjusted Langevin algorithm (MYULA), given by the recursion \cite{durmus2018efficient}
\begin{equation}\label{eq: MYULA}
    X_{k+1} = (1 - \frac{\gamma}{\lambda})X_k - \gamma \nabla_x f^y_{\alpha, \sigma^2}(X_k) + \frac{\gamma}{\lambda}\text{prox}^\lambda_{\theta g}(X_k) +\sqrt{2\gamma}Z_{k+1},
\end{equation}
where $\gamma > 0$ is the step size, $\lambda > 0$ is a smoothing parameter controlling the trade-off between the bias and the computational efficiency of the chain, $(Z_k)_{k\geq 1}$ is a sequence of i.i.d. $d$-dimensional standard Gaussian random variables, and where for any $\lambda >0 $ and $u \in \mathbb{R}^d$,  
	
\begin{equation}\label{eq: prox}
    \text{prox}^{\lambda}_{\theta g}(x) = \argmin_{u\in \R^{d}} \theta^T g(u) + \frac{1}{2\lambda}||x - u||^2_2\, .
\end{equation}

At each iteration of the SAPG scheme, MYULA generates a sequence of samples asymptotically distributed according to a smooth approximation of the posterior density $p(x|y, \theta_n, \alpha_n, \sigma_n^2)$, which can then be used to compute the Monte Carlo approximations of \eqref{grad: theta}, \eqref{grad: Alpha} and \eqref{grad: Sigma}. These approximations are biased but can be made arbitrarily accurate by reducing the values of $\gamma > 0$ and $\lambda > 0$ at the expense of a higher computational cost. \Cref{algorithm: SAPG} below summarises the proposed SAPG scheme to automatically set $\theta$, $\alpha$ and $\sigma^2$ directly from $y$ by maximum marginal likelihood estimation. Implementation guidelines for setting the SAPG parameters, as well as $\gamma$ and $\lambda$, are provided in \Cref{section:parameters_settings}.
	
\begin{algorithm}[H]
    \caption{Proposed SAPG algorithm for semi-blind image deconvolution}
    \begin{algorithmic}[1]
        \STATE Set $\Theta_\theta, \Theta_{\alpha}$ and $\Theta_{\sigma^2}$, the smoothing parameter $\lambda$, the number of iterations $N$, the step-sizes $\{\gamma_n\}_{n=1}^N$ and $\{\delta_n\}_{n=1}^N$, and the batch size $\{m_n\}_{n=1}^N$.
        \STATE Initialization:  $\{\theta_0, \alpha_0, \sigma^2_0, X_0^0\}$.
        \FOR {$n = 0:N-1$}
        \IF {$n>0$}
        \STATE set $X_0^n = X_{m_n-1}^{n-1}$
        \ENDIF
        \FOR{$k = 0:m_n-1$}
        \STATE$X^n_{k+1} = (1 - \dfrac{\gamma_n}{\lambda})X_k - \gamma_n \nabla_x f^y_{\alpha_n, \sigma^2_n}(X_k^n) + \dfrac{\gamma_n}{\lambda}\text{prox}^\lambda_{\theta_n g}(X^n_k) + \sqrt{2\gamma_n}Z_{k+1}$
        \ENDFOR
        \STATE$\theta_{n+1} = \Pi_{\Theta_\theta}\left[ \theta_n + \dfrac{\delta_{n+1}}{m_n}\sum_{k=1}^{m_n}\left\lbrace\frac{d}{q \theta_n} - g(X_k^n)\right\rbrace\right] $
        \STATE $\alpha_{n+1} = \Pi_{\Theta_\alpha}\left[ \alpha_n - \dfrac{\delta_{n+1}}{m_n}\sum_{k = 1}^{m_n}\nabla_{\alpha}f^y_{\alpha_n, \sigma^2_n}(X_k^n) \right] $
        \STATE $\sigma^2_{n+1} = \Pi_{\Theta_{\sigma^2}}\left[ \sigma^2_n - \dfrac{\delta_{n+1}}{m_n}\sum_{k = 1}^{m_n}\left\lbrace\nabla_{\sigma^2}f^y_{\alpha_n, \sigma^2_n}(X_k^n) + \dfrac{d}{2\sigma^2_n}\right\rbrace\right] $
        \ENDFOR
        \RETURN $\bar{\theta}_N $, $\bar{\alpha}_N$ and $\bar{\sigma}^2_N$ are evaluated using \eqref{eq: optimal para1}, \eqref{eq: optimal para2} and \eqref{eq: optimal para3} respectively.
    \RETURN $\bar{x}_{MAP} = \argmax_{x\in\R^d} p(x|y,\bar{\theta}, \bar{\alpha},\bar{\sigma}^2)$
    \end{algorithmic}
    
    \label{algorithm: SAPG}
\end{algorithm}
It is worth mentioning at this point that integrating the MYULA kernel into the SAPG algorithm allows us to provide convergence guarantees for the scheme. However, in the semi-blind setting where the blur operator $H(\alpha)$ is very poorly conditioned, the MYULA sampler is not fast because it has poor mixing properties. Instead of using MYULA kernel, we can use an accelerated like SK-ROCK sampler \cite{pereyra2020accelerating}, latent space SK-ROCK sampler \cite{pereyra2022split} or latent space MYULA sampler \cite{pereyra2022split} to name a few.  
\subsection{Theoretical convergence guarantees for \Cref{algorithm: SAPG}}
{We now discuss the convergence properties of the proposed SAPG scheme. Our results follows directly from the general convergence analysis of SAPG schemes driven by ULA MCMC sampler presented recently in \cite{de2021efficient}, which is itself an application of the convergence theory for general Stochastic Approximation schemes of \cite[Chapter 5]{harold1997stochastic}. Here we establish that these results apply to the specific SAPG scheme summarised in \cref{algorithm: SAPG}, and we subsequently use these results to produce guidelines for setting the parameters of \cref{algorithm: SAPG}}.

For presentation clarity, we gather all the model parameters of interest in $\vartheta = \left\lbrace \theta, \alpha, \sigma^2 \right\rbrace$ and let $\Theta \subset \R^{d_\Theta}$ denote the set of admissible values for $\vartheta$, with $d_\Theta = d_\theta + d_\alpha + d_{\sigma^2}$. For any $\lambda > 0$, let $p^\lambda(x|y,\vartheta)$ denote the Moreau-Yosida approximation of the posterior density $p(x|y,\vartheta)$, obtained by replacing the prior potential $g$ in \eqref{dis: posterior} by its Moreau-Yosida envelope $g^\lambda$, given for all $x \in \mathbb{R}^d$ by $g^\lambda (x) = \min_{u \in \mathbb{R}^d} g(u) + \|u-x\|_2^2/(2\lambda)$ \cite{durmus2018efficient}. We denote the associated marginal likelihood by 
\begin{equation}\label{eq:marginal_lambda}
p^\lambda(y|\vartheta) = \int_{R^d} p(y|\tilde{x},\vartheta)p^\lambda(\tilde{x}|\vartheta)\textrm{d}\tilde{x}\, ,
\end{equation}
which for small $\lambda$ closely approximates the original marginal likelihood of interest $p(y|\vartheta)$. Indeed, $p^\lambda(x|,y,\vartheta)$ and $p^\lambda(y|\vartheta)$ can be made arbitrarily close to $p(x|,y,\vartheta)$ and $p(y|\vartheta)$ by reducing the value of $\lambda$, with the approximation error vanishing as $\lambda \rightarrow 0$ \cite[Proposition 3.1]{durmus2018efficient}. 

{In a manner akin to \cite{de2021efficient}, we use the approximation $p^\lambda(y|\vartheta)$ of $p(y|\vartheta)$ to study} the convergence of the iterates generated by \Cref{algorithm: SAPG} to a critical point of $\vartheta \mapsto F^\lambda(\vartheta) = -\log p^\lambda(y|\vartheta)$, characterised by the set of Kuhn Tucker (KT) points $\mathcal{L}^\lambda_{KT}$:

\begin{equation}\label{equation: KKT condition}
    \mathcal{L}^\lambda_{KT} = \left\lbrace \vartheta \in \Theta\,\,:\, \nabla_\vartheta F^\lambda(\vartheta) + z = 0, \, z\in N(\vartheta, \partial \Theta)\right\rbrace,
\end{equation}
where $\vartheta \mapsto F^\lambda(\vartheta) = -\log p^\lambda(y|\vartheta)$, $ N(\vartheta, \partial\Theta)$ is the normal space of $\Theta$ at $\vartheta$ \cite{klingenberg2013course}, and where the normal term $z\in N(\vartheta, \partial\Theta)$ stems from solutions that lie on the boundary of $\Theta$. We consider the relaxation $\mathcal{L}^\lambda_{KT}$ because it allows establishing convergence results that are asymptotically exact and because the resulting bias w.r.t. the original set of critical points of interest $\left\lbrace \vartheta \in \Theta\,\,:\, \nabla_\vartheta \log p(y|\vartheta) + z = 0, \, z\in N(\vartheta, \partial \Theta)\right\rbrace$ is negligible in practice \footnote{It is possible to accurately characterise the bias introduced by targeting $\mathcal{L}^\lambda_{KT}$ instead of $\left\lbrace \vartheta \in \Theta\,\,:\, \nabla_\vartheta \log p(y|\vartheta) + z = 0, \, z\in N(\vartheta, \partial \Theta)\right\rbrace$ by considering specific models of interest (see, e.g., \cite{de2020maximum}). However, there is significant empirical evidence that this bias is negligible when $\lambda$ is of the order of $\gamma$ (e.g., see \cite{de2020maximum,vidal2020maximum} and the error in the estimations of $\sigma^2$ in \Cref{subsection:our_method}).}. {This leads to two convergence results}. First, \Cref{theorem: convergence} which requires the batch size $m_n$ to increase as $n$ increases, but it holds when $\vartheta \mapsto F^\lambda(\vartheta)$ is not convex on $\Theta$. This is then followed by \Cref{theorem: convergence convex}, which is a stronger result that also holds when $m_n = 1$, but assumes that $\vartheta \mapsto F^\lambda(\vartheta)$ is convex on $\Theta$. The results rely on the following regularity assumptions, adapted from \cite{de2021efficient} to our setting:

\begin{assumption}\label{assump1}
$\Theta\subset \R^{d_\Theta}$ is a convex compact hyper-rectangle.
\end{assumption}
\begin{assumption}[Lipschitz continuity of $\vartheta \mapsto \nabla F_\vartheta^\lambda$ \cite{de2021efficient}]\label{assump2}
There exist an open \,set $U\subset \R^{d_\Theta}$ and $L>0$ such that $\Theta\subset U, \; F^\lambda(\vartheta) \in C^1(U, \R)$ and satisfies for any $\vartheta_1, \vartheta_2\in\Theta$
\[||\nabla_\vartheta F^\lambda(\vartheta_1) - \nabla_\vartheta F^\lambda(\vartheta_2)|| < L||\vartheta_1 -\vartheta_2||. \]
\end{assumption}

\begin{assumption}[Existence of a strong solution to the Langevin SDE for all $\vartheta \in \Theta$ \cite{de2021efficient}]\label{hypothesis l1}
\begin{enumerate}
    \item[]
For any $\lambda > 0$ and $\vartheta\in \Theta$, the potential $U^\lambda_\vartheta (x) \triangleq - \log p^\lambda(x|y, \vartheta)$ is continuous w.r.t $(x,\vartheta)$, $x\mapsto U_\vartheta(x)$ is differentiable for all $\vartheta\in \Theta$, and there exists $L_U>0$ such that for any $x,u\in \R^d,$
\[\sup_{\vartheta\in\Theta}||\nabla_xU^\lambda_\vartheta(x) - \nabla_x U^\lambda_\vartheta(u)|| \leq L_U||x - u||,\]
and $\left\lbrace ||\nabla_x U_\vartheta(0)|| : \vartheta \in \Theta\right\rbrace$ is bounded.
\end{enumerate}
\end{assumption}
\begin{assumption}[Geometric ergodicity of the MYULA kernel \eqref{eq: MYULA} \cite{de2021efficient}]\label{hypothesis l23}
\begin{enumerate}
\item[] For any $x\in \R^d$
\begin{itemize}
\item[(i)] There exists $\texttt{m}_1>0$ and $\texttt{m}_2, \texttt{c}, \, R_1\geq 0$ such that for any $\vartheta \in \Theta$,
\[\langle U^\lambda_\vartheta(x), x\rangle \geq \texttt{m}_1||x||\mathbbm{1}_{B(0,R_1)^c}(x) + \texttt{m}_2||\nabla_x U^\lambda_\vartheta(x)||^2 - \texttt{c}\]

\item[(ii)] There exists $L_0>0$ such that for any $\vartheta_1, \vartheta_2 \in \Theta$,
\[||\nabla_x U^\lambda_{\vartheta_1}(x) - \nabla_x U^\lambda_{\vartheta_2}(x)|| \leq L_0||\vartheta_1 - \vartheta_2||V(x)^{1/2},\]
where $V(x) = \exp\{\texttt{m}_1 \sqrt{1 + x^2}/4\}$.
\end{itemize}

\end{enumerate}
\end{assumption}

Given the above-stated assumptions, \Cref{theorem: convergence} below establishes the convergence of the iterates generated by \Cref{algorithm: SAPG} to an element of $\mathcal{L}^\lambda_{KT}$; {this result is an application of \cite[Theorem 8]{de2021efficient}.}


\begin{theorem}\label{theorem: convergence}
Assume \Cref{assump1}, \Cref{assump2}, \Cref{hypothesis l1} and \Cref{hypothesis l23} hold. Let $\bar{\gamma} >0$, $(\gamma_n)_{n\in\N}$ and $(\delta_n)_{n\in\N}$ be sequences of non-increasing positive real numbers such that $\text{sup}_{n\in\N}\delta_n < 1/L_f$ and $\text{sup}_{n\in\N}\gamma_n < \bar{\gamma}$. Moreover, we consider an increasing sequence of batch sizes $(m_n)_{n\in\N}$, and assume that  
\begin{equation}\label{equation: condition theo 1}
    \sum_{n=0}^{+\infty}\delta_{n+1} = +\infty, \quad \sum_{n=0}^{+\infty}\delta_{n+1}\sqrt\gamma_n < +\infty, \quad
    \sum_{n=0}^{+\infty}\delta_{n+1}/(m_n\gamma_n) < +\infty, 
\end{equation}
Then, the iterates $(\vartheta_n)_{n\in\N}$ produced by \Cref{algorithm: SAPG} converge almost surely to some $\vartheta^\star \in \mathcal{L}^\lambda_{KT}$. 
\end{theorem}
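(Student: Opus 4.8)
The plan is to obtain \Cref{theorem: convergence} as an application of the convergence analysis of SAPG schemes driven by an unadjusted Langevin sampler in \cite[Theorem 8]{de2021efficient} (which is itself a specialisation of the general stochastic approximation theory of \cite[Chapter 5]{harold1997stochastic}). Indeed, \Cref{assump1}, \Cref{assump2}, \Cref{hypothesis l1} and \Cref{hypothesis l23} were stated precisely so as to coincide, once specialised to the objective $F^\lambda(\vartheta)=-\log p^\lambda(y|\vartheta)$ and to the MYULA kernel \eqref{eq: MYULA}, with the hypotheses of that theorem. What genuinely has to be established is the bridge between our concrete scheme and that abstract framework: (i) that the recursion \eqref{update: para appro1}--\eqref{update: para appro3} is a projected noisy-gradient step on $F^\lambda$, i.e.\ that the increments \eqref{delta1}--\eqref{delta3} are consistent estimators of $\nabla_\vartheta F^\lambda(\vartheta_n)$ with bias and variance controlled along the iterations; and (ii) that the step-size and batch-size conditions \eqref{equation: condition theo 1} are precisely those required by \cite[Theorem 8]{de2021efficient}.

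First I would establish the gradient representation. By Fisher's identity (see \cref{appendix: Gradient fiher identity}), each component of $\nabla_\vartheta F^\lambda(\vartheta)$ equals the posterior expectation under $p^\lambda(x|y,\vartheta)$ displayed in \eqref{grad: theta}--\eqref{grad: Sigma}, with the prior potential $g$ replaced by its Moreau--Yosida envelope $g^\lambda$; the interchange of differentiation and integration is justified by dominated convergence, using \Cref{hypothesis l1}, the compactness of $\Theta$ with $\sigma^2$ bounded away from $0$, and the posterior moment bounds implied by the drift condition of \Cref{hypothesis l23}(i). The $q$-homogeneity of $g$ furnishes the closed form $\nabla_\theta\log Z(\theta)=-d/(q\theta)$, which is exactly what makes $\Delta_{m_n,\theta}$ in \eqref{delta1} explicitly computable; absent homogeneity one would fall back on the scheme of Appendix~\ref{g:generalecase}. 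Consequently \eqref{delta1}--\eqref{delta3} are ergodic averages, over the MYULA chain $(X^n_k)_k$ run with the frozen parameter $\vartheta_n$, of integrands whose stationary expectations are the exact gradients, so conditionally on the past $\Delta_{m_n,\vartheta}$ differs from $\nabla_\vartheta F^\lambda(\vartheta_n)$ only by (a) the discretisation bias of MYULA, of order $O(\sqrt{\gamma_n})$ in the relevant $V$-weighted metric, and (b) a vanishing non-stationarity term coming from the warm-started, slowly varying chain, while the conditional second moment of the centred error is $O(1/(m_n\gamma_n))$. These quantitative forms are exactly what \Cref{hypothesis l1} and \Cref{hypothesis l23} deliver: $U^\lambda_\vartheta(x)=-\log p^\lambda(x|y,\vartheta)$ has globally $x$-Lipschitz gradient with constant $\|H(\alpha)\|^2/\sigma^2+1/\lambda$, it satisfies the coercivity condition of \Cref{hypothesis l23}(i) uniformly over $\Theta$, and its $x$-gradient is Lipschitz in $\vartheta$ against the Lyapunov function $V$ of \Cref{hypothesis l23}(ii), which together yield geometric ergodicity of \eqref{eq: MYULA} uniformly in $\vartheta\in\Theta$. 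Combining these with \eqref{equation: condition theo 1}---$\sum\delta_{n+1}=\infty$ (persistent excitation), $\sum\delta_{n+1}\sqrt{\gamma_n}<\infty$ (annihilating the discretisation bias), $\sum\delta_{n+1}/(m_n\gamma_n)<\infty$ (annihilating the fluctuations, which uses the increasing batch sizes)---and with $\sup_n\delta_n<1/L_f$ and $\sup_n\gamma_n<\bar\gamma$, all hypotheses of \cite[Theorem 8]{de2021efficient} are met; its conclusion, obtained via the ODE/Kushner--Clark argument of \cite[Chapter 5]{harold1997stochastic}, is that $(\vartheta_n)$ asymptotically tracks the projected gradient flow of $F^\lambda$ on $\Theta$ and converges almost surely to its equilibrium set, which is precisely $\mathcal{L}^\lambda_{KT}$ of \eqref{equation: KKT condition}. (Throughout one targets the $\lambda$-regularised marginal $F^\lambda$ rather than $-\log p(y|\vartheta)$, and the $\theta$-update replaces $g^\lambda$ and its partition-function gradient by $g$ and $-d/(q\theta)$; by \cite[Proposition 3.1]{durmus2018efficient} these surrogate choices perturb the target critical set only by $O(\lambda)$, which is negligible in practice, as discussed above.)

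The step I expect to be hardest is the uniform-in-$\vartheta$ control underpinning (a)--(b), namely that the geometric-ergodicity constants of MYULA do not degenerate over $\Theta$. This is genuinely delicate in the semi-blind regime because $H(\alpha)$ can be severely ill-conditioned: one must use compactness of $\Theta_\alpha$ to bound $\sup_\alpha\|H(\alpha)\|$ and, crucially, that $\Theta_{\sigma^2}$ is bounded away from $0$ so that the likelihood curvature $\|H(\alpha)\|^2/\sigma^2$ and the term $d/(2\sigma_n^2)$ in \eqref{delta3} stay uniformly bounded, and likewise that $\Theta_\theta$ is bounded away from $0$ so that $d/(q\theta_n)$ in \eqref{delta1} is finite on $\Theta$; these are exactly the points that make \Cref{hypothesis l1} and \Cref{hypothesis l23} verifiable for our model. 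A secondary, more mechanical, difficulty is propagating all of this regularity through the Moreau--Yosida smoothing, i.e.\ checking that $g^\lambda$ and $\text{prox}^\lambda_{\theta g}$ of \eqref{eq: prox} endow $p^\lambda(\cdot|y,\vartheta)$ with Lipschitz-gradient and drift constants that are explicit in $\lambda$; this follows from \cite[Proposition 3.1]{durmus2018efficient} together with standard firm-nonexpansiveness properties of the proximal map, and I would treat it as routine.
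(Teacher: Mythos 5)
Your proposal takes essentially the same route as the paper: the paper's entire proof is a one-line appeal to \cite[Theorem 8]{de2021efficient}, and your plan is a fleshed-out version of exactly that reduction, with the hypothesis-matching details (Fisher's identity, the bias/variance roles of the three summability conditions, uniform-in-$\vartheta$ ergodicity over the compact $\Theta$) spelled out rather than left implicit. The one point the paper's proof does address that you omit is that \cite[Theorem 8]{de2021efficient} is stated for $\Theta$ a differentiable manifold with boundary, whereas \Cref{assump1} makes $\Theta$ a hyper-rectangle, which is not smooth at its corners; the paper resolves this by noting that the underlying result \cite[Chapter 5, Theorem 2.3]{harold1997stochastic} explicitly covers the hyper-rectangle case, so the citation still applies. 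Adding that remark would complete your argument; otherwise it is correct and matches the paper's approach.
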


\begin{proof}[Proof of \Cref{theorem: convergence}]
The proof of \Cref{theorem: convergence} is a direct application of \cite[Theorem 8]{de2021efficient}. Note that although \cite[Theorem 8]{de2021efficient} is presented under the assumption that $\Theta$ is a differentiable manifold with boundary, \cite[Chapter 5, Theorem 2.3]{harold1997stochastic} underpinning \cite[Theorem 8]{de2021efficient} also holds when $\Theta$ is not smooth in the specific case where $\Theta$ is a hyper-rectangle, as considered explicitly in \cite[Chapter 5]{harold1997stochastic}. 
\end{proof}

\begin{remark}
It should be noted that if there exist positive values for $a$, $b$, and $c$, such that for any natural number $n$, $\delta_n = n^{-a}$, $\gamma_n = n^{-b}$, and $m_n = \lceil n^c\rceil$, then the condition stated in \eqref{equation: condition theo 1} can be expressed equivalently as follows:
\begin{equation}
    a<1, \quad a + b/2 >1, \quad a - b + c >1.
\end{equation}
\end{remark}

Although we cannot formally guarantee that $F^\lambda(\vartheta)$ has a unique critical point, the maximum of the marginal likelihood $p^\lambda (y|\vartheta)$, we empirically find that \Cref{algorithm: SAPG} behaves as if $\vartheta \mapsto F^\lambda(\vartheta)$ where convex on $\Theta$, and in our experiments, the iterates always converge to the same solution. Indeed, the dimensionality of $y$ is significantly greater than the dimensionality of $\vartheta$, and the statistical dependence between the pixels of $y$ is relatively weak. As a result, we can expect that the marginal likelihood $(\theta,\alpha,\sigma^2) \mapsto p^\lambda(y|\theta,\alpha,\sigma^2)$ will be subject to the action of a Central Limit Theorem and that $\theta,\alpha,\sigma^2 \mapsto p^\lambda(y|\theta,\alpha,\sigma^2)$ will become highly concentrated and close to a Gaussian likelihood. This is consistent with what we have observed in all the experiments that we have conducted so far, where we have never observed local convergence issues. {As discussed in \cite{de2021efficient}, } this behaviour of the likelihood function can be formally established under some simplifying assumptions (see \cite{van2000asymptotic} for details). 

\Cref{theorem: convergence convex} below describes the convergence of \Cref{algorithm: SAPG} under the additional assumption that $\vartheta \mapsto F_\vartheta^\lambda$ is convex on $\Theta$ and that $\mathcal{L}_{KT}$ collapses to a single point, the MLE, in the interior of $\Theta$. Note that this result holds when the batch size is fixed, including when a single $m_n = 1$ MYULA sub-iteration is performed within each SAPG iteration \cite{de2021efficient}; {this result follows from \cite[Theorem 3]{de2021efficient}}.

\begin{theorem}[Fixed batch size $m_n = 1$]\label{theorem: convergence convex}
Assume \Cref{assump1}, \Cref{assump2}, \Cref{hypothesis l1} and \Cref{hypothesis l23} hold and that $\vartheta \mapsto F_\vartheta^\lambda$ is convex on $\Theta$. Let $m_n=1$, $\bar{\gamma} >0$, $(\gamma_n)_{n\in\N}$ and $(\delta_n)_{n\in\N}$ be sequences of non-increasing positive real numbers such that $\text{sup}_{n\in\N}\delta_n < 1/L_f$, $\text{sup}_{n\in\N}\gamma_n < \bar{\gamma}$, $\text{sup}_{n\in\N}|\delta_{n+1} - \delta_n|\delta_n^{-2} < +\infty$ and
    \begin{equation}\label{equation: condition theo 2}
    \begin{split}
        &\sum_{n=0}^{+\infty}\delta_{n+1} = +\infty, \quad \sum_{n=0}^{+\infty}\delta_{n+1}\sqrt{\gamma_n} < +\infty, \quad
        \sum_{n=0}^{+\infty}\delta_{n+1}/\gamma^2_n < +\infty, \\
         &\sum_{n=0}^{+\infty}\delta_n\gamma^{-2}_{n+1}[(\gamma_n/\gamma_{n+1} -1) + \delta_{n+1}\sqrt{\gamma_{n+1}}] < +\infty.
    \end{split}
    \end{equation}
 
    Then the following statements hold:
    \begin{itemize}
        \item[]
    \begin{enumerate}
        \item[(a)] $(\vartheta_n)_{n\in\N}$, defined by \Cref{algorithm: SAPG} converges almost surely to some $\vartheta^\star \in \argmin_{\Theta}F^\lambda(\vartheta)$;
        \item[(b)] furthermore, almost surely there exists $C\geq 0$ such that for any $n\in\N$
    \end{enumerate}
    \end{itemize}
    \begin{equation}
     \left\lbrace \sum_{k=1}^n \delta_k F^\lambda(\vartheta_k) \bigg/ \sum_{k=1}^n \delta_k - \min_{\Theta}F^\lambda(\vartheta) \right\rbrace \leq C\bigg/ \left(\sum_{k=1}^n \delta_k\right).
 \end{equation}
\end{theorem}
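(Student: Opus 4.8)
The plan is to obtain \Cref{theorem: convergence convex} as a specialisation of \cite[Theorem 3]{de2021efficient}, along the same lines as the proof of \Cref{theorem: convergence}. The first step is to identify the recursion \eqref{update: para appro1}--\eqref{update: para appro3} with $m_n = 1$ as an instance of the projected stochastic approximation scheme analysed in \cite{de2021efficient}, namely $\vartheta_{n+1} = \Pi_\Theta\bigl[\vartheta_n + \delta_{n+1}\bigl(\nabla_\vartheta \log p^\lambda(y|\vartheta_n) + \eta_{n+1}\bigr)\bigr]$, where $\eta_{n+1}$ is the discrepancy between $\nabla_\vartheta \log p^\lambda(y|\vartheta_n)$ and the single-sample estimate built from the MYULA sub-iteration. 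Here I would use Fisher's identity \eqref{grad: theta}--\eqref{grad: Sigma} and the $q$-homogeneity identity $\nabla_\theta \log Z(\theta) = -d/(q\theta)$ to confirm that, when the MYULA chain is at stationarity, the parameter updates in \Cref{algorithm: SAPG} are unbiased estimators of $\nabla_\vartheta \log p^\lambda(y|\vartheta_n)$, so that $\eta_{n+1}$ decomposes into a martingale-difference part and a vanishing Markov-chain bias part --- precisely the decomposition required by \cite{de2021efficient}.

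The second step is to check, one by one, the hypotheses of \cite[Theorem 3]{de2021efficient}. \Cref{assump1} supplies the convex compact hyper-rectangle needed for the projection step; \Cref{assump2} and \Cref{hypothesis l1} give the $C^1$ regularity and the global Lipschitz continuity of $\vartheta \mapsto \nabla_\vartheta F^\lambda(\vartheta)$ and of $x \mapsto \nabla_x U^\lambda_\vartheta(x)$ that control the one-step bias of MYULA uniformly in $\vartheta$; and \Cref{hypothesis l23} provides the drift and Lipschitz-in-$\vartheta$ conditions that, through the Lyapunov function $V$, yield geometric ergodicity of the MYULA kernel with constants uniform over $\Theta$. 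Combined with the summability conditions \eqref{equation: condition theo 2} --- in particular the extra hypotheses $\sup_{n}|\delta_{n+1}-\delta_n|\delta_n^{-2} < +\infty$ and the last line of \eqref{equation: condition theo 2}, which are exactly what \cite[Theorem 3]{de2021efficient} uses to extract an ergodic rate --- these are the ingredients under which $\sum_n \delta_{n+1}\eta_{n+1}$ converges almost surely, and statements (a) and (b) then follow directly.

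The third step handles the two structural hypotheses specific to the convex statement: the convexity of $\vartheta \mapsto F^\lambda(\vartheta)$ on $\Theta$ (assumed) and the collapse of $\mathcal{L}^\lambda_{KT}$ in \eqref{equation: KKT condition} to a single point, the MLE, in the interior of $\Theta$. Under convexity the KT set is non-empty and compact on $\Theta$; assuming it is a singleton in $\mathrm{int}\,\Theta$ forces the normal-cone term $z$ in \eqref{equation: KKT condition} to vanish along the limit, so the iterates asymptotically obey an unconstrained stochastic gradient recursion on a convex objective, which is the regime covered by \cite[Theorem 3]{de2021efficient}. As in the proof of \Cref{theorem: convergence}, one should remark that although \cite[Theorem 3]{de2021efficient} is stated for $\Theta$ a smooth manifold with boundary, the underlying stochastic approximation theory of \cite[Chapter 5]{harold1997stochastic} also covers the hyper-rectangle case considered here.

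I expect the main obstacle to be the step of translating \Cref{hypothesis l1} and \Cref{hypothesis l23} into uniform-in-$\vartheta$ geometric ergodicity of MYULA with the stated Lyapunov function, together with the Lipschitz dependence of its invariant law on $\vartheta$, since this is what makes the weighted martingale and bias terms in $\eta_{n+1}$ summable after multiplication by $\delta_{n+1}$; tracing the dependence of the error bounds in \cite{de2021efficient} on the constants $\texttt{m}_1,\texttt{m}_2,\texttt{c},L_0,L_U$ is where the genuine effort lies, even though it amounts to transcribing their analysis to the present model.
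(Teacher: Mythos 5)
Your proposal is correct and follows essentially the same route as the paper, whose entire proof is a direct appeal to \cite[Theorem 3]{de2021efficient}; your additional detail on mapping A1--A2 and H1--H2 onto that theorem's hypotheses, and the remark on the hyper-rectangle versus smooth-manifold setting of \cite[Chapter 5]{harold1997stochastic}, simply makes explicit what the paper leaves implicit.
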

\begin{proof} The proof of \Cref{theorem: convergence convex} follows directly from \cite[Theorem 3]{de2021efficient}.
\end{proof}
\begin{remark}
It should be noted that \eqref{equation: condition theo 2} holds if there exist $a>0$ and $b>0$, satisfying certain conditions. In particular, if we have $\delta_n = n^{-a}$ and $\gamma_n = n^{-b}$, it is necessary for $b$ to fall within the range $\left(2(1-a), a - 1/2\right)$. Importantly, this range is not empty when $a>5/6$.
\end{remark}
\begin{remark}
In our experiments, we use \Cref{algorithm: SAPG} with a constant MYULA step-size $\gamma_n = \gamma$ as we find that this significantly improves the convergence speed of the algorithm, without introducing a significant amount of asymptotic bias. \Cref{theorem: convergence convex} can be extended to this case by introducing an additional regularity condition on $\vartheta \mapsto \nabla_\theta \log p(x|y,\vartheta)$. However, we do not pursue this analysis here because it is more technical and this condition is model specific (see \cite[Theorem 4]{de2020maximum}).
\end{remark}

\subsection{Connections to hierarchical Bayesian MAP estimation} %
Semi-blind image deblurring problems can also be formulated by using a hierarchical Bayesian framework (see, e.g., \cite{orieux2010bayesian,Park2014}). In this framework the unknown hyperparameters $\theta$, $\alpha$ and $\sigma^2$ are modelled as random variables and assigned hyper-priors $p(\theta), p(\alpha)$ and $p(\sigma^2)$. This leads to an augmented posterior distribution
	\begin{equation}\label{eq:augmented_posterior}
		p(x,\theta,\alpha,\sigma^2|y) = \dfrac{p(y|x,\alpha,\sigma^2)p(x|\theta)p(\theta)p(\alpha)p(\sigma^2)}{p(y)}.
	\end{equation}
Given the augmented posterior distribution $p(x,\theta,\alpha, \sigma^2|y)$, two approaches have been used in the literature to carry inference on $x$ conditioned on $y$. The first approach estimates $\theta$, $\alpha$, $\sigma^2$ and image $x$ jointly, e.g., by using a joint MAP approach
	\begin{equation}
		(\bar{x}_*, \bar{\alpha}_*, \bar{\theta}_*,\bar{\sigma}^2_*) = \argmax_{x\in \R^d,\theta\in\Theta_{\theta}, \alpha\in\Theta_{\alpha}, \sigma\in\Theta_{\sigma^2}}p(x,\theta, \alpha, \sigma^2|y).
	\end{equation}
	Alternatively, the other approach marginalises the unknown model hyperparameters followed by inference on $x$ given $y$ by using the marginal posterior density
	\begin{equation} \label{equation: marginal posterior}
 \begin{split}
 		p(x|y) &= \int_{\Theta_{\theta}}\int_{\Theta_{\alpha}}\int_{\Theta_{\sigma^2}}p(x,\theta,\alpha,\sigma^2|y)d\theta d\alpha d\sigma^2\,, \\ &= \int_{\Theta_{\theta}}\int_{\Theta_{\alpha}}\int_{\Theta_{\sigma^2}}p(x|\theta,\alpha,\sigma^2,y)p(\theta,\alpha,\sigma^2|y)d\theta d\alpha d\sigma^2\,.
\end{split}
 \end{equation}
 For example, a standard choice in this context is to compute the Bayesian minimum mean square error (MMSE) estimator given by 
	\begin{equation}\label{eq:MMSE}
		\bar{x}_{MMSE} = \int_{\R^d} \tilde{x}p(\tilde{x}|y)d\tilde{x}\, ,
	\end{equation} 
which can be calculated by using an MCMC algorithm \cite{orieux2010bayesian} or a variational Bayes approximation \cite{Park2014}.

Note from the second line of \eqref{equation: marginal posterior} that empirical Bayesian inference can be seen as an approximation of hierarchical Bayesian inference, where the marginal $p(\theta,\alpha,\sigma^2|y)$ is approximated by a Dirac function at $\bar{\theta},\bar{\alpha},\bar{\sigma}^2$. Similarly, hierarchical Bayesian inference can be viewed as a refinement of empirical Bayesian inference, where we use a weighted average of posterior distributions for $p(x|y,\theta,\alpha,\sigma^2)$ that takes into account the uncertainty in the model by using $p(\theta,\alpha,\sigma^2|y)$ as a weighting function.    

When the Bayesian model is reasonably well-specified, employing a hierarchical Bayesian approach yields more accurate inferences compared to an empirical Bayesian approach that neglects the uncertainty surrounding the parameters $\alpha$, $\theta$, and $\sigma$ and replaces them with MMLE estimates obtained from $y$. However, it is important to note that hierarchical Bayesian methods are less robust to model misspecification compared to empirical Bayesian approaches. In other words, accurately estimating the tails of $p(\theta,\alpha,\sigma^2|y)$, which can significantly impact the conditional distribution $p(x|y)$, is more challenging within a hierarchical Bayesian framework than estimating the MMLE of these parameters. On the other hand, the posterior distribution $p(x|y,\bar{\theta},\bar{\alpha},\bar{\sigma}^2)$ is more robust to model misspecification. The log-concave imaging priors considered in this study serve as useful regularization tools for the estimation problem, although they are relatively simple approximations and may not perform well as generative image models. Consequently, an empirical Bayesian approach is more appropriate in this case. For a more detailed comparison between hierarchical and empirical Bayesian approaches in the context of non-blind image deconvolution problems and the estimation of $\theta$, we refer the reader to \cite{vidal2020maximum}.

\subsection{Implementation guidelines} \label{section:parameters_settings}
We now provide some recommendations and guidelines to set the parameters of \Cref{algorithm: SAPG}. These are general recommendations that are broadly applicable, but which do not exploit any model-specific properties. The performance of the algorithm can be further improved by tailoring the implementation to the model considered (e.g., by preconditioning MYULA and fine-tuning parameters). Many of the guidelines require knowledge of the Lipschitz constant $L_{\alpha}$ of $\nabla f_{\alpha, \sigma^2}^y$, which depends on unknown parameters $\alpha$ and $\sigma^2$. We calculated the Lipschitz constant as follows $L_{\alpha} = \max_{\alpha\in\Theta_\alpha}\left(||H(\alpha)||^2/\sigma_{min}^2\right)$, where $||H(\alpha)||$ is the maximum eigenvalue of the blur operator and  $\sigma^2_{min} = \min_{\sigma^2} \{\sigma^2\in\Theta_{\sigma^2}\}$. The observation $y$ is simulated by setting the noise variance value $\sigma^2$ to achieve a desired blurred signal-to-noise ratio (BSNR), given by
\begin{equation} \label{equation: bsnr}
    \text{BSNR}(y, x) = -10\log_{10}\left(\frac{||y - Hx||^2_2}{||Hx||^2}\right)\,.
\end{equation}
To achieve a desired BSNR we use the fact that when $d = \textrm{dim}(x)$ is large, the term $\|y-Hx\|_2^2 \approx d \sigma^2$.

    \paragraph{Step-size $\gamma_n$} We use a constant step-size $\gamma_n = \gamma$, as we find that this significantly improves the computational efficiency of the method at the expense of a very small asymptotic bias. The choice of $\gamma$ and the smoothing parameter $\lambda$ control a bias-variance trade-off of MYULA (large values of $\gamma, \lambda$ lead to faster convergence, but also to higher bias). We set $\gamma = 0.98(L_{\alpha} + \lambda^{-1})$ and $\lambda = \min(\frac{5}{L_{\alpha}},\lambda_{max})$ as default values, where $\lambda_{max} = 2$. We have empirically found that these values often provide a fast convergence speed without introducing excessive bias. 
		
    \paragraph{MYULA steps $m_n$}  In our experiments, we use one step ($m_n=1$) of MYULA within each iteration of the SAPG algorithm, as we have empirically found that this provides significant computational speed benefits without introducing excessive bias. 
    
    \paragraph{Step-size $\delta_n$} The step-size $\delta_n$ in the SAPG algorithm also controls the trade-off between computational efficiency and accuracy of the method. We set the step-size as follows $\delta_n = n^{-\kappa}/d$, where $d$ is the number of pixels in the image and $\kappa \in [0.5, 0.9]$. We use $\kappa = 0.8$ in all experiments.
    \paragraph{Weights $\left(w_n\right)_{n\in\N}$} The sequences of weights $w_n$ controlling the burn-in stage of the algorithm are given by
    $$w_n =  \begin{cases}
        0 ~~~~~~~~~~\text{ if }n\leq N_0 \\
        1~~~~~~~~~~\text{ if }n>N_0
    \end{cases},$$
    where $N_0$ defines the burn-in stage within the SAPG algorithm. Then, the point estimate considered for the parameters is given by
    $$\begin{cases}
        \bar{\theta}_N &= \sum_{n = N_0+1}^{N}\theta_n / (N-N_0) \\
        \bar{\alpha}_N &= \sum_{n = N_0+1}^{N}\alpha_{n}  / (N-N_0) \\
        \bar{\sigma}^2_N &= \sum_{n = N_0+1}^{N}\sigma^2_{n}  / (N-N_0)
    \end{cases}.$$
    \paragraph{$\Theta_\theta, \Theta_{\alpha}$ and $\Theta_{\sigma^2}$} The sets $\Theta_\theta, \Theta_{\alpha}$ and $\Theta_{\sigma^2}$ define the admissible values for $\theta, \alpha$ and $\sigma^2$ respectively. Where possible, use priors knowledge to avoid making these sets unnecessarily large. Narrowing down the set of admissible values regularises the estimation problem and improves the speed of convergence. We also recommend paying special attention to the stability of MYULA (e.g., the Lipschitz constant $L_{\alpha}$ diverges as $\sigma$ tends to zero). We set $\Theta_{\theta} = [10^{-3},1]$ for all experiments.
  
    \paragraph{Initialisation, $\theta_0, \alpha_0$ and $\sigma^2_0$} The initialisation of $\theta$, $\alpha$ and $\sigma^2$ is not critical. We recommend initialising the algorithms with the midpoints of $\Theta_\theta, \Theta_{\alpha}$ and $\Theta_{\sigma^2}$ and warming up the Markov chains by running for $M_0$ with fix parameters $\theta = \theta_0$, $\sigma^2 = \sigma^2_0$ and $\alpha = \alpha_0$ before starting updating $\theta$, $\sigma^2$ and $\alpha$ within the SAPG stage. Otherwise, a careful initialisation is not useful because the SAPG algorithm is initially unstable (because the steps $\delta_n$ are initially too large). In all our experiments hereafter, $M_0$ has been set to $3\times 10^4$ iterations. Besides, the SAPG stage has been run for $3\times 10^4$ iterations as well with $80\%$ of burn-in ($N_0 = 2.4\times 10^4$). We initialise the Markov chain with the degraded observation $y$ ($X_0^0 = y$) in our experiments.  
 
    \paragraph{SAPG iterations $N$} Regarding the number of iterations $N$, we recommend identifying this number online by using a stopping criterion. In our experiments, we monitor the relative change $|\bar{\theta}_{n+1} - \bar{\theta}_n|  /\bar{\theta}_n$, $|\bar{\alpha}_{n+1} - \bar{\alpha}_n|/\bar{\alpha}_n$ and $|\bar{\sigma}^2_{n+1} - \bar{\sigma}^2_n|/\bar{\sigma}^2_n$ and stop the SAPG \Cref{algorithm: SAPG} when the tolerance level $tol = 10^{-5}$ is reached. In some cases, similar results can be obtained using a higher tolerance level $tol = 10^{-3}$, which significantly reduces the computational cost of the method.
	
    \paragraph{Gradients $\nabla_{\alpha} f^y_{\alpha,\sigma^2}(x)$} The proposed SAPG algorithm requires computing the gradient $(x, \alpha, \sigma^2) \mapsto \nabla_{\alpha} f^y_{\alpha,\sigma^2}(x)$ which can be computed explicitly or by automatic differentiation. In the case where $H(\alpha)$ is a block circulant matrix, the gradient can be evaluated explicitly and straightforwardly by using the fact that $H(\alpha)$ can be diagonalised on a discrete Fourier transform. More precisely, by using standard differentiation rules we obtain that
    \begin{equation}
        \nabla_\alpha f^y_{\alpha,\sigma^2}(x) = \frac{1}{\sigma^2}\left[\nabla_\alpha\left\lbrace H(\alpha)x\right\rbrace \circ \left(H(\alpha)x - y\right)\right],
    \end{equation}
     where ``$\circ$'' denotes the element-wise multiplication. For computation detail, see \Cref{detail: Gradient}. We defined in the Fourier domain the convolution between the parametric forward operator $H(\alpha)$ and the image $x$ as follows
    \begin{equation}
        H(\alpha)x = \mathcal{F}^{\dagger}\left(\mathcal{F}H(\alpha)\circ\mathcal{F}x\right),
    \end{equation}
    where $\mathcal{F}$ and $\mathcal{F}^\dagger$ represent the 2D  discrete Fourier and inverse Fourier transform, respectively. For a fixed $x$, the gradient $\alpha \mapsto \nabla_\alpha\{H(\alpha)x\}$ is given by
    \begin{eqnarray}
        \nabla_\alpha\left\lbrace H(\alpha)x\right\rbrace &=& \nabla_\alpha\left\lbrace \mathcal{F}^{\dagger}\left(\mathcal{F}H(\alpha)\circ\mathcal{F}x\right)\right\rbrace \nonumber\\
        &=&   \mathcal{F}^{\dagger}\left(\mathcal{F}\nabla_\alpha H(\alpha)\circ\mathcal{F}x\right) \nonumber.
    \end{eqnarray}
    Following on from this, the gradient $\sigma^2\mapsto \nabla_{\sigma^2}f^y_{\alpha,\sigma^2}$ is defined as follows
    \begin{equation}
        \nabla_{\sigma^2}f^y_{\alpha, \sigma^2}(x) = -\frac{1}{2(\sigma^2)^2}||y - H(\alpha)x||^2_F.
    \end{equation}
    Finally, the gradient $x\mapsto \nabla_x f^y_{\alpha,\sigma^2}$ is defined as follows
    \begin{equation}
        \nabla_{x}f^y_{\alpha,\sigma^2}(x) = H^\top(\alpha)\left(H(\alpha)x - y\right) /\sigma^2 ,
    \end{equation}
    where $H^\top(\alpha)$ denotes the transpose of $H(\alpha)$. 
    
\section{Numerical experiments}\label{results}
In this section, we demonstrate the proposed methodology with a range of numerical experiments and comparisons with alternative approaches from the state of the art. We conduct experiments with $8$ test images of size $d = 512\times 512$ pixels, which were selected to provide a variety of configurations in terms of composition, structures, texture and fine detail. These images are depicted in \Cref{figure: test_images} below. For the experiments, we artificially degraded the images by applying a blur kernel and by corrupting the blurred images with additive zero-mean Gaussian noise. We considered blur kernels from the Gaussian, Laplace and Moffat parametric families because these kernels exhibit a variety of spectral profiles and rates of decay. With regard to the noise variance, we considered two levels related to BSNR values of 30 dB and 20 dB, in order to represent moderate and high noise regimes. With regard to the prior, without loss of generality, we conducted all our experiments by using the TV prior $p(x|\theta) = \exp(-\theta ||x||_{TV}) / Z(\theta)$, where $||\cdot||_{TV}$ denotes the total variation norm defined by $||x||_{TV} = \sum_{i,j}\sqrt{|x_{i+1,j} - x_{i,j}|^2 + |x_{i,j+1} - x_{i,j}|^2}$, and $Z(\theta)$ is given by \eqref{eq:normalisation}. We have chosen to use a TV prior because it provides an indication of the performance of the method in an application-agnostic setting, and because the existing methods that are available for comparison also consider TV or similar priors. Of course, tailoring the prior for the application and class of images considered will increase the accuracy of the estimates.

\begin{figure}[H]
\centering
\begin{tabular}
{c@{\hspace{.01cm}}c@{\hspace{.01cm}}c@{\hspace{.01cm}}c}
\centering 
\begin{tikzpicture}[spy using outlines={rectangle, red,magnification=4, connect spies}]
\node {\pgfimage[interpolate=true,width=.23\linewidth]{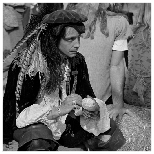}};
\end{tikzpicture}
&
\begin{tikzpicture}[spy using outlines={rectangle, red,magnification=4, connect spies}]
\node {\pgfimage[interpolate=true,width=.23\linewidth]{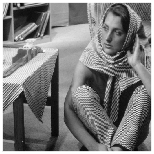}};
\end{tikzpicture} 
&
\begin{tikzpicture}[spy using outlines={rectangle, red,magnification=4, connect spies}]
\node {\pgfimage[interpolate=true,width=.23\linewidth]{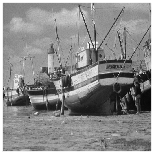}};
\end{tikzpicture} 
&
\begin{tikzpicture}[spy using outlines={rectangle, red,magnification=4, connect spies}]
\node {\pgfimage[interpolate=true,width=.23\linewidth]{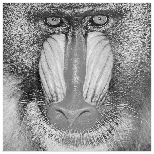}};
\end{tikzpicture} 
\end{tabular}\\
\begin{tabular}
{c@{\hspace{.01cm}}c@{\hspace{.01cm}}c@{\hspace{.01cm}}c}
\centering 
\begin{tikzpicture}[spy using outlines={rectangle, red,magnification=4, connect spies}]
\node {\pgfimage[interpolate=true,width=.23\linewidth]{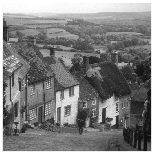}};
\end{tikzpicture} 
&
\begin{tikzpicture}[spy using outlines={rectangle, red,magnification=4, connect spies}]
\node {\pgfimage[interpolate=true,width=.23\linewidth]{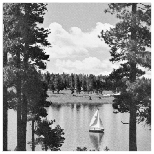}};
\end{tikzpicture} 
& 
\begin{tikzpicture}[spy using outlines={rectangle, red,magnification=4, connect spies}]
\node {\pgfimage[interpolate=true,width=.23\linewidth]{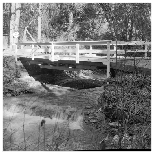}};
\end{tikzpicture}
&
\begin{tikzpicture}[spy using outlines={rectangle, red,magnification=4, connect spies}]
\node {\pgfimage[interpolate=true,width=.23 \linewidth]{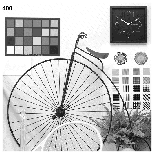}};
\end{tikzpicture} 
\end{tabular}
\caption{Test images. \textit{man, barbara, boat, goldhill, lake, bridge, mandrill} and \textit{wheel}.}\label{figure: test_images}
\end{figure}

The experiments are organised as follows. In \Cref{subsection:our_method} we present a series of experiments that demonstrate the capacity of \Cref{algorithm: SAPG} to accurately estimate the unknown blur kernel parameter $\alpha$ as well as the model parameters $\theta$ and $\sigma^2$ by MMLE w.r.t. $y$. The experiments in \Cref{subsection:our_method} are grouped by the type of kernel used (e.g., Gaussian, Laplace, Moffat). For completeness, in addition to the model parameter estimates, we also report the empirical Bayesian MAP estimate for $x$, which we compute by using the convex optimisation algorithm SALSA \cite{afonso2010fast} and the estimated values $\bar{\theta}$, $\bar{\alpha}$ and $\bar{\sigma}^2$. 

Following on from this, \Cref{subsection: comparison_STOA} compares the proposed methodology with some alternative semi-blind deconvolution strategies from the state-of-the-art. In particular, we have chosen to report comparisons with  \cite{orieux2010bayesian,almeida2009blind,almeida2013parameter,levin2011efficient,abdulaziz2021blind} because they are representative of other main ways of approaching the semi-blind problem. More precisely, the methods \cite{levin2011efficient,abdulaziz2021blind} are closely related to our method, as they both implement an empirical Bayesian formulation by MMLE that is equivalent to ours, except that instead of using the Langevin sampling step, \cite{levin2011efficient} relies on a variational Bayes approximation of the posterior distribution to compute the MMLE, and \cite{abdulaziz2021blind} uses an approximation based on expectation propagation. Moreover, the methods \cite{orieux2010bayesian,almeida2009blind,almeida2013parameter} implement hierarchical (not empirical) Bayesian strategies; the method \cite{orieux2010bayesian} is based on a Gibbs sampling scheme and reports an MMSE solution, whereas \cite{almeida2009blind,almeida2013parameter} are based on joint MAP estimation by alternating optimisation. The above-mentioned methods use the same TV prior considered here, or similar priors that are also designed to regularise pixel gradients. Note that the methods \cite{levin2011efficient,abdulaziz2021blind,almeida2013parameter} were originally proposed for blind deconvolution problems, so for our comparisons, we made straightforward modifications to adapt them to a semi-blind setting.

The section concludes with \Cref{section: selection}, where we present an exploratory experiment related to Bayesian model selection, and where we assume that there are several possible parametric families that one could consider for $H(\alpha)$. The aim is then to choose a suitable parametric family from the list of models available and estimate any unknown parameters.

	\subsection{An experimental analysis of the accuracy of the MMLE solution $\bar{\alpha}$, $\bar{\sigma}^2$ and $\bar{\theta}$}\label{subsection:our_method} 

	\subsubsection{Case 1: Gaussian blur operators }\label{subsection: Gaussian}
We first consider the case where $H(\alpha)$ belongs to a parametric class of circulant Gaussian blur operators with kernel given by
	\begin{equation} \label{eq: kernel function gaussian}
		h(v,t;\alpha_h,\alpha_v) = \frac{\alpha_h\alpha_v}{2\pi}\exp\left(-\frac{1}{2}\left(\alpha_h^2(v-m_v)^2 +\alpha_v^2(t-m_t)^2\right) \right) , \,\, \forall v,t\in\R,
	\end{equation}
	where $\alpha_h,\alpha_v\in\R^+$ are the horizontal and vertical inverse bandwidth parameters\footnote{Operating with the gradient of the inverse bandwidth leads to a numerical scheme with better stability properties than working directly with the bandwidth}. Without the loss of generality, we henceforth set $m_v=0$ and $m_t=0$. For the experiments reported below, we use the true values $\alpha^\star_h = 0.4$ and $\alpha^\star_v = 0.3$, a support for $h$ of $7 \times 7$ pixels, and set ${\sigma^\star}^2$ to achieve a BSNR value of $30$ dB. We then use \Cref{algorithm: SAPG} to recover these parameter values from a single observation $y$, as well as to automatically adjust $\theta$ (which does not have a true value, as the test images are not realisations of the prior $p(x|\theta)$).
 
To implement the SAPG scheme, we used the gradients of $h$ w.r.t. $\alpha_h$ and $\alpha_v$ (alternatively, the scheme can also be implemented by automatic differentiation). The gradients are given by:
	\begin{equation}
		\dfrac{dh(v,t;\alpha_h,\alpha_v)}{d\alpha_h} = \alpha_h\left[1 - \alpha_v^2t^2\exp{\left(-\frac{1}{2}\left(\alpha_h^2v^2 +\alpha_v^2t^2\right) \right)}\right] / 2 \pi,\nonumber
	\end{equation} 
	\begin{equation}
		\dfrac{dh(v,t;\alpha_h,\alpha_v)}{d\alpha_v} = \alpha_v\left[1 - \alpha_h^2v^2\exp{\left(-\frac{1}{2}\left(\alpha_h^2v^2 +\alpha_v^2t^2\right) \right)}\right] / 2 \pi.\nonumber
	\end{equation} 
Following \Cref{section:parameters_settings}, we set $\theta_0 = 0.01$, $\alpha_{v,0} = 0.5, \alpha_{h,0} = 0.5$ and $\sigma^2_0 = (\sigma^2_{min} + \sigma^2_{max})/2$. We also set $\delta_{n}^{\theta} = 0.001\times\delta_0$, $\delta_n^{\alpha_h} = \delta_n^{\alpha_v} = 10\times\delta_0$, and $\delta_n^{\sigma^2} = 1000\times\delta_0$ for all $n\in \N$. Regarding the admissible values for $\theta$, $\alpha_h$, $\alpha_v$ and $\sigma^2$, we set $\Theta_{\alpha_h} = \Theta_{\alpha_v} = [0.01,1]$ and  $\Theta_{\sigma^2} = [\sigma^2_{min},\sigma^2_{max}]$ respectively. To set $\sigma^2_{min}$ and $\sigma^2_{min}$, we assume that the true BSNR value falls within the range of $15$ dB to $45$ dB. Subsequently, we adjust the noise variance $\sigma^2_{min}$ and $\sigma^2_{min}$ to achieve BSNR values of $15$ dB and $45$ dB, respectively.  Additional details can be found in \Cref{appendix: parameters_setting}.
 
For illustration, Figure \ref{fig:Gaussian_parameters} shows the evolution of iterates $(\theta_n)_{n\in\N}$, $(\alpha_{h,n})_{n\in\N}$,  $(\alpha_{v,n})_{n\in\N}$ and $(\sigma^2_n)_{n\in\N}$ for the test image \texttt{man} and the $30dB$ BSNR setup. As can be seen from the plots, the iterates exhibit an oscillatory transient phase and subsequently stabilise close to the true parameter values. In the case of $\theta$, instead of the true value (which does not exist), we use as reference value $\theta^\star$ the value that optimises the reconstruction PSNR (see \Cref{fig:estimate_Gaussian_man}(e) for details). Convergence requires in the order of $10^4$ iterations. Note that the iterates $(\alpha_{h,n})_{n\in\N}$ and $(\sigma^2_n)_{n\in\N}$ stabilise very close to the true values, whereas $(\alpha_{v,n})_{n\in\N}$ exhibits a relative estimation error of the order of $10\%$. The higher error in the estimation of $\alpha_{v}$ when compared to $\alpha_{h}$ is related to the fact that $\alpha^\star_{v}$ is smaller than $\alpha_{h}$ and that, for this model, smaller inverse bandwidth values have a larger estimation variance. Furthermore, \Cref{fig:estimate_Gaussian_man} depicts the MAP estimate obtained by using the parameters estimated with the proposed method. For comparison, we also report the MAP estimated obtained by using the true parameters of $\alpha_h$, $\alpha_v$, $\sigma^2$ and the oracle $\theta = \theta^*$. We observe that the semi-blind estimation results are close to the non-blind results in terms of estimation accuracy. 

Furthermore, \Cref{table:Gaussian parameters} summarises the MMLE results obtained with \Cref{algorithm: SAPG} for the $8$ test images of \Cref{figure: test_images} and two noise levels related to BSNR values of 20 dB and 30 dB, and where we also report the $\ell_1$ error between $h(\bar{\alpha_h}, \bar{\alpha_v})$ and the true blur kernel $h(\alpha^\star_h, \alpha^\star_v)$. Because $\|h\|_1 = 1$ by construction, the $\ell_1$ error provides an indication of the accuracy of the estimates independently of the kernel parametrisation used. Note that the estimated parameters are in close agreement with the truth.

\begin{figure}[h]
\centering
\begin{tabular}
{c@{\hspace{.01cm}}c@{\hspace{.01cm}}c}
\centering 
\begin{tikzpicture}[spy using outlines={rectangle, red,magnification=4, connect spies}]
\node {\pgfimage[interpolate=true,width=.32\linewidth,height=.35\linewidth]{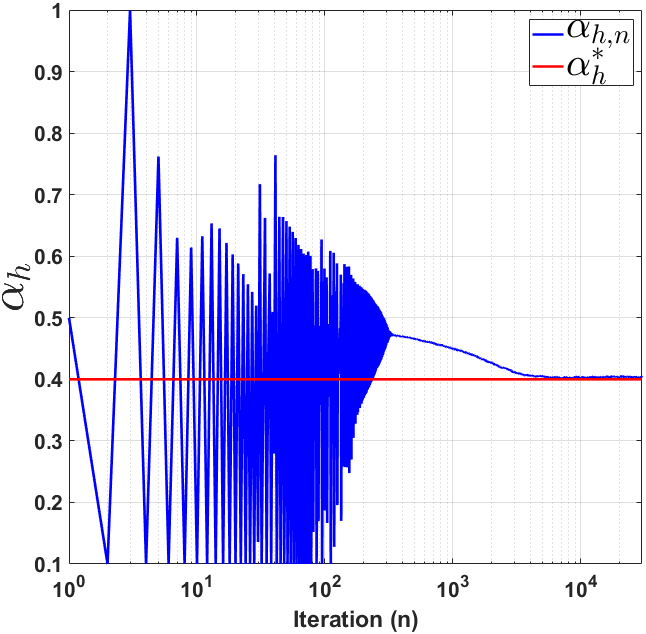}};
\node [below=3cm, align=flush center]{$(a)$  Iterates $\left(\alpha_{h,n}\right)_{n\in\N}$ in log scale};
\end{tikzpicture}
&
\begin{tikzpicture}[spy using outlines={rectangle, red,magnification=4, connect spies}]
\node {\pgfimage[interpolate=true,width=.32\linewidth,height=.35\linewidth]{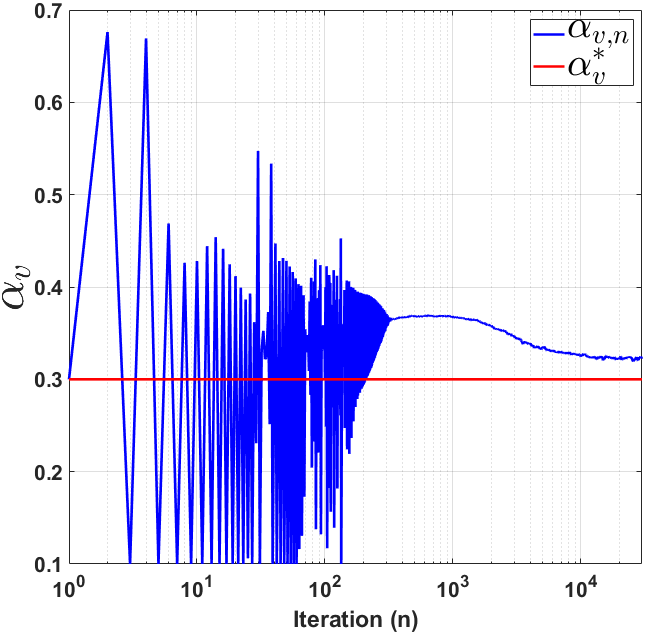}};
\node [below=3cm, align=flush center]{$(b)$  Iterates $\left(\alpha_{v,n}\right)_{n\in\N}$ in log scale};
\end{tikzpicture} 
&
\begin{tikzpicture}[spy using outlines={rectangle, red,magnification=4, connect spies}]
\node {\pgfimage[interpolate=true,width=.32\linewidth,height=.35\linewidth]{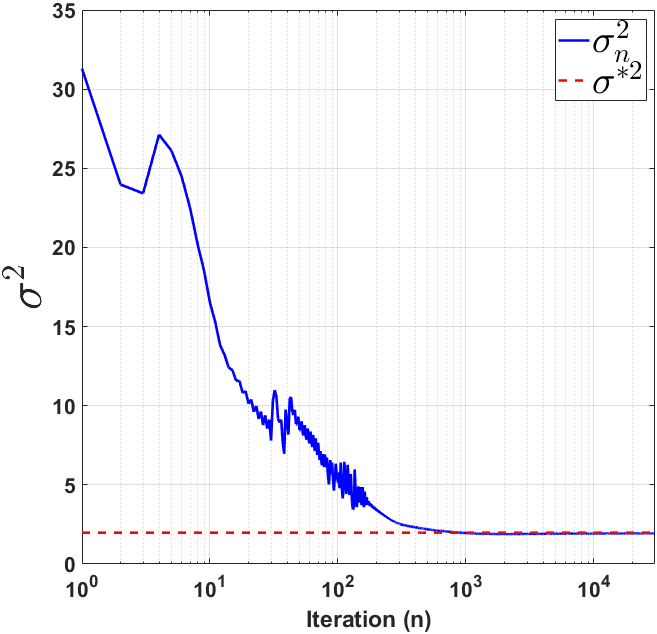}};
\node [below=3cm, align=flush center]{$(c)$  Iterates $\left(\sigma^2_n\right)_{n\in\N}$ in log scale};
\end{tikzpicture} 
\end{tabular}
\begin{tabular}
{c@{\hspace{.01cm}}c}
\centering 
\begin{tikzpicture}[spy using outlines={rectangle, red,magnification=4, connect spies}]
\node {\pgfimage[interpolate=true,width=.35\linewidth,height=.35\linewidth]{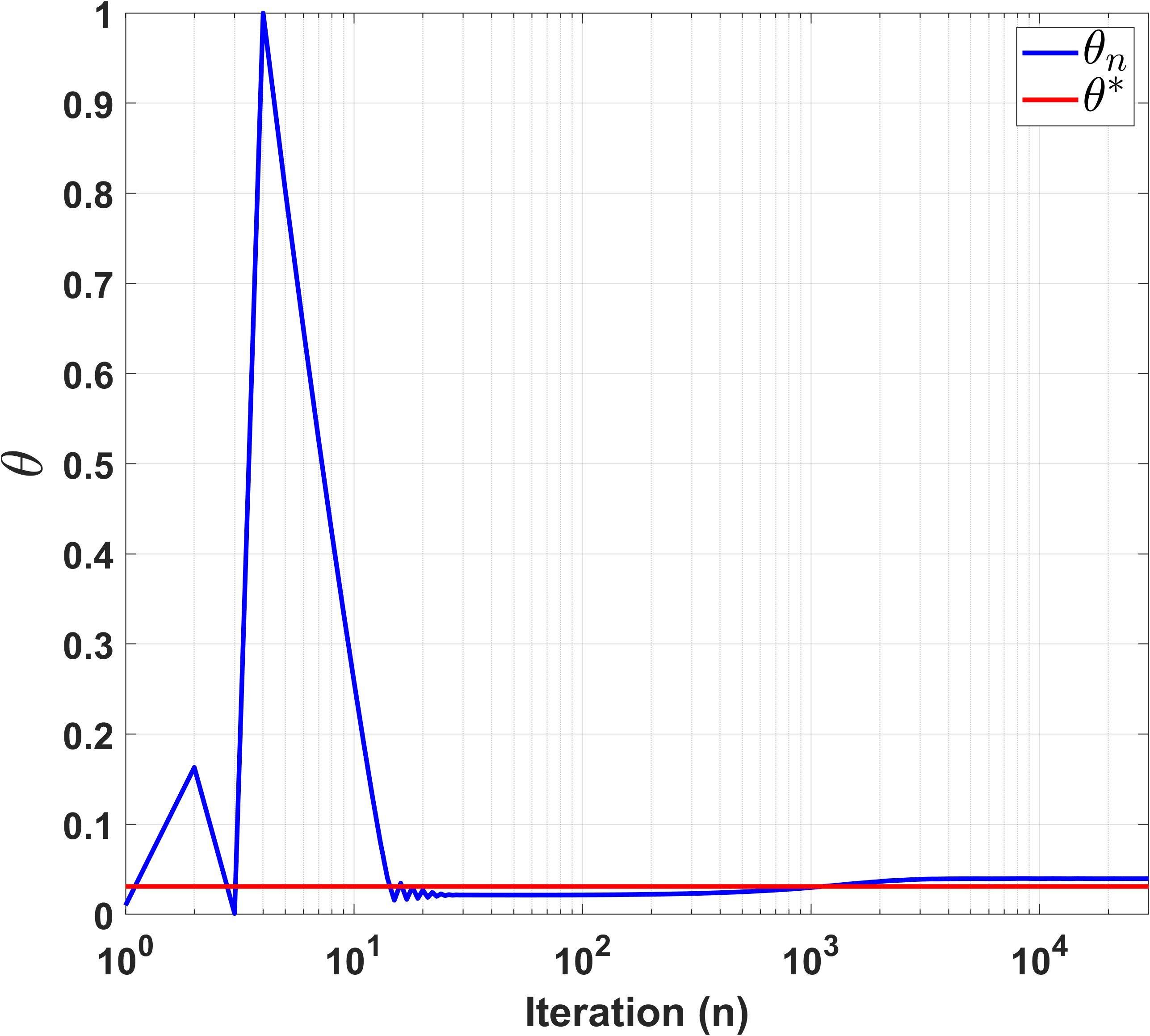}};
\node [below=3cm, align=flush center,text width=6cm]{$(d)$  Iterates $\left(\theta_n\right)_{n\in\N}$ in log scale};
\end{tikzpicture} 
&
\begin{tikzpicture}[spy using outlines={rectangle, red,magnification=4, connect spies}]
\node {\pgfimage[interpolate=true,width=.35\linewidth,height=.35\linewidth]{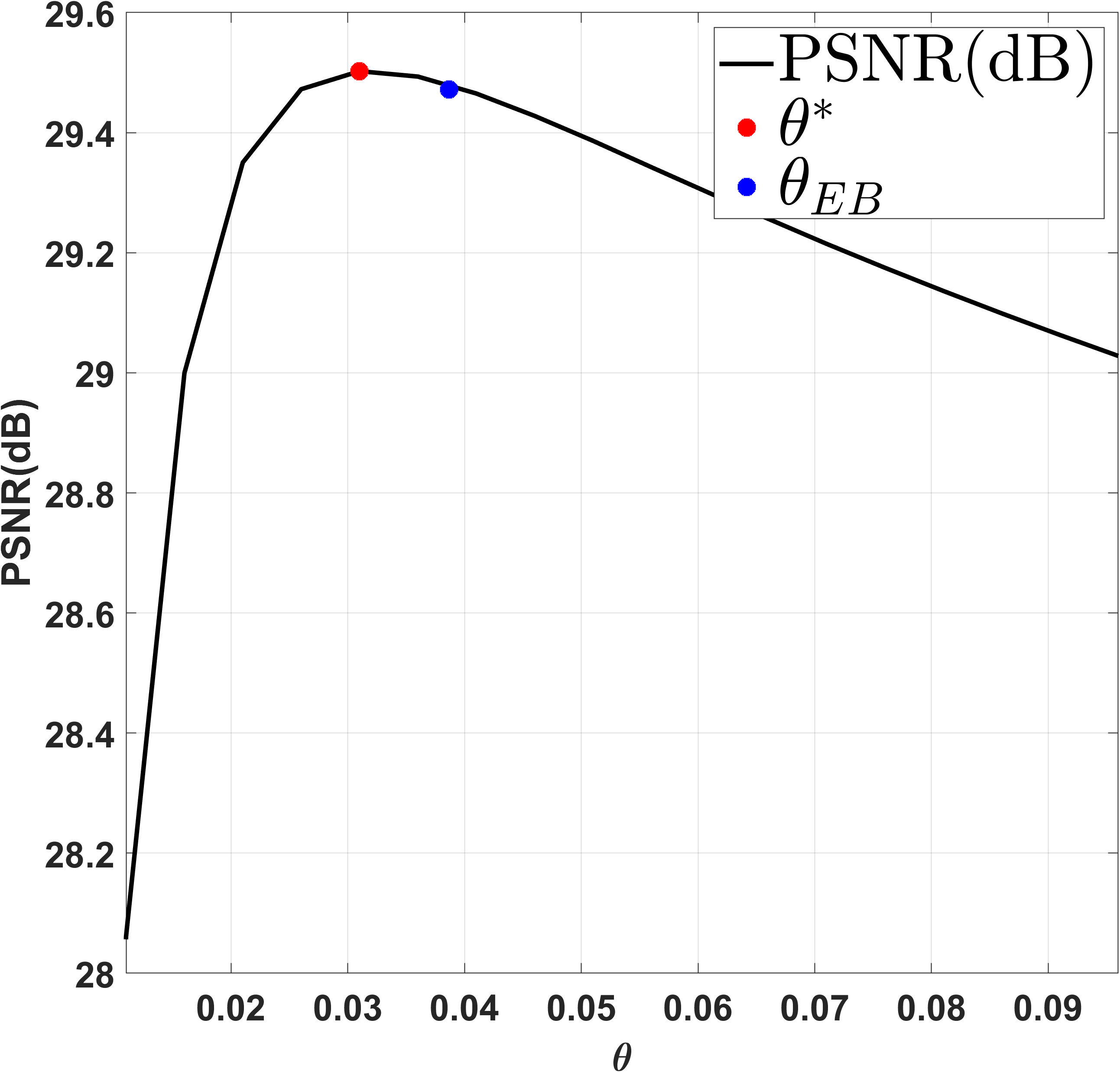}};
\node [below=3cm, align=flush center,text width=6cm]{$(e)$ PSNR otained for different values of $\theta$.};
\end{tikzpicture} 
\end{tabular}
\caption{Empirical estimation of parameters for the Gaussian blur experiment, BSNR $ = 30$ dB.  (a)—(b) Evolution of iterates $(\alpha_{h,n})_{n\in\N}$ and $(\alpha_{v,n})_{n\in\N}$ respectively. (c) Evolution of the iterates $(\sigma^2_n)_{n\in\N}$. (d) Evolution of the iterates $(\theta_n)_{n\in\N}$. $(e)$ The PSNR for the test image \texttt{man} for different values of $\theta$. Note that $\theta^*$ is the reference value which achieves the high PSNR value; $\theta_{EB}$ is the empirical Bayesian estimate of $\theta$ obtained with \Cref{algorithm: SAPG}.}
\label{fig:Gaussian_parameters}
\end{figure}

\begin{table}[H]
\centering

\caption{ Summary of the Gaussian blur experiment. Top row: average and standard deviation of the $\ell_1$ error between the estimated blur kernel $h(\bar{\alpha}_h,\bar{\alpha}_v)$ and the true blur kernel $h({\alpha}_h^*, {\alpha}_v^*)$. Following rows: relative errors and standard deviations for the estimated parameters $\bar{\alpha}_v$, $\bar{\alpha}_h$, $\log(\bar{\theta})$ and $\bar{\sigma}^2$. Results obtained with 8 test images for two noise configurations(BSNR $20$ dB and BSBR $30$ dB).}
    \centering
    \begin{tabular}{l@{\hspace{1cm}}c@{\hspace{1cm}}c@{\hspace{1cm}}c}		
        \hline 
        \textbf{Estimation accuracy} & \textbf{BSNR = 20dB} & \textbf{BSNR = 30dB}
     \\
     & (relative error $\pm$ std) & (relative error $\pm$ std)\\
        \hline\\
        $\|h(\bar{\alpha}_h,\bar{\alpha}_v) - h({\alpha}_h^*, {\alpha}_v^*)\|_1$ &$0.01\pm 0.12 \times 10^{-3} $& $4.5\times 10^{-3}\pm 0.08 \times 10^{-4} $
        \\
        [0.8em]
        $\frac{|\bar{\alpha}_v - \alpha_v^*|}{\alpha_v^*}$&$0.07\pm 0.16 \times10^{-2}$ &$0.06\pm 0.13\times 10^{-2}$ \\
        [0.8em]
        $\frac{|\bar{\alpha}_h - \alpha_h^*|}{\alpha_h^*}$&$0.06 \pm 0.24\times 10^{-2}$&$0.02\pm 0.02\times 10^{-2}$ 
        \\
        [0.8em]        
        $\frac{|\bar{\theta} - \theta^*|}{\theta^*}$&$0.13\pm 0.13 \times10^{-2}$ &$0.10\pm 0.24\times 10^{-2}$
        \\
        [0.8em]
         $\frac{|\bar{\sigma}^2 - \sigma^{*2}|}{\sigma^{*2}}$ &$0.01 \pm 7.4 \times 10^{-5}$ & $0.02\pm 9.4 \times 10^{-5}$\\
         [0.8em]
        \hline   
         \end{tabular}
\label{table:Gaussian parameters}
\end{table}

%

\begin{figure}[h]
\centering
\begin{tabular}
{c@{\hspace{.01cm}}c}
\centering 
\begin{tikzpicture}[spy using outlines={rectangle, red,magnification=4, connect spies}]
\node {\pgfimage[interpolate=true,width=.45\linewidth]{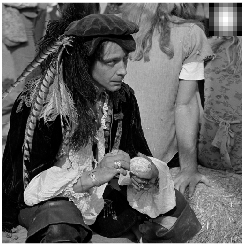}};
\node [below=4cm, align=flush center]{$(a)$ Ground truth};
\coordinate (spypoint) at (-0.6, -0.6);  
\coordinate (spyviewer) at (2.8,-2.8);
\spy[width=2cm,height=2cm] on (spypoint) in node [fill=white] at (spyviewer);
\end{tikzpicture}
&
\begin{tikzpicture}[spy using outlines={rectangle, red,magnification=4, connect spies}]
\node {\pgfimage[interpolate=true,width=.45\linewidth]{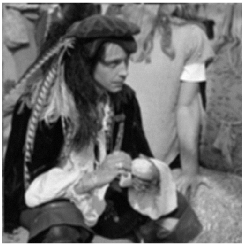}};
\node [below=4cm, align=flush center]{$(b)$ Blurred ($19.8$dB)};
\coordinate (spypoint) at (-0.6, -0.6);  
\coordinate (spyviewer) at (2.8,-2.8);
\spy[width=2cm,height=2cm] on (spypoint) in node [fill=white] at (spyviewer);
\end{tikzpicture} 
\end{tabular}

\begin{tabular}
{c@{\hspace{.01cm}}c}
\centering 
\begin{tikzpicture}[spy using outlines={rectangle, red,magnification=4, connect spies}]
\node {\pgfimage[interpolate=true,width=.45\linewidth]{figures/Gaussian/SNR30/man/man_xMAP_true_kernel}};
\node [below=4cm, align=flush center,text width=6cm]{$(c)$ Non-blind ($29.5$dB)};
\coordinate (spypoint) at (-0.6, -0.6); 
\coordinate (spyviewer) at (2.8,-2.8);
\spy[width=2cm,height=2cm] on (spypoint) in node [fill=white] at (spyviewer);
\end{tikzpicture} 
&
\begin{tikzpicture}[spy using outlines={rectangle, red,magnification=4, connect spies}]
\node {\pgfimage[interpolate=true,width=.45\linewidth]{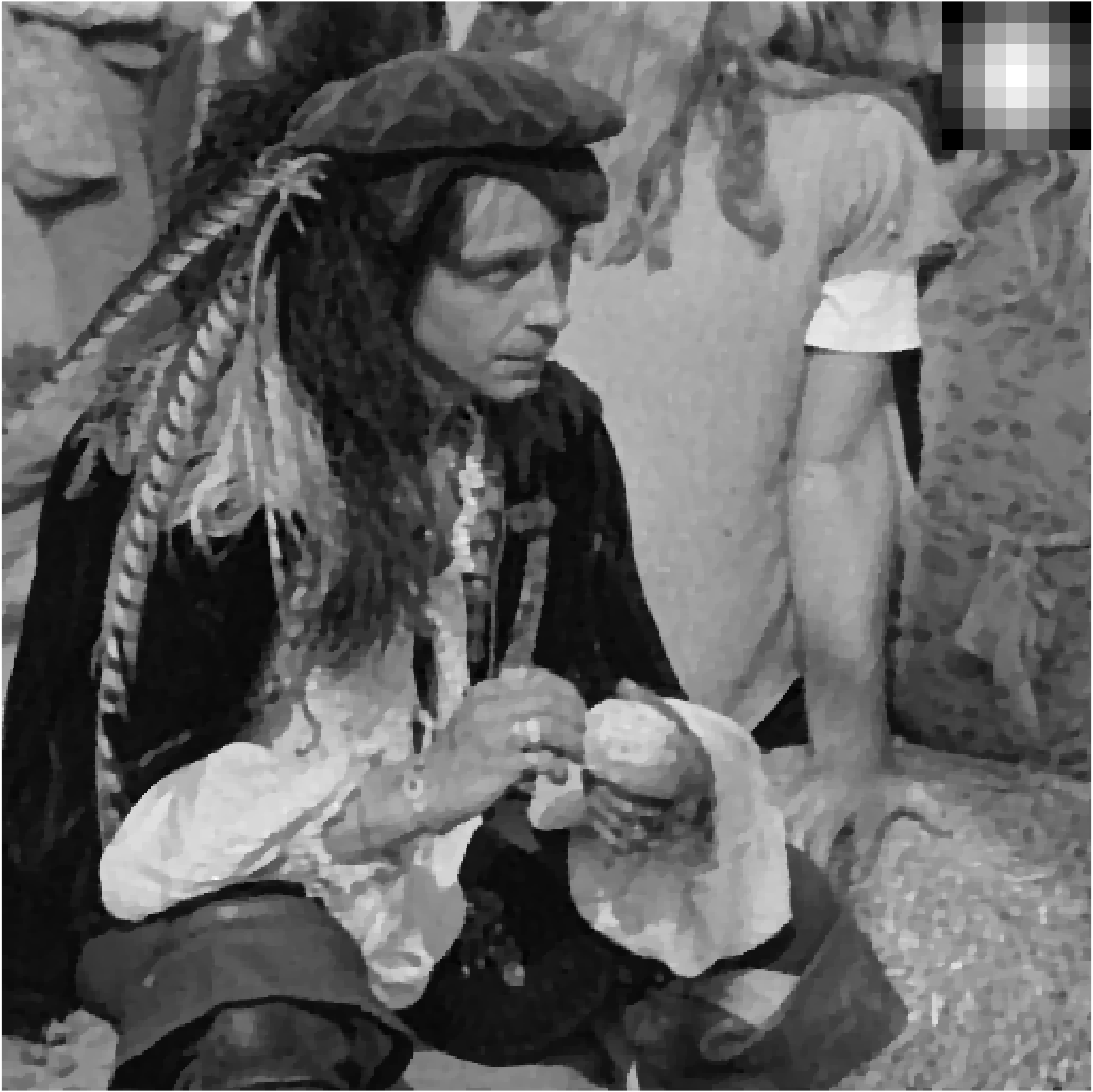}};
\node [below=4cm, align=flush center,text width=6cm]{$(d)$ Semi-blind ($29.4$dB)};
\coordinate (spypoint) at (-0.6, -0.6); 
\coordinate (spyviewer) at (2.8,-2.8);
\spy[width=2cm,height=2cm] on (spypoint) in node [fill=white] at (spyviewer);
\end{tikzpicture} 
\end{tabular}
\caption{Qualitative results for the Gaussian model with a BSNR value of $30$ dB. (a) Ground truth test image, (b) Blurred image (PSNR) with Gaussian operator ($\alpha_h^\star = 0.4$ and $\alpha_v^\star = 0.3$). (c) MAP estimate (PSNR) obtained using SALSA with the true blur kernel $h(\alpha_{h}^\star, \alpha_v^\star)$. (c) MAP estimate (PSNR) obtained using SALSA with the estimated blur kernel $h(\bar{\alpha_{h}}, \bar{\alpha_v})$.}
\label{fig:estimate_Gaussian_man}
\end{figure}
%

%
	\subsubsection{Case 2: The blur operator belongs to the class of Laplace blur operators }\label{subsection: Laplace}
We now consider that $H(\alpha)$ belongs to a parametric class of circulant Laplace blur operators with blur kernel $h$ given by
	\begin{equation} \label{eq: laplace_psf}
		h(v,t;\alpha) = \frac{\alpha^2}{4}\exp{\left(-\alpha(|v| + |t|)\right)}, \quad \forall v,t\in\mathbb{R}.
	\end{equation}
where $\alpha>0$ is again an inverse bandwidth parameter. To implement the SAPG scheme, we used the gradient of $h$ w.r.t. $\alpha$ (alternatively, the scheme can also be implemented by automatic differentiation). The gradient is written as follows
$$\frac{dh(v,t;\alpha)}{d\alpha} = \dfrac{\alpha}{4}\left(2 - \alpha(|v| +|t|)\right)\exp{\left(-\alpha(|v| +|t|)\right)}.$$
Again, we use \Cref{algorithm: SAPG} to compute $\bar{\alpha}$, $\bar{\theta}$ and $\bar{\sigma}^2$. The experiments are designed following the recommendations provided in \Cref{section:parameters_settings}. In each experiment, we use $3\times 10^4$  warm-up iterations and set $\theta_0 = 0.01$, $\alpha_0 = 0.1$ and $\sigma^2_0 = (\sigma^2_{min} + \sigma^2_{max})/2$. For any $n\in \N$, we set the step sizes as follows: $\delta_{n}^{\theta} = 0.001\times\delta_n$, $\delta_n^{\alpha} = 100\times\delta_n$ and $\delta_n^{\sigma^2} = 10000\times\delta_n$ for all $n\in \N$. Regarding the admissible value for $\alpha$, $\theta$ and $\sigma^2$, we set $\Theta_{\alpha}= [0.01,1]$, $\Theta_\theta = [10^{-3},1]$ and $\Theta_{\sigma^2} = [\sigma^2_{min},\sigma^2_{max}]$.  To set $\sigma^2_{min}$ and $\sigma^2_{min}$, we assume that the true BSNR value is in the range $15$ dB and $45$ dB; see \Cref{appendix: parameters_setting} for more details.

\begin{figure}[h]
\centering
\begin{tabular}
{c@{\hspace{.01cm}}c}
\centering 
\begin{tikzpicture}[spy using outlines={rectangle, red,magnification=4, connect spies}]
\node {\pgfimage[interpolate=true,width=.45\linewidth]{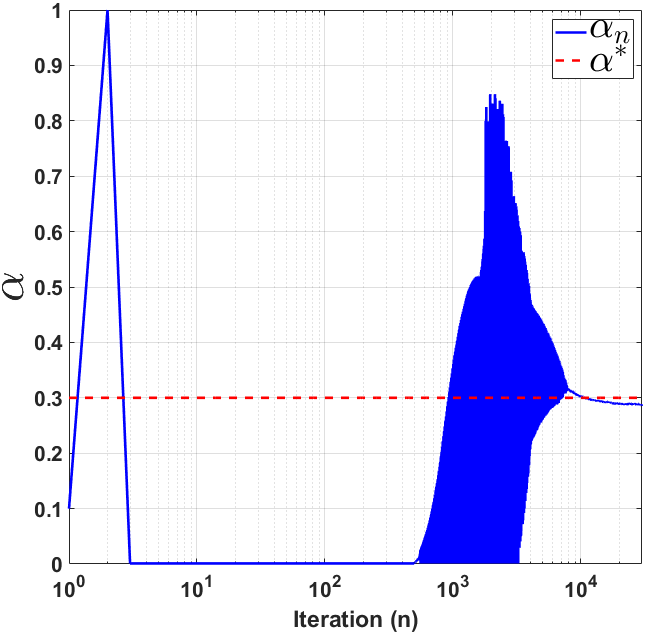}};
\node [below=4cm, align=flush center]{$(a)$ Iterates $\left(b_n\right)_{n\in\N}$ in log scale};
\end{tikzpicture}
&
\begin{tikzpicture}[spy using outlines={rectangle, red,magnification=4, connect spies}]
\node {\pgfimage[interpolate=true,width=.45\linewidth]{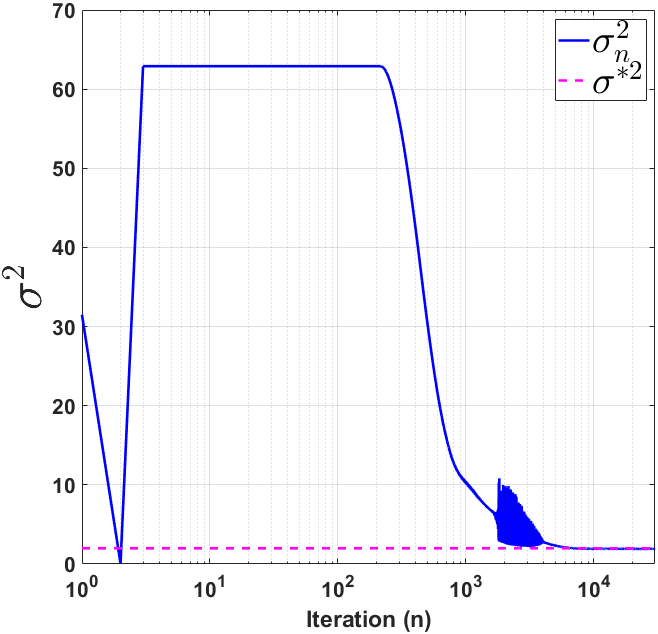}};
\node [below=4cm, align=flush center]{$(b)$ Iterates $\left(\sigma^2_n\right)_{n\in\N}$ in log scale};
\end{tikzpicture} 
\end{tabular}

\begin{tabular}
{c@{\hspace{.01cm}}c}
\centering 
\begin{tikzpicture}[spy using outlines={rectangle, red,magnification=4, connect spies}]
\node {\pgfimage[interpolate=true,width=.45\linewidth]{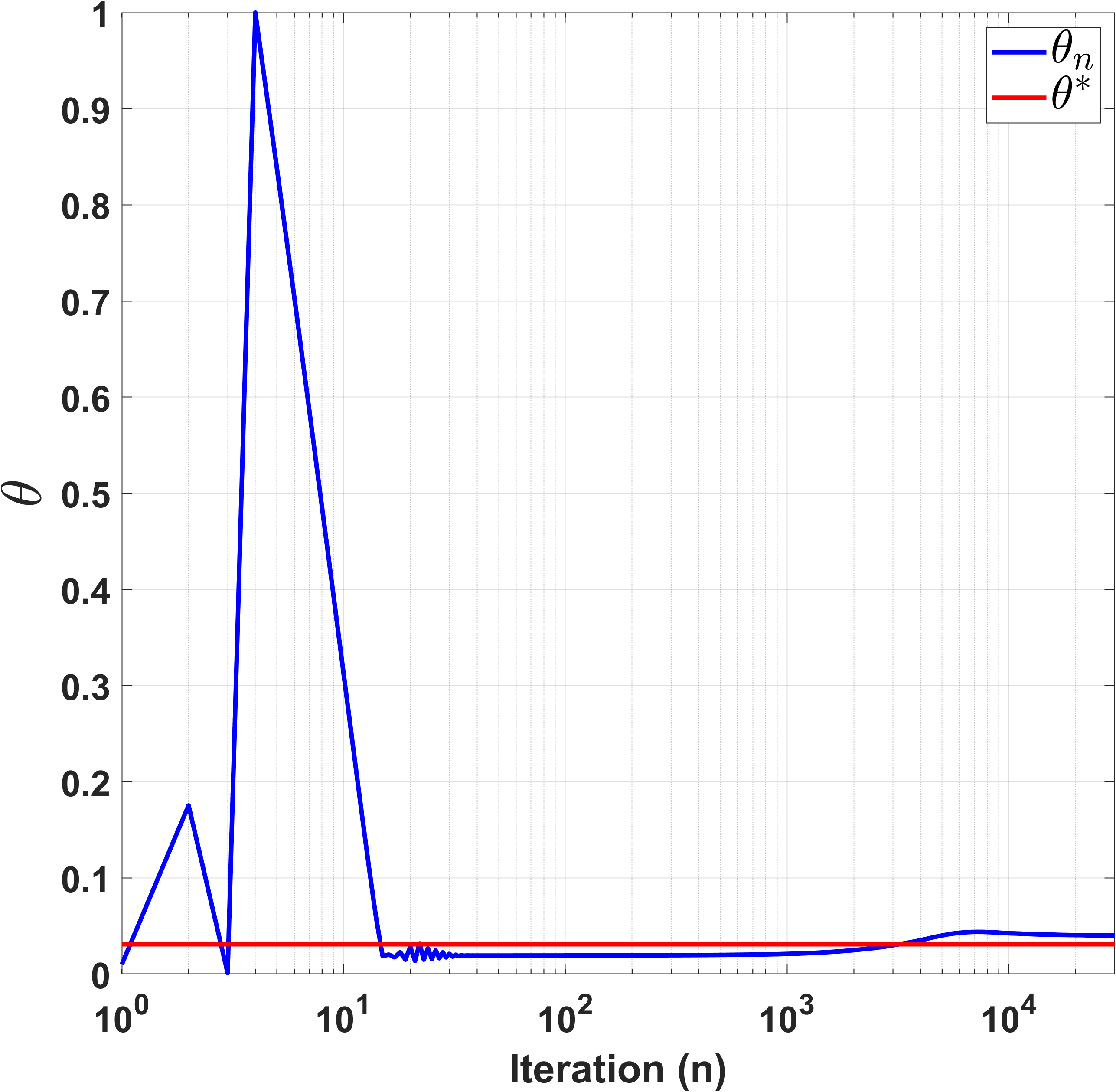}};
\node [below=4cm, align=flush center]{$(c)$ Iterates $\left(\theta_n\right)_{n\in\N}$ in log scale};
\end{tikzpicture} 
&
\begin{tikzpicture}[spy using outlines={rectangle, red,magnification=4, connect spies}]
\node {\pgfimage[interpolate=true,width=.45\linewidth]{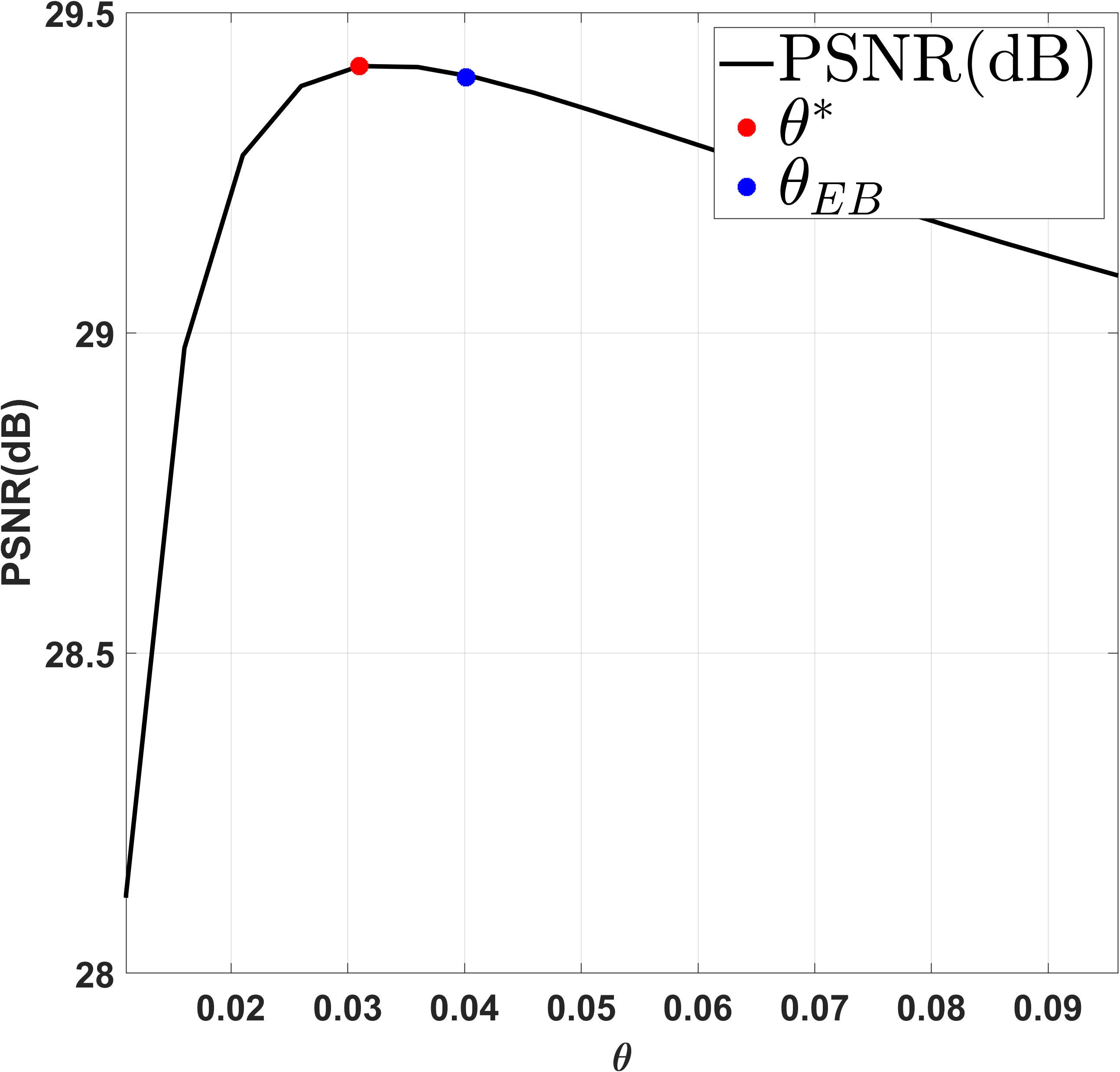}};
\node [below=4cm, align=flush center]{$(d)$ PSNR obtained for different values of $\theta$};
\end{tikzpicture} 
\end{tabular}
\caption{Empirical estimation of parameters for the Laplace blur experiment, BSNR $ = 30$ dB. $(a)-(c)$ Evolution of iterates $(\alpha_{n})_{n\in\N}$, $(\sigma^2_n)_{n\in\N}$ and $(\theta_n)_{n\in\N}$, respectively. $(d)$ The PSNR for the test image \texttt{man} for different values of $\theta$. Note that $\theta^*$ is the reference value for $\theta$ achieving a high PSNR value; $\theta_{EB}$ is the empirical Bayesian estimate of $\theta$ obtained with \Cref{algorithm: SAPG}.} 
\label{fig:Laplace_iterations}
\end{figure}
\Cref{fig:Laplace_iterations} depicts the evolution of iterates $\left(\theta_n\right)_{n\in\N}$,$\left(\alpha_n\right)_{n\in\N}$ and $\left(\sigma^2_n\right)_{n\in\N}$ for the test image \texttt{man} and the $30$dB BSNR setup.  As can be seen from the graphs, the iterates are unstable for approximately the first $1000$ iterations; this is related to the fact that the step-size $\delta_n$ is initially too large. As the iterations progress, $\delta_n$ decreases geometrically and eventually the iterates enter an oscillatory transient phase. Subsequently, they stabilise close to the true value of the parameter around iteration $5000$. It is difficultz`za to determine a suitable step-size a priori, as this depends on the regularity of the unknown marginal likelihood function. Hence, the need to use a decreasing step-size and to contain the iterates within $\Theta$ during the initial transient phase. The development of rules to initialise $\delta_0$ more accurately is an important perspective for future work.  In the case of $\theta$, instead of the true value which does not  exist, we use as reference the value $\theta^*$ that maximises the reconstruction PSNR (see \Cref{fig:Laplace_iterations}(d) for details). Observe that the proposed method requires in the order of $10^4$ iteration to convergence, with an initial transient oscillatory phase related to a period where the step-size is too large, followed by a phase where it converges quickly. 
\Cref{fig:Laplace_MAP_man} displays the MAP estimate obtained by using the parameters estimated with our proposed method. For comparison, we also present the MAP estimate obtained by using the values of the true parameters of $\alpha$, $\sigma^2$ and the reference value $\theta^*$. Note that the semi-blind estimation results are close to the non-blind results in terms of estimation accuracy.

\Cref{table:Laplace parameters} summaries of the MMLE results obtained with \Cref{algorithm: SAPG} for the $8$ test images of \Cref{figure: test_images}, and the two noise levels related to $20$dB and $30$dB, as well as the $\ell_1$ error between $h(\bar{\alpha})$ and true blur kernel $h(\alpha^*)$. Observe that the estimated parameters are in close agreement with the truth.
 \begin{table}[H]
    \centering
    
    \caption{ Summary of the Laplace blur experiment. Top row: average and standard deviation of the $\ell_1$ error between the estimated blur kernel $h(\bar{\alpha}_h,\bar{\alpha}_v)$ and the true blur kernel $h({\alpha}_h^*, {\alpha}_v^*)$. Following rows: relative errors and standard deviations for the estimated parameters $\bar{\alpha}_v$, $\bar{\alpha}_h$, $\log(\bar{\theta})$ and $\bar{\sigma}^2$. Results obtained with 8 test images for two noise configurations(BSNR $20$ dB and BSBR $30$ dB).}
    \centering
    
            \begin{tabular}{l@{\hspace{1cm}}c@{\hspace{1cm}}c@{\hspace{1cm}}c}		
        \hline 
        \textbf{Estimation accuracy} & \textbf{BSNR = 20dB} & \textbf{BSNR = 30dB}
     \\
     & (relative error $\pm$ std) & (relative error $\pm$ std)\\
    \hline\\
        $\|h(\bar{\alpha}_h,\bar{\alpha}_v) - h({\alpha}_h^*, {\alpha}_v^*)\|_1$ &$0.02\pm 0.14\times 10^{-3}$&$0.007 \pm 0.05 \times 10^{-3}$
        \\
        [0.4em]
        $\frac{|\bar{\alpha} - \alpha^\star|}{\alpha^{\star}}$&$0.13\pm 0.12\times 10^{-3}$&$0.04\pm 0.09\times 10^{-2}$\\
        [0.8em]
        $\frac{|\bar{\theta} - \theta^\star|}{\theta^{\star}}$&$0.13\pm 2.9\times10^{-3}$&$0.11\pm 0.16\times10^{-2}$
        \\
        [0.8em]
         $\frac{|\bar{\sigma}^2 - \sigma^{*2}|}{\sigma^{*2}}$ &$0.009 \pm 0.27\times 10^{-4}$&$0.02\pm 0.25\times 10^{-3}$
        \\
        [0.2em]
        \hline
    \end{tabular}
    \label{table:Laplace parameters}
\end{table}
%

\begin{figure}[h]
\centering
\begin{tabular}
{c@{\hspace{.01cm}}c}
\centering 
\begin{tikzpicture}[spy using outlines={rectangle, red,magnification=4, connect spies}]
\node {\pgfimage[interpolate=true,width=.4\linewidth]{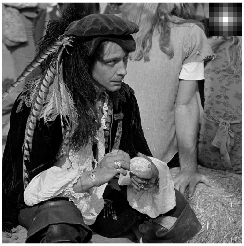}};
\node [below=3.5cm, align=flush center]{$(a)$ Ground truth};
\coordinate (spypoint) at (-0.6, -0.6);  
\coordinate (spyviewer) at (2.4,-2.4);
\spy[width=2cm,height=2cm] on (spypoint) in node [fill=white] at (spyviewer);
\end{tikzpicture}
&
\begin{tikzpicture}[spy using outlines={rectangle, red,magnification=4, connect spies}]
\node {\pgfimage[interpolate=true,width=.4\linewidth]{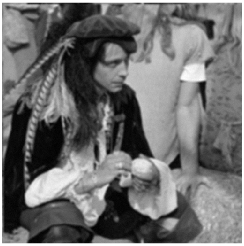}};
\node [below=3.5cm, align=flush center]{$(b)$ Blurred ($19.7$ dB)};
\coordinate (spypoint) at (-0.6, -0.6);  
\coordinate (spyviewer) at (2.4,-2.4);
\spy[width=2cm,height=2cm] on (spypoint) in node [fill=white] at (spyviewer);
\end{tikzpicture} 
\end{tabular}

\begin{tabular}
{c@{\hspace{.01cm}}c}
\centering 
\begin{tikzpicture}[spy using outlines={rectangle, red,magnification=4, connect spies}]
\node {\pgfimage[interpolate=true,width=.4\linewidth]{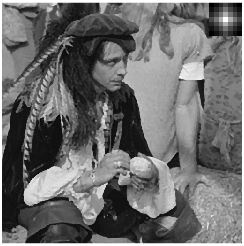}};
\node [below=3.5cm, align=flush center,text width=6cm]{$(c)$ Non-blind  ($29.4$ dB)};
\coordinate (spypoint) at (-0.6, -0.6); 
\coordinate (spyviewer) at (2.4,-2.4);
\spy[width=2cm,height=2cm] on (spypoint) in node [fill=white] at (spyviewer);
\end{tikzpicture} 
&
\begin{tikzpicture}[spy using outlines={rectangle, red,magnification=4, connect spies}]
\node {\pgfimage[interpolate=true,width=.4\linewidth]{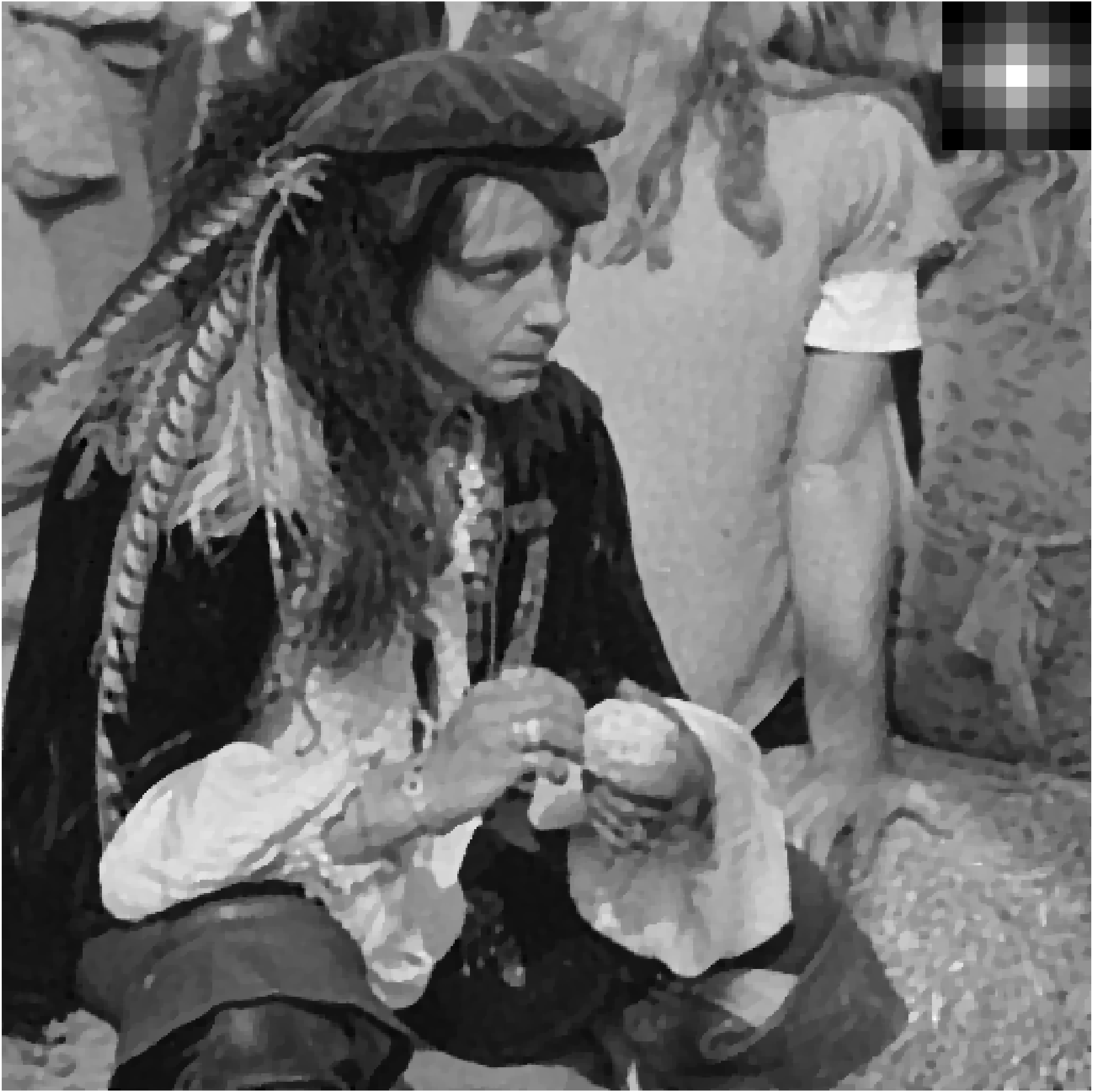}};
\node [below=3.5cm, align=flush center,text width=6cm]{$(d)$ Semi-blind ($29.4$ dB)};
\coordinate (spypoint) at (-0.6, -0.6); 
\coordinate (spyviewer) at (2.4,-2.4);
\spy[width=2cm,height=2cm] on (spypoint) in node [fill=white] at (spyviewer);
\end{tikzpicture} 
\end{tabular}
\caption{Qualitative results for the Laplace model, BSNR $=30$ dB. (a) Ground truth test image, (b) Blurred image (PSNR) with Laplace operator ($\alpha^* = 0.3$). (c) MAP estimate (PSNR) obtained using SALSA with the true blur kernel $h(\alpha^*)$. (c) MAP estimate (PSNR) obtained using SALSA with the estimated blur kernel $h(\bar{\alpha})$.}
\label{fig:Laplace_MAP_man}
\end{figure}

	\subsubsection{Case 3: The blur operator $H$ belongs to the class of Moffat blur operators }\label{subsection: Moffat}
Now we assume that $H(\alpha)$ belongs to a parametric class of Cauchy-type Moffat blur operators. The associated blur kernel $h$ is given by  
	\begin{equation} \label{eq: moffat_psf}
		h(v,t;\alpha_1,\alpha_2) = \frac{\alpha_1^2}{2\pi}\left(1+\alpha_1^2\frac{v^2 +t^2}{\alpha_2}\right)^{-\frac{\alpha_2+2}{2}}, \quad \forall v,t\in\mathbb{R}\, ,
	\end{equation}
	 where $\alpha_1>0$ is an inverse bandwidth parameter and $\alpha_2>0$ is a shape parameter. In order to implement \Cref{algorithm: SAPG}, we use the gradient of the blur kernel $h$  in \eqref{eq: moffat_psf} w.r.t. $\alpha$ given by
	\begin{equation}
		\frac{dh(v,t;\alpha_1,\alpha_2)}{d\alpha_1} = \dfrac{\alpha_1}{\pi}\left[1 - \dfrac{\alpha_1^2(\alpha_2+2)(v^2+t^2)}{2(\alpha_2+(v^2+V^2)\alpha_1^2)}\right]\left(1+\frac{(v^2+t^2)\alpha_1^2}{\alpha_2}\right)^{-\frac{1}{2}(\alpha_2+2)},
	\end{equation}
	\begin{equation}
		\frac{dh(v,t;\alpha_1,\alpha_2)}{d\alpha_2} = \dfrac{\alpha_1^2}{2\pi}\left[-\frac{1}{2}\log{\left(1 + \frac{(v^2+t^2)\alpha_1^2}{\alpha_2}\right)} + \dfrac{(\alpha_2+2)(v^2+t^2)\alpha_1^2}{2\alpha_2(\alpha_2+(v^2+t^2)\alpha_1^2)}\right]\left(1+\frac{(v^2+t^2)\alpha_1)^2}{\alpha_2}\right)^{-\frac{1}{2}(\alpha_2+2)}.
	\end{equation}

Again, we design the experiments following the recommendations provided in \Cref{section:parameters_settings}. We use $3\times 10^4$ warm-up iterations for each experiment and set $\theta_0 = 0.01, \alpha_{1,0} = 0.1, \alpha_{2,0} = 2.5$ and $\sigma^2_0 = (\sigma^2_{min} + \sigma^2_{max}) / 2$. Step sizes are set for any $n\in\N$ as follows: $\delta_{n}^{\theta} = 0.1\times\delta_n$, $\delta_n^{\alpha_1} = \delta_n^{\alpha_2} =100\times\delta_n $ and $\delta_n^{\sigma^2} = 10000\times\delta_n$. Regarding the admissible value for $\theta$, $\alpha_1, \alpha_2$ and $\sigma^2$, we set $\Theta_\theta = [10^{-3},1], \Theta_{\alpha_1} = [0.01,1], \Theta_{\alpha_2} = [1,5]$ and finally $\Theta_{\sigma^2} = [\sigma^2_{min}, \sigma^2_{max}]$.
 To set $\sigma^2_{min}$ and $\sigma^2_{min}$, we assume that the true value of BSNR is in the range $15$ dB and $45$ dB; see \Cref{appendix: parameters_setting} for more details. 
 %
\begin{figure}[h]
\centering
\begin{tabular}
{c@{\hspace{.01cm}}c@{\hspace{.01cm}}c}
\centering 
\begin{tikzpicture}[spy using outlines={rectangle, red,magnification=4, connect spies}]
\node {\pgfimage[interpolate=true,width=.32\linewidth,height=.35\linewidth]{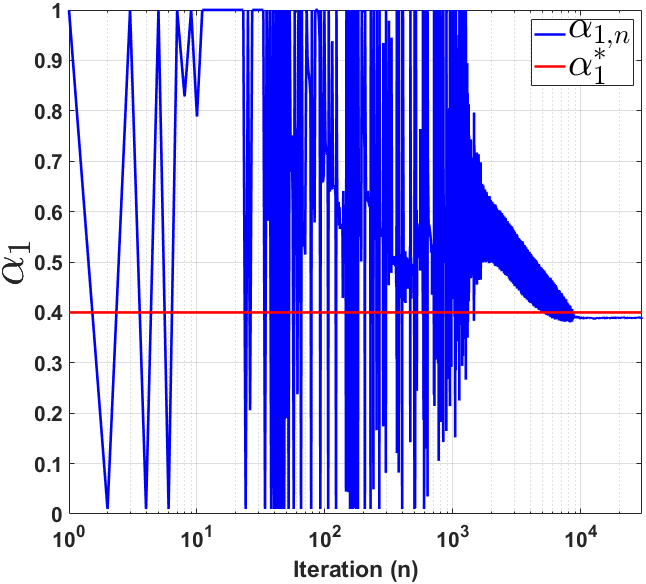}};
\node [below=3cm, align=flush center]{$(a)$  Iterates $\left(\alpha_{1,n}\right)_{n\in\N}$ in log scale};
\end{tikzpicture}
&
\begin{tikzpicture}[spy using outlines={rectangle, red,magnification=4, connect spies}]
\node {\pgfimage[interpolate=true,width=.32\linewidth,height=.35\linewidth]{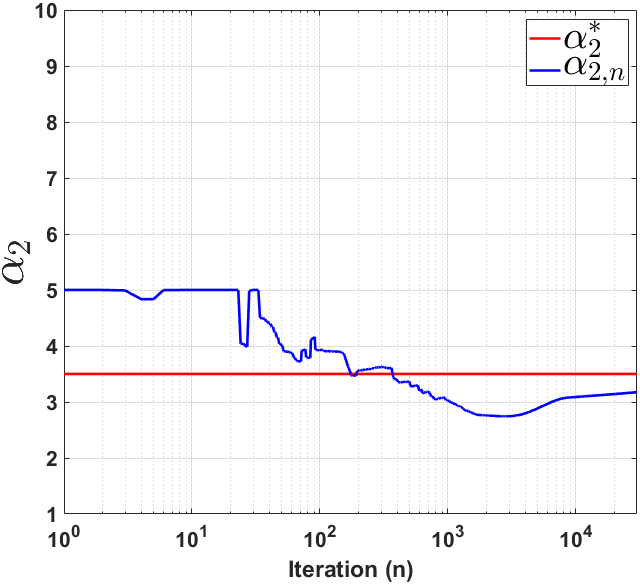}};
\node [below=3cm, align=flush center]{$(b)$  Iterates $\left(\alpha_{2,n}\right)_{n\in\N}$ in log scale};
\end{tikzpicture} 
&
\begin{tikzpicture}[spy using outlines={rectangle, red,magnification=4, connect spies}]
\node {\pgfimage[interpolate=true,width=.32\linewidth,height=.35\linewidth]{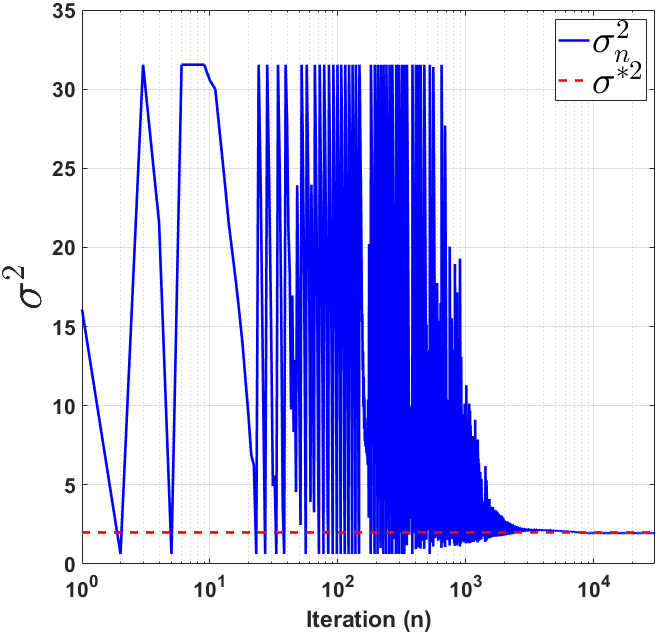}};
\node [below=3cm, align=flush center]{$(c)$  Iterates $\left(\sigma^2_n\right)_{n\in\N}$ in log scale};
\end{tikzpicture} 
\end{tabular}
\begin{tabular}
{c@{\hspace{.01cm}}c}
\centering 
\begin{tikzpicture}[spy using outlines={rectangle, red,magnification=4, connect spies}]
\node {\pgfimage[interpolate=true,width=.45\linewidth,height=.35\linewidth]{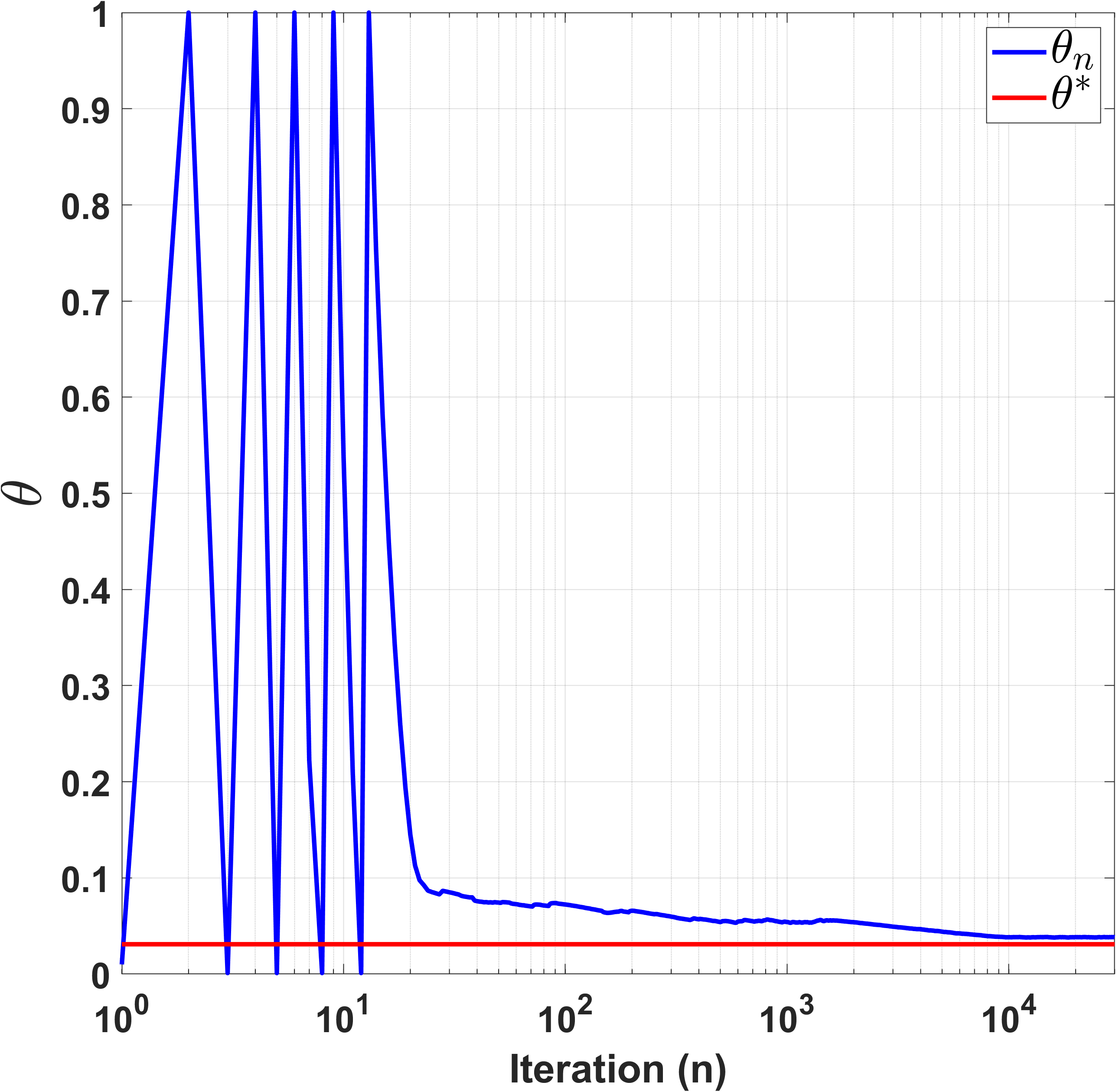}};
\node [below=3cm, align=flush center,text width=6cm]{$(d)$  Iterates $\left(\theta_n\right)_{n\in\N}$ in log scale};
\end{tikzpicture} 
&
\begin{tikzpicture}[spy using outlines={rectangle, red,magnification=4, connect spies}]
\node {\pgfimage[interpolate=true,width=.45\linewidth,height=.35\linewidth]{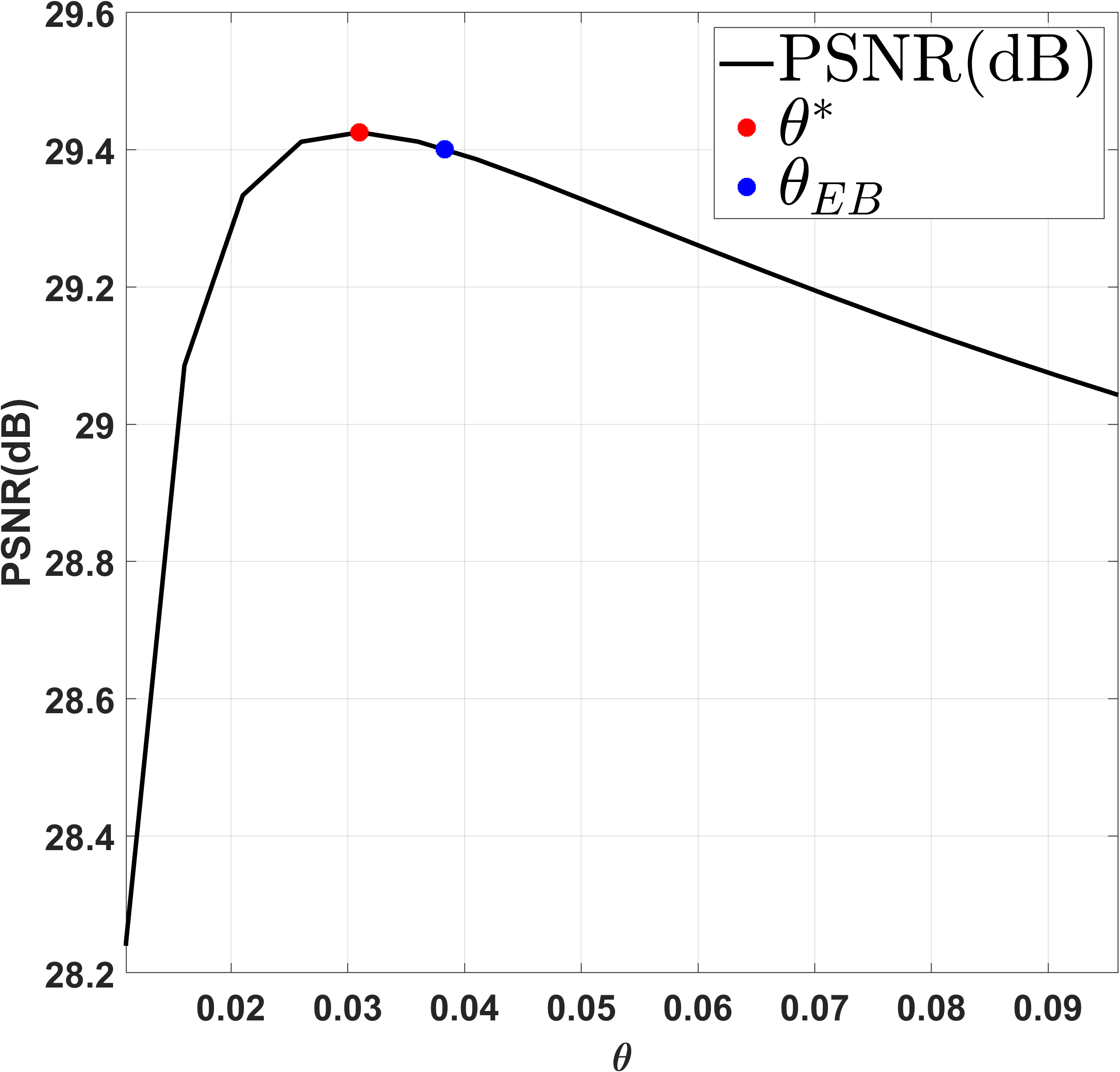}};
\node [below=3cm, align=flush center,text width=6cm]{$(e)$ PSNR obtained with different values of $\theta$};
\end{tikzpicture} 
\end{tabular}
\caption{Empirical estimation of parameters for the Moffat blur experiment, BSNR $ = 30$ dB.  (a)—(d) Evolution of iterates $(\alpha_{1,n})_{n\in\N}$, $(\alpha_{2,n})_{n\in\N}$, $(\sigma^2_n)_{n\in\N}$ and  $(\theta_n)_{n\in\N}$, respectively. $(e)$ The PSNR for the test image \texttt{man} for different values of $\theta$. Note that $\theta^*$ is the reference value for $\theta$ achieving high PSNR value; $\theta_{EB}$ is the empirical Bayesian estimate of $\theta$ obtained with \Cref{algorithm: SAPG}.}
\label{figure:Moffat-iterations}
\end{figure}
%
%
\Cref{figure:Moffat-iterations} shows the evolution of iterates $\left(\theta_n\right)_{n\in\N}$, $\left(\alpha_{1,n}\right)_{n\in\N}$, $\left(\alpha_{2,n}\right)_{n\in\N}$ and $\left(\sigma^2_n\right)_{n\in\N}$ for the test image \texttt{man} and the $30$dB BSNR setup. As in previous experiments, we observe that the iterates exhibit an oscillatory transient phase and subsequently stabilise close to the parameter value, requiring in the order of $10^4$ iterations to converge, except for $\alpha_2$ which exhibits significant estimation bias. The higher estimation error for $\alpha_2$ stems from an identifiability issue with the model (the shape of the Moffat kernel is only mildly sensitive to changes in $\alpha_2$ for $\alpha_2\in [3,4]$, see \Cref{fig:moffat_kernel_coupe}).
Again, because $\theta$ does not have a true value, we use as reference the value $\theta^*$ that maximises the reconstruction PSNR (see \Cref{figure:Moffat-iterations} for details). Furthermore, \Cref{fig:estimate_man_moffat} displays the MAP estimate obtained by using the parameters estimated with our proposed method. For comparison, we also report the MAP estimate obtained by using the true parameters values $\alpha_1^*$, $\alpha_2^*$, $\sigma^{*2}$ and the reference value $\theta^*$. As in previous experiments, we observe that the proposed approach performs strongly and leads to semi-blind results that are in excellent agreement with the non-blind results.

\begin{figure}[h]
\centering
\begin{tabular}
{c}
\begin{tikzpicture}[spy using outlines={rectangle, red,magnification=4, connect spies}]
\node {\pgfimage[interpolate=true,width=.45\linewidth,height=.3\linewidth]{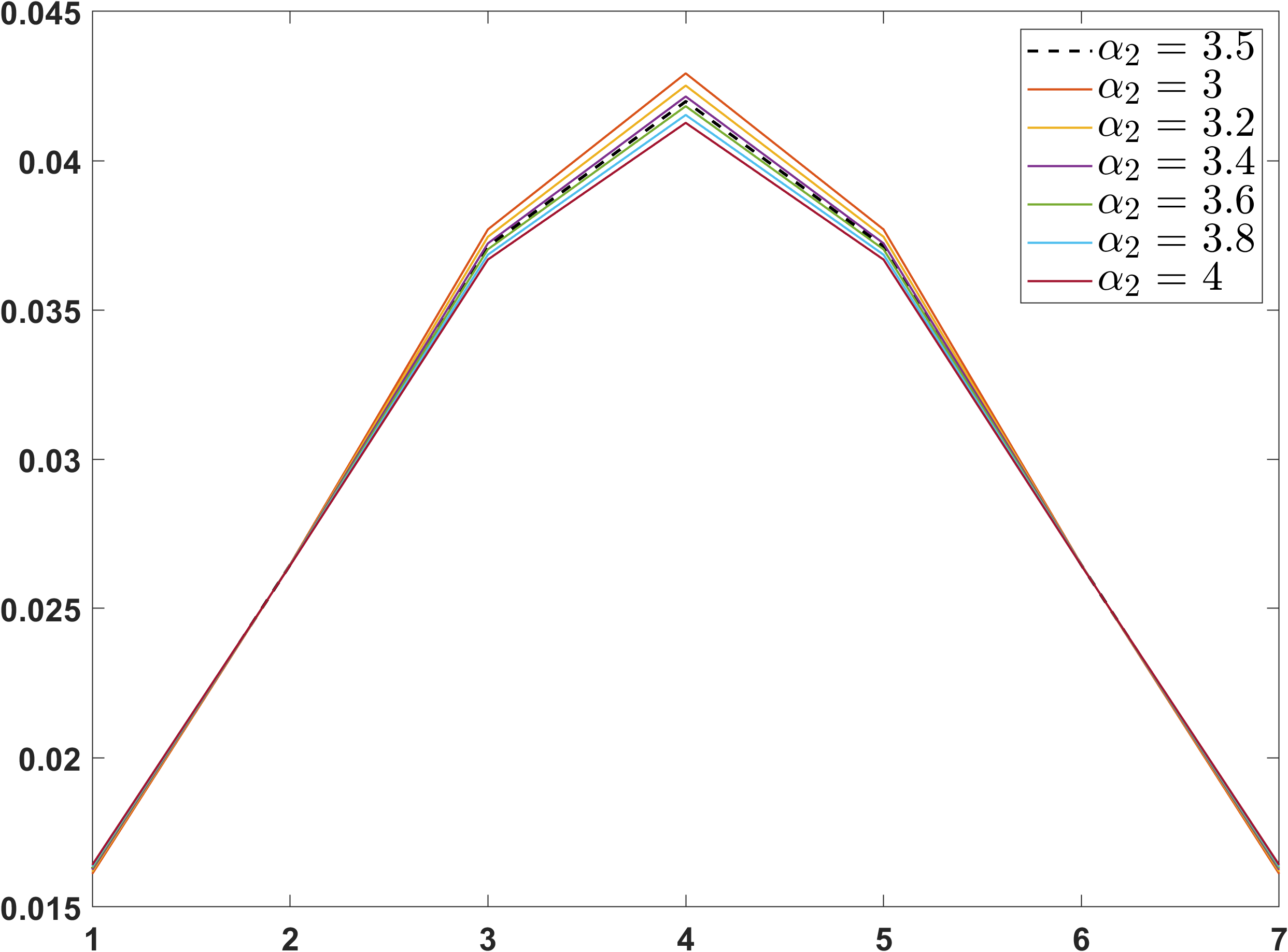}};
\coordinate (spypoint) at (.3, 2.); 
\coordinate (spyviewer) at (0.5,-1.);
\spy[width=2cm,height=2cm] on (spypoint) in node [fill=white] at (spyviewer);
\end{tikzpicture}
\end{tabular}
\caption{Traces of the Moffat blur kernel with  $\alpha_2 = 3,3.2,3.4,3.5,3.6,3.8,4$ and fixed $\alpha_1 = 0.4$.} \label{fig:moffat_kernel_coupe}
\end{figure}
Lastly, \Cref{table:moffat-summary} summarises the MMLE results obtained with \Cref{algorithm: SAPG} for the $8$ test images in \Cref{figure: test_images}, and 2 noise levels related to $20$ dB and $30$ dB BSNR setups. \Cref{table:moffat-summary} also reports the $\ell_1$ error between $h(\bar{\alpha}_1,\bar{\alpha}_2)$ and the true blur kernel $h({\alpha}_1^*, {\alpha}_2^*)$. As in previous experiments, we observe that the estimates are in good agreement with the truth inasmuch they accurately identify the blur kernel, noise variance, and an appropriate amount of regularisation.
\begin{table}[h]
    \centering
    \caption{ Summary of the Moffat blur experiment. Top row: average and standard deviation of the $\ell_1$ error between the estimated blur kernel $h(\bar{\alpha}_h,\bar{\alpha}_v)$ and the true blur kernel $h({\alpha}_h^*, {\alpha}_v^*)$. Following rows: relative errors and standard deviations for the estimated parameters $\bar{\alpha}_v$, $\bar{\alpha}_h$, $\log(\bar{\theta})$ and $\bar{\sigma}^2$. Results obtained with 8 test images for two noise configurations(BSNR $20$ dB and BSBR $30$ dB).}
    \centering
    
            \begin{tabular}{l@{\hspace{1cm}}c@{\hspace{1cm}}c@{\hspace{1cm}}c}		
        \hline 
        \textbf{Estimation accuracy} & \textbf{BSNR = 20dB} & \textbf{BSNR = 30dB}
     \\
     & (relative error $\pm$ std) & (relative error $\pm$ std)\\
    \hline\\
        $\|h(\bar{\alpha}_h,\bar{\alpha}_v) - h({\alpha}_h^*, {\alpha}_v^*)\|_1$ & $0.01 \pm 0.24\times 10^{-3}$ & $0.003 \pm 0.13\times 10^{-4}$ 
            \\
            [0.4em]
            $\frac{|\bar{\alpha}_1 - \alpha_1^*|}{\alpha_1^*}$ & $0.09\pm 0.38\times10^{-2}$ & $0.02\pm 0.03\times 10^{-2}$  \\
            [0.8em]
             $\frac{|\bar{\alpha}_2 - \alpha_2^*|}{\alpha_2^*}$ & $0.16\pm 0.42\times10^{-1}$ & $0.13\pm 0.49\times 10^{-2}$ \\
            [0.8em]
            $\frac{|\bar{\theta} - \theta^*|}{\theta^*}$ & $0.15\pm 0.23\times10^{-2}$ &  $0.12\pm 0.18\times10^{-2}$\\
            [0.8em]
            $\frac{|\bar{\sigma}^2 - \sigma^{*2}|}{\sigma^{*2}}$ & $0.006\pm 0.17\times 10^{-4}$ & $0.02\pm 4.58\times 10^{-4}$ \\
            \hline    
        \end{tabular}
    \label{table:moffat-summary}
\end{table}


%
%
\begin{figure}[h]
\centering
\begin{tabular}
{c@{\hspace{.01cm}}c}
\centering 
\begin{tikzpicture}[spy using outlines={rectangle, red,magnification=4, connect spies}]
\node {\pgfimage[interpolate=true,width=.45\linewidth]{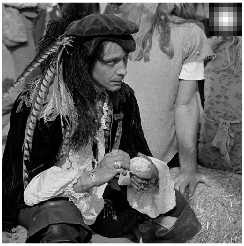}};
\node [below=4cm, align=flush center]{$(a)$ Ground truth};
\coordinate (spypoint) at (-0.6, -0.6);  
\coordinate (spyviewer) at (2.8,-2.8);
\spy[width=2cm,height=2cm] on (spypoint) in node [fill=white] at (spyviewer);
\end{tikzpicture}
&
\begin{tikzpicture}[spy using outlines={rectangle, red,magnification=4, connect spies}]
\node {\pgfimage[interpolate=true,width=.45\linewidth]{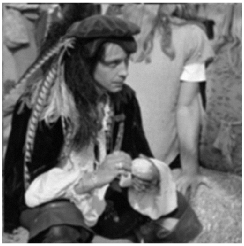}};
\node [below=4cm, align=flush center]{$(b)$ Blurred ($19.7$dB)};
\coordinate (spypoint) at (-0.6, -0.6);  
\coordinate (spyviewer) at (2.8,-2.8);
\spy[width=2cm,height=2cm] on (spypoint) in node [fill=white] at (spyviewer);
\end{tikzpicture} 
\end{tabular}

\begin{tabular}
{c@{\hspace{.01cm}}c}
\centering 
\begin{tikzpicture}[spy using outlines={rectangle, red,magnification=4, connect spies}]
\node {\pgfimage[interpolate=true,width=.45\linewidth]{figures/Moffat/SNR30/man/man_xMAP_true_kernel}};
\node [below=4cm, align=flush center,text width=6cm]{$(c)$ Non-blind ($29.4$ dB)};
\coordinate (spypoint) at (-0.6, -0.6); 
\coordinate (spyviewer) at (2.8,-2.8);
\spy[width=2cm,height=2cm] on (spypoint) in node [fill=white] at (spyviewer);
\end{tikzpicture} 
&
\begin{tikzpicture}[spy using outlines={rectangle, red,magnification=4, connect spies}]
\node {\pgfimage[interpolate=true,width=.45\linewidth]{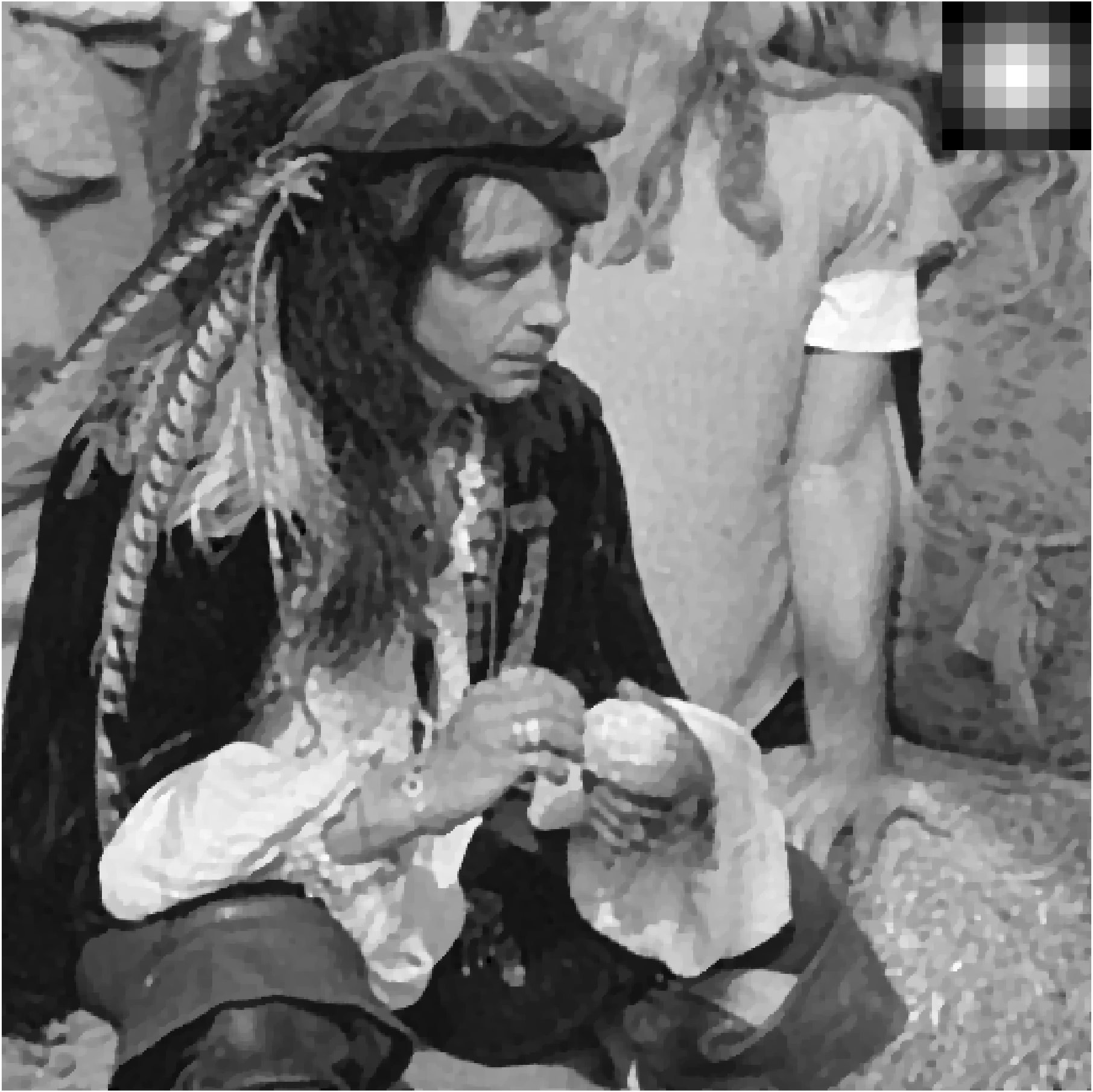}};
\node [below=4cm, align=flush center,text width=6cm]{$(d)$ Semi-blind ($29.4$dB)};
\coordinate (spypoint) at (-0.6, -0.6); 
\coordinate (spyviewer) at (2.8,-2.8);
\spy[width=2cm,height=2cm] on (spypoint) in node [fill=white] at (spyviewer);
\end{tikzpicture} 
\end{tabular}
\caption{Qualitative results for the Moffat model, BSNR $=30$ dB. (a) Ground truth test image, (b) Blurred image (PSNR) with Moffat operator ($\alpha_1^* = 0.3, \alpha_2^* = 3.5$). (c) MAP estimate (PSNR) obtained using SALSA with the true blur kernel $h(\alpha_1^*, \alpha_2^*)$. (c) MAP estimate (PSNR) obtained using SALSA with the estimated blur kernel $h(\bar{\alpha}_1,\bar{\alpha}_2)$.}
\label{fig:estimate_man_moffat}
\end{figure}


\subsection{Comparison with state-of-the-art methods}\label{subsection: comparison_STOA}
We further demonstrate the effectiveness of the proposed approach by reporting comparisons with alternative strategies from the state of the art. As mentioned previously, we have chosen to report comparisons with  \cite{orieux2010bayesian,almeida2009blind,almeida2013parameter,levin2011efficient,abdulaziz2021blind} because they are representative of other ways of approaching the semi-blind image deconvolution problem in situations where there is no suitable training data available to apply a machine learning approach. The methods \cite{levin2011efficient,abdulaziz2021blind} also implement an empirical Bayesian formulation by MMLE that is equivalent to ours, but instead of using the Langevin sampling step, \cite{levin2011efficient} relies on a variational Bayes approximation of the posterior distribution to compute the MMLE, and \cite{abdulaziz2021blind} uses an approximation based on expectation propagation. The methods \cite{orieux2010bayesian,almeida2009blind,almeida2013parameter} implement hierarchical Bayesian strategies. The method \cite{orieux2010bayesian} is based on a carefully designed conditionally Gaussian model and the associated Gibbs sampling scheme that computes an MMSE solution. The methods \cite{almeida2009blind,almeida2013parameter} are based on joint MAP estimation by alternating optimisation. The method \cite{almeida2009blind} assumes that the regularisation parameter $\theta$ is set a-priori (we set it optimally for each image by using the ground truth), whereas \cite{almeida2013parameter} incorporates the estimation of $\theta$ by using a criterion that promotes solutions with a spectrally white residual that is in agreement with underlying the additive white noise assumption. The above-mentioned methods use the same TV prior considered here, or close related priors that are also designed to regularise pixel gradients. The methods \cite{levin2011efficient,abdulaziz2021blind,almeida2013parameter} were originally proposed for blind deconvolution problems, so for our comparisons, we made straightforward modifications to adapt them to a semi-blind setting. For the comparisons, we have used the code provided by the authors on their respective websites\footnote{We have used the following MATLAB codes: \href{http://www.lx.it.pt/~mscla/BID_NI_UBC.htm}{Blind and Semi-Blind Deblurring of Natural Images.}\cite{almeida2009blind, almeida2013parameter}; \href{https://uk.mathworks.com/matlabcentral/fileexchange/30880-unsupervised-wiener-hunt-deconvolution}{Bayesian estimation of regularization and point spread function parameters for Wiener–Hunt deconvolution.}\cite{orieux2010bayesian};\href{https://webee.technion.ac.il/people/anat.levin/papers/LevinEtalCVPR2011Code.zip}{Efficient Marginal Likelihood Optimization in Blind Deconvolution}\cite{levin2011efficient}; The MATLAB code for \cite{abdulaziz2021blind} is available from the authors by request.}. As in the previous experiments, we conducted the comparisons by using the $8$ test images displayed in \Cref{figure: test_images}, blur operators from the Gaussian, Laplace and Moffat parametric families, and two signal-to-noise ratio $20$ dB and $30$ dB. These results are summarised below in \Cref{table: l1_gaussian}, \Cref{table: l1_laplace} and \Cref{table: l1_moffat} for the Gaussian, Laplace and Moffat experiments respectively. These tables compare the methods based on their capacity to accurately recover the true blur operator $h(\alpha)$, as measured by the $\ell_1$ estimation error. As mentioned previously, we chose the $\ell_1$ error $\|h(\bar{\alpha}) - h({\alpha}^*)\|_1$ as the basis for our comparisons because it is invariant to the choice of parameterisation, and because the $\ell_1$ error has a natural scale given that $\|h({\alpha})\|_1=1$ for all $\alpha \in \Theta_\alpha$. Moreover, \Cref{table: para_gaussian}, \Cref{table: para_laplace} and \Cref{table: para_moffat} provide additional information regarding the accuracy of the estimations of $\alpha^*$.

From \Cref{table: l1_gaussian}, \Cref{table: l1_laplace} and \Cref{table: l1_moffat}, we observe that the proposed method outperforms the alternative strategies from the state of the art, for the three blur models (Gaussian, Laplace, and Moffat) and for both noise levels ($20$ dB and $30$ dB). Conversely, the method \cite{almeida2013parameter}, which combines joint MAP estimation of $\alpha$ and $x$ and a residual criterion for the estimation of $\theta$ is generally the least accurate. The methods \cite{abdulaziz2021blind,levin2011efficient,orieux2010bayesian} achieve a comparable level of accuracy, with the hierarchical Bayesian method of \cite{orieux2010bayesian} generally performing marginally better in situations of higher noise variance, and the methods \cite{abdulaziz2021blind,levin2011efficient} marginally  outperforming \cite{orieux2010bayesian} otherwise. \Cref{table: para_gaussian}, \Cref{table: para_laplace} and \Cref{table: para_moffat} show that the accuracy of the estimations of $\alpha^*$ obtained with the different strategies is in agreement with the conclusions derived from \Cref{table: l1_gaussian}, \Cref{table: l1_laplace} and \Cref{table: l1_moffat}. 
%
\begin{table}[h]
\centering
\caption{Summary results of the Gaussian blur model for different methods. Average and standard deviation (std) of the $\ell_1$ error for 8 test images with different algorithms. The noise regime was set to achieve a BSNR value of $20$ dB and $30$ dB.}
\begin{adjustbox}{width=0.8\textwidth,center=\textwidth}
        \begin{tabular}{l@{\hspace{1cm}}c@{\hspace{1cm}}@{\hspace{1cm}}c}
            \hline
            \textbf{Methods}& \multicolumn{1}{c@{\hspace{1cm}}@{\hspace{1cm}}}{\textbf{BSNR} $= 20$dB} & \multicolumn{1}{c}{\textbf{BSNR} $= 30$dB}\\
            [0.8em]
            \hline\\
            \textbf{Almeida and Figueiredo} \cite{almeida2013parameter} & 
            $1.50\times 10^{-1}\pm0.21\times 10^{-2}$ & 
            $1.40\times 10^{-1}\pm 0.52\times 10^{-2}$ \\
                [0.8em]
             \textbf{Almeida and Almeida}\cite{almeida2009blind} & 
             $0.28\times 10^{-1}\pm0.21\times 10^{-2}$ & 
             $0.12\times 10^{-1}\pm0.02\times 10^{-3}$ \\
                [0.8em]
            \textbf{Levin et al.} \cite{levin2011efficient} & $0.33\times 10^{-1}\pm0.05\times 10^{-2}$ & 
            $0.50\times 10^{-1}\pm 0.32\times 10^{-2}$  \\
                [0.8em]
           \textbf{Orieux et al.} \cite{orieux2010bayesian} & $0.88\times 10^{-1}\pm0.13\times 10^{-2}$ &
           $0.05\times 10^{-1} \pm 0.04\times 10^{-4}$ \\
                [0.8em]
           \textbf{Abdulaziz et al.} \cite{abdulaziz2021blind} & $0.15\times 10^{-1}\pm0.05\times 10^{-3}$ &
           $0.22\times 10^{-1}\pm0.29\times 10^{-3}$  \\
                [0.8em]
           \textbf{SAPG (Ours)} & 
           \boldmath$0.10\times 10^{-1}\pm0.12\times 10^{-3}$\unboldmath & 
           \boldmath$0.04\times 10^{-1}\pm0.09\times 10^{-4}$\unboldmath \\
            [0.8em]
            \hline
        \end{tabular}
        \end{adjustbox}
\label{table: l1_gaussian}
\end{table}

%
 \begin{table}[h]
    \centering
    \caption{Summary results of the Laplace blur model for different methods. Average and standard deviation (std) of the $\ell_1$ error for 8 test images with different algorithms. Two noise regimes set to achieve the BSNR value of $20$ dB and $30$ dB, respectively.}
    \begin{adjustbox}{width=0.8\textwidth,center=\textwidth}
            \begin{tabular}{l@{\hspace{1cm}}c@{\hspace{1cm}}@{\hspace{1cm}}c}
                 \hline
                \textbf{Methods}& \multicolumn{1}{c@{\hspace{1cm}}@{\hspace{1cm}}}{\textbf{BSNR} $= 20$dB} & \multicolumn{1}{c}{\textbf{BSNR} $= 30$dB}\\
                [0.8em]
                \hline\\
                \textbf{Almeida and Figueiredo} \cite{almeida2013parameter}& $1.70\times 10^{-1} \pm 0.57\times 10^{-2}$ & $1.20\times 10^{-1} \pm 6.30\times 10^{-3}$ \\
                [0.8em]
                \textbf{Almeida and Almeida}\cite{almeida2009blind} & $1.90\times 10^{-1} \pm 0.68\times 10^{-2}$ & $1.30\times 10^{-1}\pm 6.30\times 10^{-3}$ \\
                [0.8em]
                 \textbf{Levin et al.} \cite{levin2011efficient} & $0.41\times 10^{-1}\pm 0.08\times 10^{-2}$ & $0.34\times 10^{-1}\pm0.78\times 10^{-3}$  \\
                 [0.8em]
                \textbf{Orieux et al.} \cite{orieux2010bayesian} & $1.00\times 10^{-1} \pm 0.31\times 10^{-2}$ & $0.23\times 10^{-1} \pm0.13\times 10^{-3}$  \\
                [0.8em]
                \textbf{Abdulaziz et al.} \cite{abdulaziz2021blind} & $0.25\times 10^{-1} \pm0.05\times 10^{-2}$ & $0.44\times 10^{-1} \pm0.08\times 10^{-3}$  \\
                [0.8em]
               \textbf{SAPG (Ours)} & \boldmath$0.18\times 10^{-1} \pm 0.02\times 10^{-2}$\unboldmath & \boldmath$0.07\times 10^{-1}\pm 0.05\times 10^{-3}$\unboldmath \\
            [0.8em]
            \hline
            \end{tabular}
            \end{adjustbox}
    \label{table: l1_laplace}
\end{table}
%
\begin{table}[h]
    \centering
    \caption{Summary results of the Moffat blur model for different methods. Average and standard deviation (std) of the $\ell_1$ error for 8 test images with different algorithms. Two noise regimes set to achieve the BSNR value of $20$ dB and $30$ dB, respectively.}
    \begin{adjustbox}{width=0.8\textwidth,center=\textwidth}
        \begin{tabular}{l@{\hspace{1cm}}c@{\hspace{1cm}}@{\hspace{1cm}}c}
            \hline
            \textbf{Methods}& \multicolumn{1}{c@{\hspace{1cm}}@{\hspace{1cm}}}{\textbf{BSNR} $= 20$ dB} & \multicolumn{1}{c}{\textbf{BSNR} $= 30$ dB}\\
            [0.8em]
            \hline\\
            \textbf{Almeida and Figueiredo} \cite{almeida2013parameter} & 
            $0.13 \pm 0.26\times 10^{-2}$ & 
            $0.95\times 10^{-1} \pm0.89\times 10^{-3}$ \\
            [0.8em]
            \textbf{Almeida and Almeida}\cite{almeida2009blind} & $0.12\pm0.29\times 10^{-2}$ & 
            $0.91\pm 0.96\times 10^{-3}$ \\
            [0.8em]
           \textbf{Levin et al.} \cite{levin2011efficient} & $0.10\pm0.04\times 10^{-2}$ & 
           $0.88\pm 0.15\times 10^{-3}$  \\
            [0.8em]
            \textbf{Orieux et al.} \cite{orieux2010bayesian} & $0.83\times 10^{-2} \pm0.31\times 10^{-3}$ & 
            $0.12\times 10^{-1}\pm 0.11\times 10^{-3}$  \\
            [0.8em]
           \textbf{Abdulaziz et al.} \cite{abdulaziz2021blind} & $0.66\times 10^{-1}\pm 0.0$ & 
           $0.47\times 10^{-1} \pm0.93\times 10^{-3}$  \\
            [0.8em]
           \textbf{SAPG (Ours)} & 
           \boldmath$0.4\times 10^{-5}\pm 0.39\times 10^{-4}$\unboldmath & 
           \boldmath$0.03\times 10^{-4} \pm 0.02\times 10^{-3}$\unboldmath \\
            [0.8em]
            \hline
        \end{tabular}
        \end{adjustbox}
    \label{table: l1_moffat}
\end{table}
For illustration, \cref{figure: comparison_gaussian20}, \cref{figure: comparison_laplace20} and \cref{figure: comparison_moffat20} provide visual results and PSNR of the reconstructed image with different methods and the noise variance set to achieve a BSNR value of $20$ dB. Similarly, \cref{figure: comparison_gaussian30}, \cref{figure: comparison_laplace30} and \cref{figure: comparison_moffat30} show visual results and PSNR values when the noise variance is set to achieve a BSNR value of $30$dB. The hierarchical Bayesian method \cite{orieux2010bayesian} delivers results that are comparable or marginally inferior in terms of PSNR, but with noticeably less fine detail visually. 

%
\begin{table}[h]
    \centering
    \caption{Average and standard deviation of the estimated Gaussian blur kernel parameters $\bar{\alpha}_h$ and $\bar{\alpha}_v$ of different methods for BSNR values $20$ and $30$ dB.}
    \begin{adjustbox}{width=1.\textwidth,center=\textwidth}
            \begin{tabular}{l@{\hspace{1cm}}c@{\hspace{1cm}}c@{\hspace{1cm}}@{\hspace{1cm}}c@{\hspace{1cm}}c}
            \hline
                \textbf{Methods}& \multicolumn{2}{c@{\hspace{1cm}}@{\hspace{1cm}}}{\textbf{SNR} $= 20$ dB} & \multicolumn{2}{c}{\textbf{SNR} $= 30$ dB}\\
                [0.8em]
                & $\bar{\alpha}_h\pm \text{std}~(\alpha_h^* = 0.4)$ & $\bar{\alpha}_v\pm \text{std}~(\alpha_v^* = 0.3)$ & $\bar{\alpha}_h\pm \text{std}~(\alpha_h^* = 0.4)$&$\bar{\alpha}_v\pm \text{std}~(\alpha_v^* = 0.3)$\\
                [0.8em]
                \hline\\
                \textbf{Almeida and Figueiredo} \cite{almeida2013parameter} & 
                $0.72\pm2.40\times 10^{-2}$ & 
                $0.67\pm 1.4\times 10^{-2}$ & 
                $0.71\pm4.30\times 10^{-2}$ & 
                $0.61\pm3.10\times  10^{-2}$ \\
                [0.8em]
                \textbf{Almeida and Almeida}\cite{almeida2009blind} & $0.40\pm0.02\times 10^{-2}$ & 
                $0.32\pm1.6\times 10^{-2}$ & 
                $0.35\pm 0.05\times 10^{-2}$ & 
                $0.27\pm0.06\times 10^{-2}$ \\
                [0.8em]
                \textbf{Levin et al.} \cite{levin2011efficient} & $0.36\pm2.00\times 10^{-2}$ & 
                $0.25\pm 3.0\times 10^{-2}$ & 
                $0.46\pm0.30\times 10^{-2}$ & 
                $0.48\pm 5.00\times 10^{-2}$ \\
                [0.8em]
                \textbf{Orieux et al.} \cite{orieux2010bayesian} & $0.60\pm2.00\times 10^{-2}$ & 
                $0.53\pm0.7\times 10^{-2}$ & 
                $0.3.9\pm0.01\times 10^{-2}$ & 
                $0.27\pm0.02\times 10^{-2}$ \\
                [0.8em]
                \textbf{Abdulaziz et al.} \cite{abdulaziz2021blind} & 
                $0.41\pm0.30\times 10^{-2}$ & 
                $0.26\pm0.4\times 10^{-2}$ &  
                $0.29\pm2.00\times 10^{-2}$ & 
                $0.28\pm2.00\times 10^{-2}$ \\
                [0.8em]
                \textbf{SAPG (Ours)} & \boldmath$0.39\pm0.30\times 10^{-2}$\unboldmath & \boldmath$0.29\pm0.2\times 10^{-2}$\unboldmath &  \boldmath$0.40\pm0.03\times 10^{-2}$\unboldmath & \boldmath$0.31\pm0.03\times 10^{-2}$ \unboldmath \\
                [0.8em]
                \hline
            \end{tabular}
            \end{adjustbox}
    \label{table: para_gaussian}
\end{table}

Finally, the average computing times for the five methods under consideration are presented in Table \ref{table: comparison method time}. It is observed that our method, utilizing a state-of-the-art Markov kernel, achieves superior computing times compared to the alternative empirical Bayesian methods that rely on deterministic approximations. However, our method is still significantly slower than the hierarchical Bayesian method proposed by \cite{orieux2010bayesian}, which benefits from a carefully designed conditionally Gaussian model and the associated Gibbs sampler. It would be interesting to investigate strategies that employ the method of \cite{orieux2010bayesian} to initialize our method and reduce its computing time. It is worth noting that our method is two orders of magnitude slower than the methods proposed by \cite{almeida2009blind,almeida2013parameter}, which rely on an ADMM convex optimization scheme. Nevertheless, it is important to consider that those methods are considerably less accurate, particularly when accurate estimation of the parameter $\theta$ is also required.

%
\begin{table}[h]
    \centering
    \caption{Average and standard deviation of the estimated Laplace blur kernel parameters $\bar{\alpha}$ of different methods for BSNR values $20$ and $30$ dB.}
    
    \begin{adjustbox}{width=0.7\textwidth,center=\textwidth}
            \begin{tabular}{l@{\hspace{1cm}}c@{\hspace{1cm}}@{\hspace{1cm}}c}
                \hline
                \textbf{Methods}& \multicolumn{1}{c@{\hspace{1cm}}@{\hspace{1cm}}}{\textbf{BSNR} $= 20$dB} & \multicolumn{1}{c}{\textbf{BSNR} $= 30$dB}\\
                [0.8em]
                & $\bar{\alpha}\pm \text{std}~(\alpha^* = 0.3)$ & $\bar{\alpha}\pm \text{std}~(\alpha^* = 0.3)$ \\
            [0.8em]
            \hline\\
                \textbf{Almeida and Figueiredo} \cite{almeida2013parameter} & $0.73\pm3.0\times 10^{-2}$ & $0.61\pm 3.00\times 10^{-2}$ \\
                [0.8em]
                 \textbf{Almeida and Almeida}\cite{almeida2009blind}& $0.77\pm4.0\times 10^{-2}$ & $0.63\pm3.00\times 10^{-2}$ \\
                 [0.8em]
                \textbf{Levin et al.} \cite{levin2011efficient} & $0.27\pm4.0\times 10^{-2}$ & $0.39\pm 0.60\times 10^{-2}$  \\
                [0.8em]
                \textbf{Orieux et al.} \cite{orieux2010bayesian} & $0.56\pm2.0\times 10^{-2}$ & $0.36\pm0.10\times 10^{-2}$ \\
                [0.8em]
               \textbf{Abdulaziz et al.} \cite{abdulaziz2021blind} & $0.33\pm0.8\times 10^{-2}$ & $0.15\pm0.10\times 10^{-2}$  \\
               [0.8em]
                \textbf{SAPG (Ours)} & \boldmath$0.27\pm0.3\times 10^{-2}$\unboldmath & \boldmath$0.30\pm0.08\times 10^{-2}$\unboldmath \\
            [0.8em]
            \hline
            \end{tabular}
            \end{adjustbox}
    \label{table: para_laplace}
\end{table}
%

 %

%
 \begin{table}[H]
    \centering  
    \caption{Average and standard deviation of the estimated Moffat blur kernel parameters $\bar{\alpha}_1$ and $\bar{\alpha}_2$ of different methods for BSNR values $20$ and $30$ dB.}
    \begin{adjustbox}{width=1.\textwidth,center=\textwidth}
            \begin{tabular}{l@{\hspace{1cm}}c@{\hspace{1cm}}c@{\hspace{1cm}}@{\hspace{1cm}}c@{\hspace{1cm}}c}
            \hline
                \textbf{Methods}& \multicolumn{2}{c@{\hspace{1cm}}@{\hspace{1cm}}}{\textbf{BSNR} $= 20$ dB} & \multicolumn{2}{c}{\textbf{BSNR} $= 30$ dB}\\
                [0.8em]
                & $\bar{\alpha}_1\pm \text{std}~(\alpha_1^* = 0.4)$ & $\bar{\alpha}_2\pm \text{std}~(\alpha_2^* = 3.5)$ & $\bar{\alpha}_1\pm \text{std}~(\alpha_1^* = 0.4)$&$\bar{\alpha}_2\pm \text{std}~(\alpha_2^* = 3.5)$
                \\
                [0.8em]
                \hline\\
               \textbf{Almeida and Figueiredo} \cite{almeida2013parameter} & 
               $0.73\pm0.12\times 10^{-1}$ & 
               $5.00\pm 0.0$ & 
               $0.65\pm0.05\times 10^{-1}$ & 
               $4.8\pm1.80\times  10^{-1}$ \\
                [0.8em]
               \textbf{Almeida and Almeida}\cite{almeida2009blind} & $0.67\pm0.50\times 10^{-1}$ & 
               $4.61\pm1.5$ & 
               $0.60\pm 0.30\times 10^{-1}$ & 
               $4.60\pm1.52$ \\
                [0.8em]
                \textbf{Levin et al.} \cite{levin2011efficient} & $0.38\pm0.04\times 10^{-1}$ & 
                $4.11\pm 1.64$ & 
                $0.44\pm0.20\times 10^{-1}$ & 
                $5.0\pm 0.0$ \\
                [0.8em]
                \textbf{Orieux et al.} \cite{orieux2010bayesian} & $0.62\pm0.2\times 10^{-1}$ & 
                $4.78\pm0.02\times 10^{-1}$ & 
                $0.44\pm0.20\times 10^{-1}$ & 
                $4.56\pm0.01\times 10^{-1}$ \\
                [0.8em]
                \textbf{Abdulaziz et al.} \cite{abdulaziz2021blind} & 
                $0.10\pm0.0$ & 
                $1.10\pm0.0$ &  
                $0.21\pm0.20\times 10^{-1}$ & 
                $4.20\pm0.27\times 10^{-1}$ \\
                [0.8em]
                \textbf{SAPG (Ours)} & \boldmath$0.39\pm0.05\times 10^{-1}$\unboldmath & \boldmath$3.46\pm1.02$\unboldmath &  \boldmath$0.40\pm0.02\times 10^{-2}$\unboldmath & \boldmath$3.14\pm1.30\times 10^{-1}$ \unboldmath \\
                [0.8em]
                \hline\\
            \end{tabular}
            \end{adjustbox}
    \label{table: para_moffat}
\end{table}
%

\begin{table}[H]
    \centering
    \caption{Average runtime (expressed in minutes) for different methods with a BSNR value of $30$ dB.}
    \begin{adjustbox}{width=0.8\textwidth,center=\textwidth}
            \begin{tabular}{l@{\hspace{1cm}}c@{\hspace{1cm}}@{\hspace{1cm}}c@{\hspace{1cm}}c}
            \hline
                \multicolumn{1}{c}{\textbf{Methods}} & \multicolumn{3}{c}{\textbf{Times}}
                \\
                [0.8em]
                 & \textbf{Gaussian}&\textbf{Laplace}&\textbf{Moffat}
                 \\
                [0.2em]
                \hline\\
                \textbf{Almeida and Figueiredo} \cite{almeida2013parameter} &$2.5$&$2.9$&$2.7$ \\
                [0.2em]
                \textbf{Almeida and Almeida}\cite{almeida2009blind} &\boldmath$0.85$\unboldmath&\boldmath$0.6$\unboldmath&\boldmath$1.2$\unboldmath \\
                [0.2em]
                \textbf{Levin et al.} \cite{levin2011efficient}  &$205$&$212$&$216$  \\
                [0.2em]
                \textbf{Orieux et al.} \cite{orieux2010bayesian}  &$26.18$&$23.16$&$25.72$ \\
                [0.2em]
                \textbf{Abdulaziz et al.} \cite{abdulaziz2021blind}  &$318.42$&$50.42$&$22.78$  \\
                [0.2em]
                \textbf{SAPG (Ours)} &$167.8$&$177.3$&$172.6$
                \\
                [0.2em]
                \hline
            \end{tabular}
            \end{adjustbox}
    \label{table: comparison method time}
\end{table}
 %
 %

\input{Gaussian_comparison_figures}

\input{Laplace_comparison_figures}

\input{Moffat_comparison_figures}

\subsection{Bayesian model selection}\label{section: selection}
We conclude this section with an exploratory experiment related to Bayesian model selection in the absence of ground truth \cite[Chapter 7]{robert2007bayesian}. In a manner akin to \cite{vidal2021fast}, rather than using a specialised Bayesian computation algorithm to perform model selection at the expense of a significant computational overhead (see, e.g., \cite{cai2022proximal}), here we adopt the following fast heuristic to compare Bayesian models based on the output of the SAPG algorithm.

We consider a set of $K$ potential Bayesian models $\lbrace\mathcal{M}_k\rbrace_{k =1,\ldots, K}$ to perform semi-blind image deconvolution. Each model is associated with one parametric family of distributions
\begin{equation}
		\mathcal{M}_k = \left\lbrace\theta_k,\alpha_k,\sigma^2_k : p(x|y, \theta_k, \alpha_k, \sigma^2_k) = p_k(y|x,\alpha_k, \sigma^2_k)p_k(x|\theta_k) \right\rbrace,
	\end{equation}
	parametrised by $\theta_k \in \Theta_\theta$, $\alpha_k \in \Theta_\alpha$ and $\sigma^2_k \in \Theta_{\sigma^2}$, which we assume to be unknown. We assume that models have a likelihood function of the form
 \begin{equation}
     p(y|x;\alpha_k, \sigma_k^2) = \exp{\left(-\frac{1}{2\sigma_k^2}||h - H(\alpha_k)x||_2^2\right)}/(2\pi \sigma_k^2)^d, \quad k = 1,2,3
 \end{equation}
 and a prior distribution of the form 
 \begin{equation}
     p(x|\theta_k) = \exp{\left(-\theta_k^Tg(x)\right)}/Z(\theta_k), \quad k = 1,2,3
 \end{equation}
 where $Z(\theta_k)$ is again the normalisation constant.
  
 The heuristic proposed in \cite{vidal2021fast} operates as follows. First, for each model $\mathcal{M}_k$, we use \Cref{algorithm: SAPG} to compute the MMLE values $\bar{\theta}_k$, $\bar{\alpha}_k$ and $\bar{\sigma}^2_k$ from the degraded measurement $y$. We then compute the MAP estimate from the pseudo posterior $p(x|y; \bar{\theta}_k, \bar{\alpha}_k, \bar{\sigma}^2_k)$
 \begin{equation}
     \bar{x}_k = \argmax_{\R^d}p(x|y, \bar{\theta}_k, \bar{\alpha}_k, \bar{\sigma}^2_k).
 \end{equation}
 Lastly, having estimated for each model the parameters $\bar{\theta}_k$, $\bar{\alpha}_k$ and $\bar{\sigma}^2_k$ that maximise the likelihood w.r.t. $y$, and the MAP estimate $\bar{x}_k$, we evaluate the residual 
\begin{equation}
	 r_k  = ||y - H(\bar{\alpha}_k)\bar{x}_k||^2_2
\end{equation}
which provides an indication of the capacity of the models to fit the data (note that $\bar{\theta}_k$, $\bar{\alpha}_k$ and $\bar{\sigma}^2_k$ provide the best fit in the sense of the marginal likelihood, so there is no over-fitting). We then compare the residuals and select the model with the smallest residual. {For illustration, \Cref{fig:model_selection_man_moffat} shows
the $K=3$ MAP estimates obtained from a common noisy and blurred version of the test image \texttt{man}, obtained under the assumption that the blur either belongs to the Gaussian, Laplace, or Moffat parametric family. For this experiment, the observation $y$ is generated with a Moffat blur such that the BSNR reaches a value $30$ dB. In this case, the proposed model selection approach correctly identifies the Moffat model, which produces the smallest residual once the model parameters are adjusted by the SAPG algorithm. This model also achieves the best PSNR on this occasion.} 
\begin{figure}[H]
    \centering
    \begin{subfigure}[b]{0.23\linewidth}
        \tikz\node[draw=white,line width=2]{\includegraphics[width=1\linewidth]{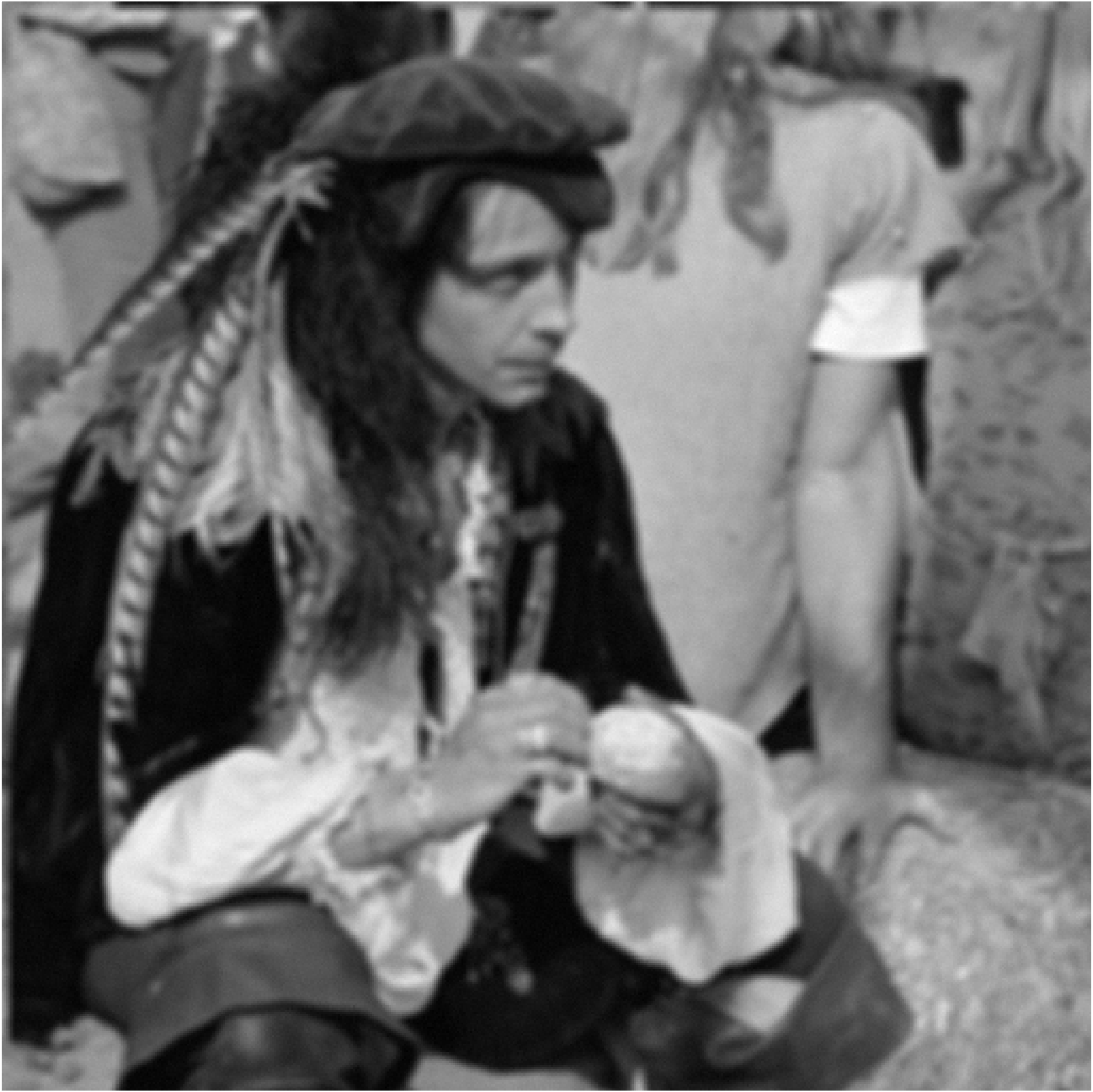}};
        \caption{$y$ ($20.39$ dB)}\label{fig:y}
    \end{subfigure}
    \begin{subfigure}[b]{0.23\linewidth}
        \tikz\node[draw=white,line width=2]{\includegraphics[width=1\linewidth]{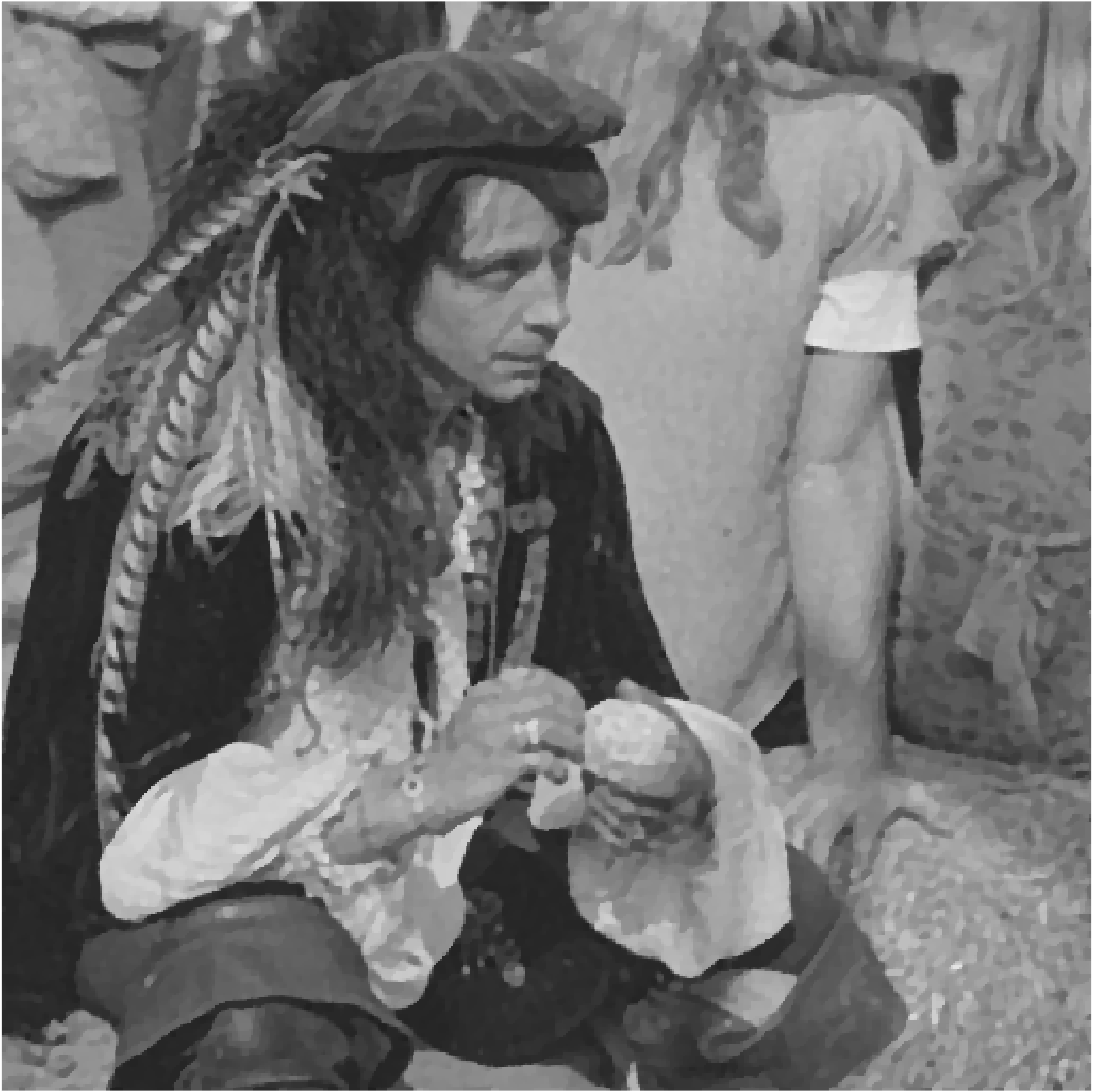}};
        \caption{$\mathcal{M}_1$ ($29.3$ dB)}\label{fig:mof_gauss}
    \end{subfigure}
    \begin{subfigure}[b]{0.23\linewidth}
        \tikz\node[draw=white,line width=2]{\includegraphics[width=1\linewidth]{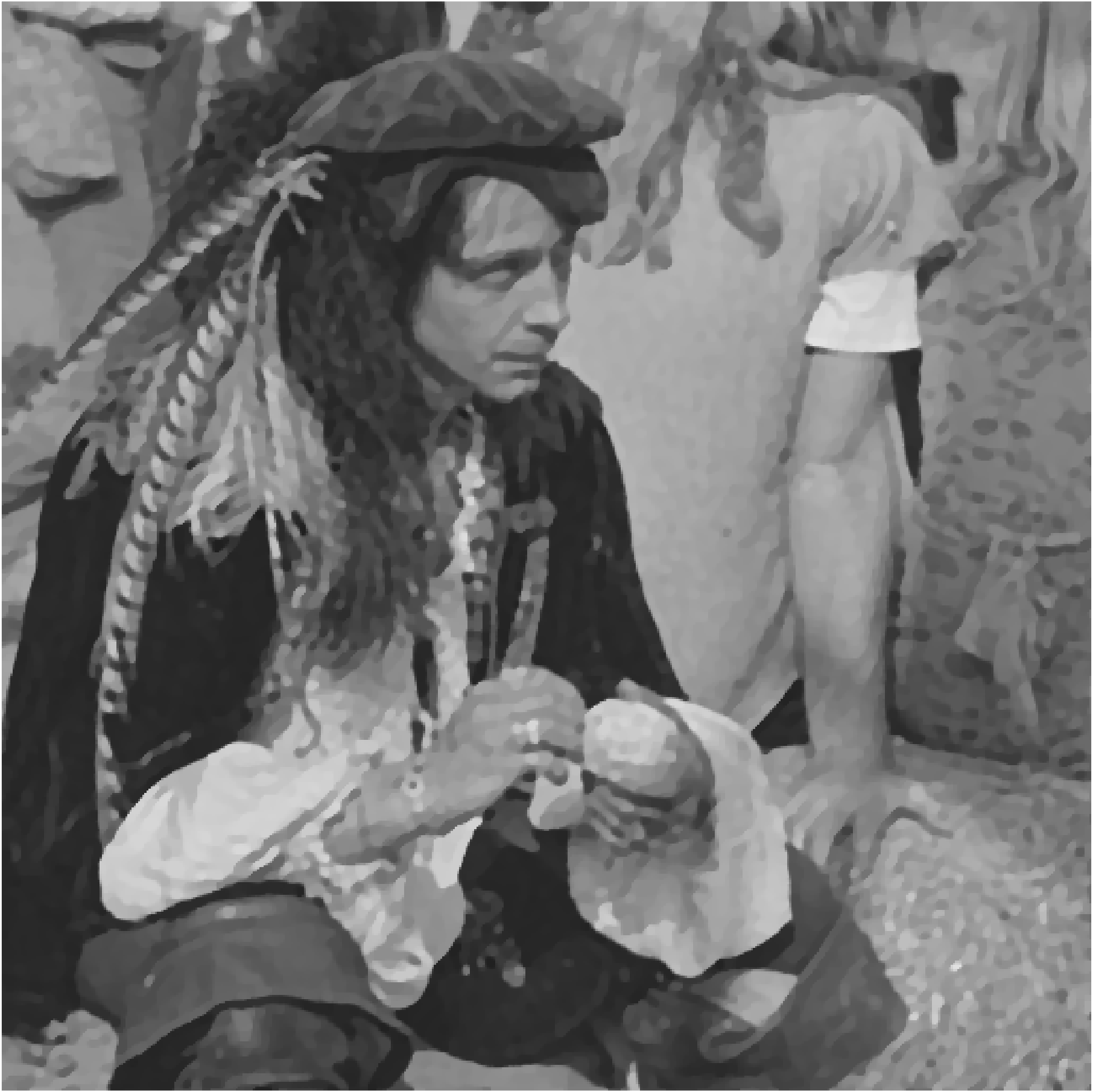}};
        \caption{$\mathcal{M}_2$ ($29.0$ dB)}\label{fig:mof_lap}
    \end{subfigure}	
    \begin{subfigure}[b]{0.23\linewidth}
        \tikz\node[draw=red,line width=2]{\includegraphics[width=1\linewidth]{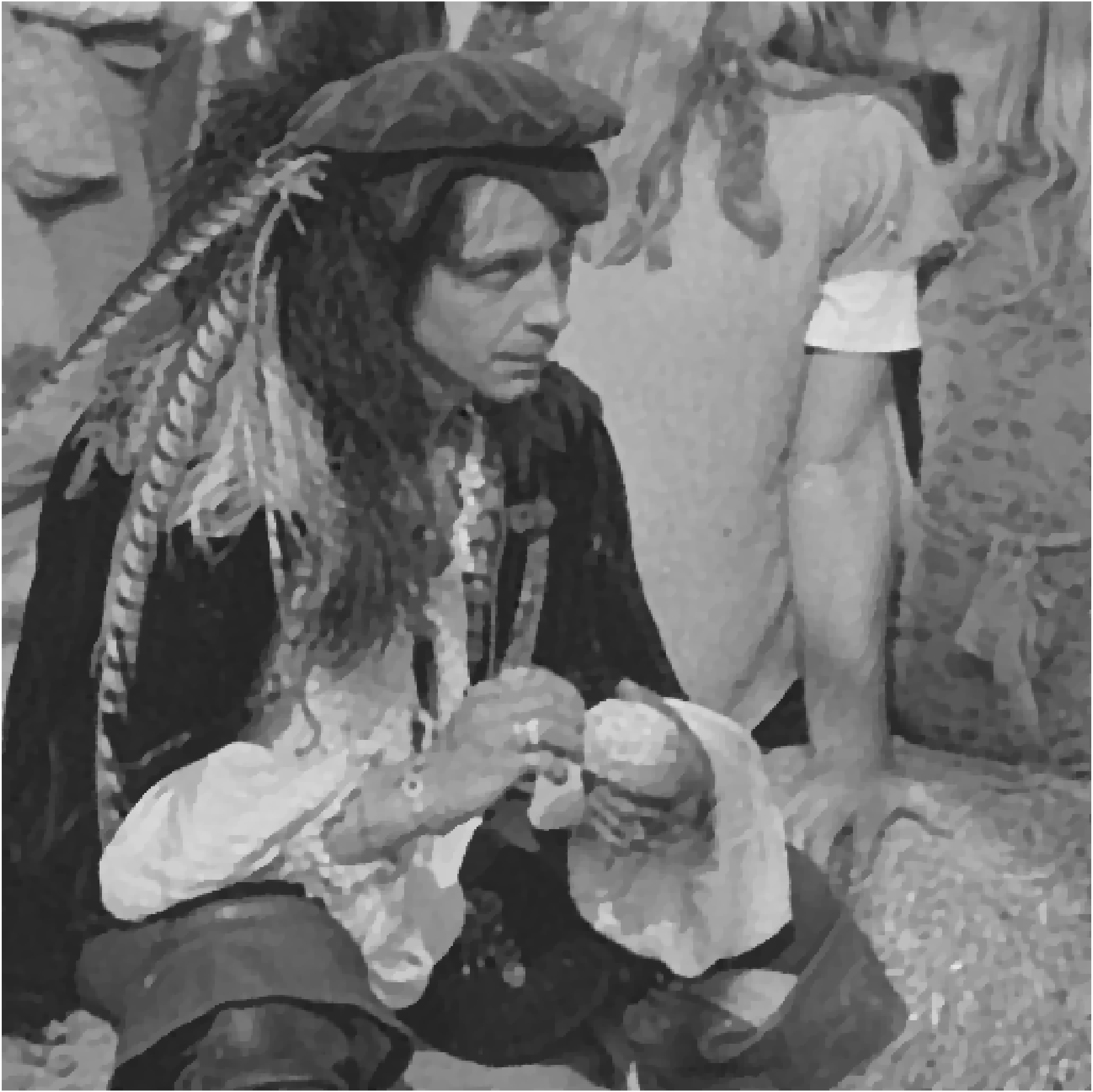}};
        \caption{$\mathcal{M}_3$ ($29.4$ dB) }\label{fig:mof_mof}
    \end{subfigure}
    \caption{Qualitative results of the model selection, BSNR $=30$ dB. (a) Degraded measurement $y$ from $\mathcal{M}_3$. (b) - (d) MAP estimators (PSNR) for $\mathcal{M}_1$, $\mathcal{M}_1$, $\mathcal{M}_2$ and $\mathcal{M}_3$, respectively. The observation $y$ is generated with$\mathcal{M}_3$.}
\label{fig:model_selection_man_moffat}
\end{figure}
	
To evaluate the effectiveness of this model selection approach in the context of semi-blind image deconvolution, we conducted an experiment using eight test images from \cref{figure: test_images}, three parametric blur kernels (Gaussian, Laplace and Moffat), and a moderate blurred signal-to-noise ratio (BSNR) values of $20$ dB and $30$ dB. A total of 24 realisations of the observed image $y$ were generated. We applied the described approach to each of these observations, assuming three possible blur models $\mathcal{M}_k$ ($k=1,2,3$) where $\mathcal{M}_1$ is a Gaussian, $\mathcal{M}_2$ is a Laplace and $\mathcal{M}_3$ is a Moffat model (i.e., we run \cref{algorithm: SAPG} 72 times in total, yielding 72 associated residuals). {The results of this exploratory experiment are presented in \Cref{table: confusion-matrix}  and \Cref{table: confusion-matrix psnr} in the form of confusion matrices; the results are available for two noise regimes (BSNR 20 dB and 30 dB). \Cref{table: confusion-matrix} records the model selected by the proposed method based on the smallest residual, against the model that has been used to generate the observed data $y$, while \Cref{table: confusion-matrix psnr} records the selected model against the model that delivers the best PSNR (which does not always coincide with the model that generated $y$).}

The findings in \Cref{table: confusion_matrix20 residual} (a)  and \Cref{table: confusion-matrix psnr} (a) indicate that in cases of low noise (BSNR of $30$ dB), the proposed approach often identifies the correct model for Gaussian or Laplace blurs. Moreover, the approach struggles to correctly identify when the data is corrupted by a Moffat blur. This discrepancy can be attributed to the limitations of the total-variation prior in providing accurate spectral information about the underlying image $x$, which subsequently introduces bias in the estimates of the shape parameter ($\alpha_2$) for the Moffat blur model. This is reflected, for example, in \Cref{table: confusion-matrix psnr} (a) where we see that the Moffat model often delivers sub-optimal reconstructions, even when the data is generated by using a Moffat blur. Moreover, the results in \Cref{table: confusion_matrix20 residual} (b)  and \Cref{table: confusion-matrix psnr} (b) indicate that the performance of the model selection procedure deteriorates quickly as the noise level increases (BSNR of $20$ dB).  We conclude that the heuristic is only relevant in semi-blind situations with low noise variance, and provided that the image prior used can deliver accurate estimates for all the blur models considered.

\begin{table}[!h]
\caption{Summary results of the model selection. Confusion Matrices obtained using 24 different observations (8 true images × 3 blur operators) with residual scores ($r_k = ||y - H(\bar{\alpha}_k)\bar{x}_k||^2_2$) for each BSNR value ($30$ dB, and $20$ dB).}
    \renewcommand{\arraystretch}{2}
    \begin{subtable}[h]{0.45\textwidth}
        \centering
        \begin{tabular}{ll|p{0.5cm}|p{0.5cm}|p{0.5cm}|}
        
            \multicolumn{2}{c}{}&   \multicolumn{3}{c}{Selected model}\\
            \multicolumn{2}{c}{}&\multicolumn{1}{c}{$\mathcal{M}_1$}&\multicolumn{1}{c}{$\mathcal{M}_2$}&\multicolumn{1}{c}{$\mathcal{M}_3$}\\
            \cline{3-5}
            \multirow{3}{*}{{\rotatebox[origin=c]{90}{True model}
            }} &   $\mathcal{M}_1$& 7 & 0 & 1   \\ \cline{3-5}
            &   $\mathcal{M}_2$   & 0 & 8 & 0  \\ \cline{3-5}
            &   $\mathcal{M}_3$   & 6 & 0 & 2 \\ \cline{3-5}
        \end{tabular}
        \caption{$\text{BSNR} = 30$ dB.}
        \label{table: confusion_matrix30 residual}
    \end{subtable}
    \begin{subtable}[h]{0.45\textwidth}
        \centering
        \begin{tabular}{ll|p{0.5cm}|p{0.5cm}|p{0.5cm}|}
            \multicolumn{2}{c}{}&   \multicolumn{3}{c}{Selected model}\\
            \multicolumn{2}{c}{}&\multicolumn{1}{c}{$\mathcal{M}_1$}&\multicolumn{1}{c}{$\mathcal{M}_2$}&\multicolumn{1}{c}{$\mathcal{M}_3$}\\
            \cline{3-5}
            \multirow{3}{*}{{\rotatebox[origin=c]{90}{True model}
            }} &   $\mathcal{M}_1$& 7 & 1 & 0   \\ \cline{3-5}
            &   $\mathcal{M}_2$   & 0 & 4 & 4  \\ \cline{3-5}
            &   $\mathcal{M}_3$   & 4 & 1 & 3 \\ \cline{3-5}
        \end{tabular}
        \caption{$\text{BSNR} = 20$ dB.}
        \label{table: confusion_matrix20 residual}
    \end{subtable}  
 \label{table: confusion-matrix}
\end{table}
 
 \begin{table}[!h]
\caption{Summary results of the model selection. Confusion Matrices obtained using 24 different observations (8 true images × 3 blur operators) for each BSNR value ($30$ dB, and $20$ dB).}
    \renewcommand{\arraystretch}{2}
    \begin{subtable}[h]{0.45\textwidth}
        \centering
        \begin{tabular}{ll|p{0.5cm}|p{0.5cm}|p{0.5cm}|}
        
            \multicolumn{2}{c}{}&   \multicolumn{3}{c}{Selected model}\\
            \multicolumn{2}{c}{}&\multicolumn{1}{c}{$\mathcal{M}_1$}&\multicolumn{1}{c}{$\mathcal{M}_2$}&\multicolumn{1}{c}{$\mathcal{M}_3$}\\
            \cline{3-5}
            \multirow{3}{*}{{\rotatebox[origin=c]{90}{Best model}
            }} &   $\mathcal{M}_1$& 11 & 0 & 0  \\ \cline{3-5}
            &   $\mathcal{M}_2$   & 0 & 7 & 0  \\ \cline{3-5}
            &   $\mathcal{M}_3$   & 3 & 1 & 2 \\ \cline{3-5}
        \end{tabular}
        \caption{$\text{BSNR} = 30$ dB.}
        \label{table: confusion_matrix30 psnr}
    \end{subtable}
    \begin{subtable}[h]{0.45\textwidth}
        \centering
        \begin{tabular}{ll|p{0.5cm}|p{0.5cm}|p{0.5cm}|}
            \multicolumn{2}{c}{}&   \multicolumn{3}{c}{Selected model}\\
            \multicolumn{2}{c}{}&\multicolumn{1}{c}{$\mathcal{M}_1$}&\multicolumn{1}{c}{$\mathcal{M}_2$}&\multicolumn{1}{c}{$\mathcal{M}_3$}\\
            \cline{3-5}
            \multirow{3}{*}{{\rotatebox[origin=c]{90}{Best model}
            }} &   $\mathcal{M}_1$& 9 & 1 & 0   \\ \cline{3-5}
            &   $\mathcal{M}_2$   & 2 & 3 & 4  \\ \cline{3-5}
            &   $\mathcal{M}_3$   & 0 & 2 & 3 \\ \cline{3-5}
        \end{tabular}
        \caption{$\text{BSNR} = 20$ dB.}
        \label{table: confusion_matrix20 psnr}
    \end{subtable}  
 \label{table: confusion-matrix psnr}
\end{table}

\section{Conclusions and perspective}\label{conclusion}
This paper presented a new computational strategy to perform empirical Bayesian estimation in semi-blind image deconvolution problems, with a focus on situations where an assumption-driven prior is required because of lack of suitable training data to adopt a data-driven solution. The proposed approach proceeds in two optimisation steps. First, we use a novel stochastic approximation proximal gradient algorithm to automatically calibrate the parameters of the blur model, the noise variance, and any regularisation parameters, by maximum marginal likelihood estimation. This is then followed by non-blind image deconvolution by maximum-a-posteriori estimation, conditionally to the estimated model parameters. For the class of log-concave models considered in the paper, this maximum-a-posteriori estimation problem can be efficiently solved by proximal convex optimisation. The marginal likelihood of the blur, noise variance, and regularisation parameters is generally computationally intractable, as it requires calculating several integrals over the entire solution space. Unlike previous empirical Bayesian works that considered variational approximations, usually Gaussian approximations, our approach addresses this difficulty by using a carefully designed stochastic approximation proximal gradient optimisation scheme that iteratively solves such integrals by using an unadjusted Langevin algorithm tailored for this class of problems. This optimisation strategy can be easily and efficiently applied to any model that is log-concave, and by using the same gradient and proximal operators that are required to compute the maximum-a-posteriori solution by convex optimisation. We provided convergence guarantees for the proposed optimisation scheme under realistic and easily verifiable conditions. We subsequently demonstrated the effectiveness of the approach with a series of deconvolution experiments involving spatially-invariant Gaussian, Laplace, and Moffat blur kernels and additive Gaussian noise, where we also reported comparisons with alternative Bayesian and non-Bayesian strategies from the state of the art. We also presented an exploratory experiment related to Bayesian model selection by using the heuristic \cite{vidal2021fast}, which can be implemented with minimal overhead, but should only be considered in cases of mild noise.

The presented methodology could also be applied to some important non-Gaussian noise models, such as Poisson, Binomial, and Geometric noise related to low-photon imaging 
\cite{melidonis2022efficient}. Investigating the performance of the proposed scheme for such noise models is an important perspective for future work. Similarly, it would also be interesting to apply this technique to deconvolution problems involving spatially-variant blur \cite{laroche2023provably}. As an additional perspective for future work, it would be beneficial to extend our approach to incorporate Plug-and-Play priors to potentially enhance the quality of the reconstructed images, as described in \cite{laumont2022bayesian}.

\section{Acknowledgements}
This work is part of the project BOLT (Bayesian model selection and calibration for computational imaging) supported by the UKRI EPSRC (EP/T007346/1). We are also grateful to Valentin De Bortoli for discussions about the proof of convergence of the algorithm.

\bibliographystyle{siamplain}
\bibliography{main_paper}
 \newpage
\appendix

 \section{Computation of $\nabla_\theta \log p(y|\theta, \alpha,\sigma^2)$, $\nabla_\alpha \log p(y|\theta, \alpha,\sigma^2)$ and $\nabla_{\sigma^2} \log p(y|\theta, \alpha,\sigma^2)$}\label{appendix: Gradient fiher identity}
 
 In this section, we provide a comprehensive explanation of how to calculate the gradients of the marginal likelihood $\theta, \alpha, \sigma \mapsto \log p(y|\theta, \alpha, \sigma^2)$ w.r.t. the parameters $\theta$, $\alpha$ and $\sigma^2$. We recall that
 \begin{equation}\label{appendix: marginal likelihood}
     p(y|\theta, \alpha,\sigma^2) = \int_{\R^d} p(u, y|\theta, \alpha,\sigma^2) du.
 \end{equation}
 When we compute the gradient of \eqref{appendix: marginal likelihood} w.r.t. $\theta$ and subsequently divide by $p(y|\theta, \alpha,\sigma^2)$, we obtain the following expression.
 \begin{eqnarray}\label{appendix: ml1 theta}
     \dfrac{\nabla_\theta p(y|\theta, \alpha,\sigma^2)}{p(y|\theta, \alpha,\sigma^2)} = \int_{\R^d} \dfrac{\nabla_\theta p(u,y|\theta, \alpha,\sigma^2)}{p(y|\theta, \alpha,\sigma^2)} du.
 \end{eqnarray}
 It is important to note that $p(y|\theta, \alpha,\sigma^2)$ does not depend on $u$. Additionally, since 
 \begin{equation}
     \nabla_\theta \log(p(y|\theta, \alpha,\sigma^2)) =  \dfrac{\nabla_\theta p(y|\theta, \alpha,\sigma^2)}{p(y|\theta, \alpha,\sigma^2)} \quad \text{and } \quad p(y|\theta, \alpha,\sigma^2) = \dfrac{ p(u, y|\theta, \alpha,\sigma^2)}{p(u|y,\alpha,\sigma^2)},
 \end{equation}
 we can express \eqref{appendix: ml1 theta} in the following manner
 \begin{eqnarray}\label{appendix: ml2 theta}
     \nabla_\theta \log(p(y|\theta, \alpha,\sigma^2)) &=& \int_{\R^d} p(u|y,\theta,\alpha,\sigma^2)\nabla_\theta log \left( p(u,y|\theta, \alpha,\sigma^2)\right) du\nonumber\\
     &=& \int_{\R^d} p(u|y,\theta,\alpha,\sigma^2)\left[ \nabla_\theta \log p(y|u,\alpha,\sigma^2) + \nabla_\theta \log (p(u|\theta)\right] du\nonumber\\
     &=& \int_{\R^d} p(u|y,\theta,\alpha,\sigma^2) \underbrace{\nabla_\theta \log p(y|u,\alpha,\sigma^2)}_{=0}du + \int_{\R^d} p(u|y,\alpha,\sigma^2)\nabla_\theta \log (p(u|\theta) du\nonumber\\
     &=&  \mathbb{E}_{u|y,\theta,\alpha,\sigma^2}\left[\nabla_\theta \log (p(u|\theta)\right]
 \end{eqnarray}
 As a reminder regarding the regularisation function $x\mapsto g(x)$, the prior distribution is defined as follows 
 \begin{equation} \label{appendix: prior}
     p(u|\theta) = \dfrac{\exp(-\theta g(u))}{Z(\theta)}, \quad \text{ where } \quad Z(\theta) = \int_{\R^d} \exp(-\theta g(\tilde{u})) d\tilde{u}.
 \end{equation}
Thus, \eqref{appendix: ml2 theta} becomes 
\begin{eqnarray}\label{appendix: ml3 theta}
     \nabla_\theta \log(p(y|\theta, \alpha,\sigma^2)) =  -\mathbb{E}_{u|y,\theta,\alpha,\sigma^2} \left[\nabla_\theta \theta g(u)\right] - \nabla_\theta\log Z(\theta).
 \end{eqnarray}
 Similarly, for $\nabla_\alpha \log p(y|\theta, \alpha,\sigma^2)$ and $\nabla_{\sigma^2} \log p(y|\theta, \alpha,\sigma^2)$ we derive the following expressions, respectively
 \begin{equation}\label{appendix: ml1 alpha}
     \nabla_\alpha \log(p(y|\theta, \alpha,\sigma^2)) =  \mathbb{E}_{u|y,\theta,\alpha,\sigma^2} \left[ \nabla_\alpha p(y|u,\alpha,\sigma^2)\right].
 \end{equation}
 \begin{equation}\label{appendix: ml1 sigma}
     \nabla_{\sigma^2} \log(p(y|\theta, \alpha,\sigma^2)) =  \mathbb{E}_{u|y,\theta,\alpha,\sigma^2} \left[ \nabla_{\sigma^2} p(y|u,\alpha,\sigma^2) - \dfrac{d}{2\sigma^2}\right].
 \end{equation}

 \section{Computation of $\nabla_\alpha f^y_{\alpha, \sigma^2}$}\label{detail: Gradient}
 We recall that the data fidelity is given by
 \begin{eqnarray}
    f_{\alpha, \sigma^2}^y(x) = \dfrac{1}{2\sigma^2}||y - H(\alpha)x||_F^2 &= & \dfrac{1}{2\sigma^2}\sum_{i=0}\sum_{j=0}\left(y_{ij} - [H(\alpha)x]_{ij}\right)^2,\nonumber
 \end{eqnarray}
 where $y_{ij}$ is the pixel observed at the position $i$ and $j$ in $y$, $[H(\alpha)x]_{ij}$ is the pixel at the position $i$ and $j$ in $H(\alpha)x$. 

 The gradient $\alpha \mapsto \nabla_\alpha f_{\alpha, \sigma^2}^y(x)$ is derived as follows
 \begin{eqnarray}
    \nabla_\alpha f_{\alpha, \sigma^2}^y(x)  &= & \dfrac{1}{\sigma^2}\sum_{i=0}\sum_{j=0}\nabla_\alpha \left \lbrace [H(\alpha)x]_{ij}\right\rbrace\left( [H(\alpha)x]_{ij} - y_{ij}\right)\nonumber\\
    &= & \dfrac{1}{\sigma^2}\nabla_\alpha \left \lbrace H(\alpha)x\right\rbrace \circ\left( H(\alpha)x - y\right)\nonumber
 \end{eqnarray}
\section{General case of regularisation: $g$ is inhomogeneous}\label{g:generalecase}
In this section, we briefly describe the method for solving  semi-blind inverse problems in which the regularisation term $g$ is in-homogeneous, making the normalisation constant \eqref{eq:normalisation} computationally and analytically intractable due to the high dimensional integral on $\R^d$. In the Bayesian framework, several works have been done to address this difficulty by replacing \eqref{eq:normalisation} with an approximation such as variational approximations \cite{celeux2003procedures}, pseudo-likelihood approximations \cite{oliveira2009adaptive} and Monte Carlo approximations \cite{vidal2020maximum}. In a manner akin to \cite{vidal2020maximum}, we approximate the normalisation constant by driving a proximal Markov kernel $(\bar{X}_k)_{k\in\N}$ associated with the MYULA algorithm and targeting the prior distribution $p(x|\theta)$. The Markov kernel starting from $\bar{X}_0\in\R^d$ is given by the following recursion,
\begin{equation}\label{eq: MYULA1}
    \bar{X}_{k+1} = (1 - \frac{\gamma'}{\lambda'})\bar{X}_k + \frac{\gamma'}{\lambda}\text{prox}^{\lambda'}_{\theta g}(\bar{X}_k) + \sqrt{2\gamma'}Z_{k+1},
\end{equation}
where $\gamma'>0$ and $\lambda'>0$ are the step-size and smoothing parameters respectively.\\
From \Cref{algorithm: SAPG}, we obtain \Cref{algorithm: SAPG2} with two Markov kernels $(X_k)_{k\in\N}$, $(\bar{X}_k)_{k\in\N}$ targeting the posterior $p(x|y, \theta, \alpha,\sigma^2)$ and prior $p(x|\theta)$ distribution respectively.
\begin{algorithm}[H]
    \begin{algorithmic}[1]
        \STATE Initialization:  $\{\theta_0, \alpha_0,\sigma^2_0, X_0^0\}$, $\Theta_\theta, \Theta_{\alpha}, \Theta_{\sigma^2}$, kernel parameters $\gamma, \lambda$, iteration $N$.
        \FOR {$n = 0:N-1$}
        \IF {$n>0$}
        \STATE set $X_0^n = X_{m_n-1}^{n-1}$
        \STATE set $\bar{X}_0^n = \bar{X}_{m_n-1}^{n-1}$
        \ENDIF
        \FOR{$k = 0:m_n-1$}
        \STATE{$X_{k+1}^n = (1 - \dfrac{\gamma}{\lambda})X_k - \gamma \nabla_x f^y_{\alpha_n,\sigma_n^2}(X_k) + \dfrac{\gamma}{\lambda}\text{prox}^\lambda_{\theta_n g}(X_k) + \sqrt{2\gamma}Z_{k+1}$}
        \STATE{$\bar{X}_{k+1}^n = (1 - \dfrac{\gamma'}{\lambda'})\bar{X}_k  + \dfrac{\gamma'}{\lambda'}\text{prox}^{\lambda'}_{\theta_{n} g}(\bar{X}_k) + \sqrt{2\gamma'}Z_{k+1}$}
        \ENDFOR
        \STATE{$\theta_{n+1} = \Pi_{\Theta_\theta}\left[ \theta_n + \frac{\delta_{n+1}}{m_n}\sum_{k=1}^{m_n}\left\lbrace g(\bar{X}_k^n) - g(X_k^n)\right\rbrace\right] $}
        \STATE{ $\alpha_{n+1} = \Pi_{\Theta_\alpha}\left[ \alpha_n - \frac{\delta_{n+1}}{m_n}\sum_{k = 1}^{m_n}\nabla_{\alpha}f^y_{\alpha_n,\sigma_{n}^2}(X_k^n)\right] $}
        \STATE{ $\sigma^2_{n+1} = \Pi_{\Theta_{\sigma^2}}\left[ \sigma^2_n - \frac{\delta_{n+1}}{m_n}\left(\sum_{k = 1}^{m_n}\nabla_{\sigma^2}f^y_{\alpha_n,\sigma_{n}^2}(X_k^n) + \dfrac{d}{2\sigma_n^2}\right)\right] $}
        \ENDFOR
        \STATE $\bar{\theta}_N $, $\bar{\alpha}_N$ and $\bar{\sigma}^2_N$ are evaluated using \eqref{eq: optimal para1}, \eqref{eq: optimal para2} and \eqref{eq: optimal para3} respectively.
    \STATE $\bar{x}_{MAP} = \argmax_{x\in\R^d} p(x|y,\bar{\theta}, \bar{\alpha},\bar{\sigma}^2)$
    \end{algorithmic}
    \caption{SAPG Algorithm}
    \label{algorithm: SAPG2}
\end{algorithm}

\section{Robustness of the method to hyper parameters}\label{appendix: parameters_setting}

This section extends \Cref{section:parameters_settings} with more details on setting the parameters. Additionally, we illustrate the robustness of the proposed method w.r.t. setting the parameters of \Cref{algorithm: SAPG}. 
\paragraph{$\Theta_{\alpha}, \Theta_\theta$ and $\Theta_{\sigma^2}$} The  admissible sets $\Theta_\theta$ and $\Theta_\alpha$ for $\theta$ and $\alpha$ are chosen base on some a priori knowledge on these parameters. For the Gaussian observational model, where $\alpha = \left(\alpha_h, \alpha_v\right)$ with $\alpha_h$ and $\alpha_v$ representing the horizontal and vertical bandwidths of the Gaussian kernel, respectively, we consider their inverse for numerical stability within the method. 
Consequently, we establish $\Theta_{\alpha_h} = \Theta_{\alpha_v} = [0.1, 1]$. The reason for setting these inverse bandwidths below $1$ is that a Gaussian blur kernel with $\alpha_h>1$ and $\alpha_v>1$ essentially approximates the identity blur. This implies that the kernel will not significantly degrade the image. We adopted a similar approach to define $\Theta_\alpha$ for the Laplace and Moffat observational models. 

Regarding the regularisation parameter $\theta$, we observe that when $\theta>1$, the resulting image tends to be overly smoothed. On the contrary, when $\theta$ is close to $0$, the degree of regularisation introduced to constrain the solution is minimal. Given these observations, we establish $\Theta_\theta = [10^{-3}, 1]$ in all of our experiments. Finally, to define the admissible set $\Theta_{\sigma^2}$, we assume that the actual value of BSNR falls within the range of $15$ dB and $45$ dB. This information on the BSNR value allows us to calculate $\sigma^2_{min}$ and $\sigma^2_{max}$ by rearranging \eqref{equation: bsnr}. More precisely, we use the approximation $\|y-Hx\|_2^2 \approx d \sigma^2$ which holds when $d = \textrm{dim}(x)$ is large, and obtain the following.
\begin{equation}\label{appendix: bsnr1}
    \text{BSNR}(y, x) = -10\log_{10}\left(\frac{d\sigma^2}{||Hx||^2}\right)\, ,
\end{equation}
and therefore to achieve BSNR values of approximately $15$ dB and $45$ dB, respectively we set
$$
\sigma^2_{min} = \dfrac{||Hx||^2_2}{\sqrt{d 10^{15/10}}}\, , \quad \text{and } \quad \sigma^2_{max} = \dfrac{||Hx||^2_2}{\sqrt{d 10^{45/10}}}\, .
$$
Recall that these values are merely used to bound the sequence of iterates, hence the approximation involved in their calculation does not directly affect the accuracy of our estimates.
    
\paragraph{Initialisation $\alpha_{0}$, $\theta_{0}$ and $\sigma^2_0$} 
To demonstrate the robustness of the proposed method w.r.t. the initial values $\theta_0, \alpha_0$ and $\sigma_0^2$, we focus on the Gaussian model, which serves as a representative case, to examine the effect of initialising $\alpha_0 = (\alpha_{h,0}, \alpha_{v,0})$. \Cref{figapp: init_alpha} $(a)$ and $(b)$ depict the evolution of the iterates $\left(\alpha_{h,n}\right)_{n\in\N}$ and $\left(\alpha_{v,n}\right)_{n\in\N}$, respectively, with various initial values of $\alpha_{h,0}$ and $\alpha_{v,0}$. It can be seen that all sequences exhibit convergence and ultimately stabilise at the same solution. It should be noted that similar results and conclusions can be derived by considering both $\theta$ and $\sigma^2$. Consequently, this demonstrates the robustness of the proposed method with respect to the initial parameters $\alpha_0, \theta_0$ and $\sigma_0^2$.

\begin{figure}[H]
    \begin{tabular}{cc}
        \centering 
        \begin{tikzpicture}[spy using outlines={rectangle, red,magnification=4, connect spies}]
            \node {\pgfimage[interpolate=true,width=.45\linewidth]{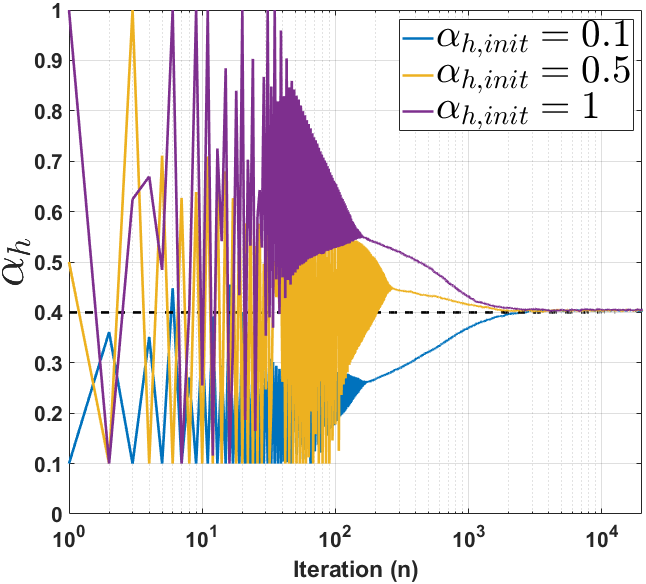}};
            \node [below=3.5cm, align=flush center]{$(a)$  Iterates $\left(\alpha_{h,n}\right)_{n\in\N}$};
        \end{tikzpicture}
        &
        \begin{tikzpicture}[spy using outlines={rectangle, red,magnification=4, connect spies}]
            \node {\pgfimage[interpolate=true,width=.45\linewidth]{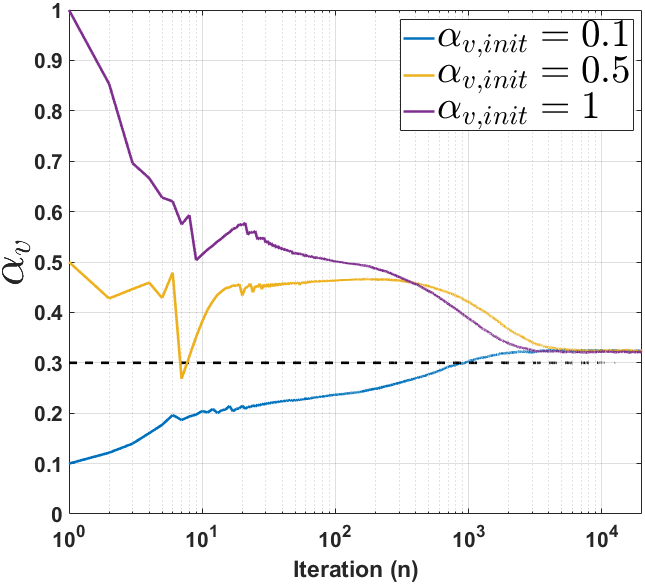}};
            \node [below=3.5cm, align=flush center]{$(b)$  Iterates $\left(\alpha_{h,n}\right)_{n\in\N}$};
        \end{tikzpicture} 
    \end{tabular}
    \caption{Gaussian experiment under a noise regime achieving a BSNR value of $ 30$ dB. (a) Evolution of iterates $\left(\alpha_{h,n}\right)_{n\in\N}$ for different initialisation $\alpha_{h,0}$. (b) Evolution of the sequence of iterates $\left(\alpha_{v,n}\right)_{n\in\N}$ for different initialisation $\alpha_{v,0}$.}
    \label{figapp: init_alpha}
\end{figure}

\paragraph{Lipschitz constant $L_f$} To assess the robustness of the proposed method w.r.t. the Lipschitz constant, we assume that we have incorrectly estimated $L_f$ by different orders of magnitude, from $0.1 L_f$ to $100 L_f$. \Cref{figapp: lipschitz} (a) and (b) depict the evolution of the iterates $\left(\alpha_{h,n}\right)_{n\in\N}$ and $\left(\alpha_{v,n}\right)_{n\in\N}$. It can be seen that all sequences exhibit convergence and ultimately stabilise at the same solution.
This results indicate that the proposed method exhibits reasonable robustness w.r.t. the estimation of the Lipschitz constant.
\begin{figure}[H]
\centering
    \begin{tabular}{cc}
        \centering 
        \begin{tikzpicture}[spy using outlines={rectangle, red,magnification=4, connect spies}]
            \node {\pgfimage[interpolate=true,width=.35\linewidth]{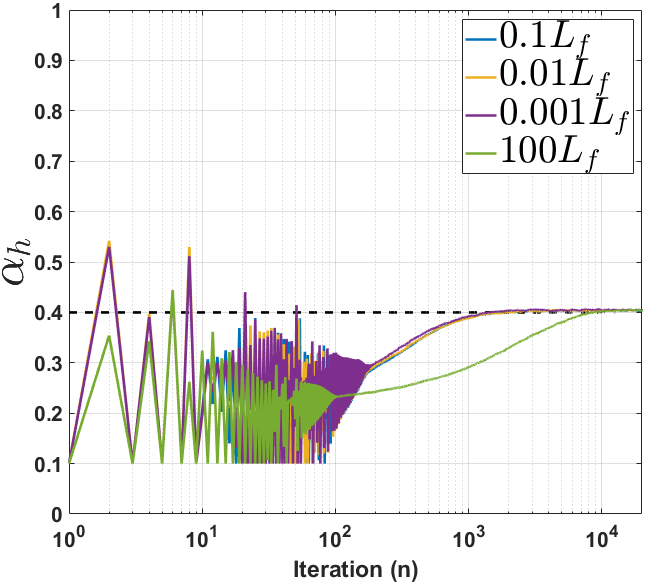}};
            \node [below=3.cm, align=flush center]{$(a)$  Iterates $\left(\alpha_{h,n}\right)_{n\in\N}$};
        \end{tikzpicture} 
        &
        \begin{tikzpicture}[spy using outlines={rectangle, red,magnification=4, connect spies}]
            \node {\pgfimage[interpolate=true,width=.35\linewidth]{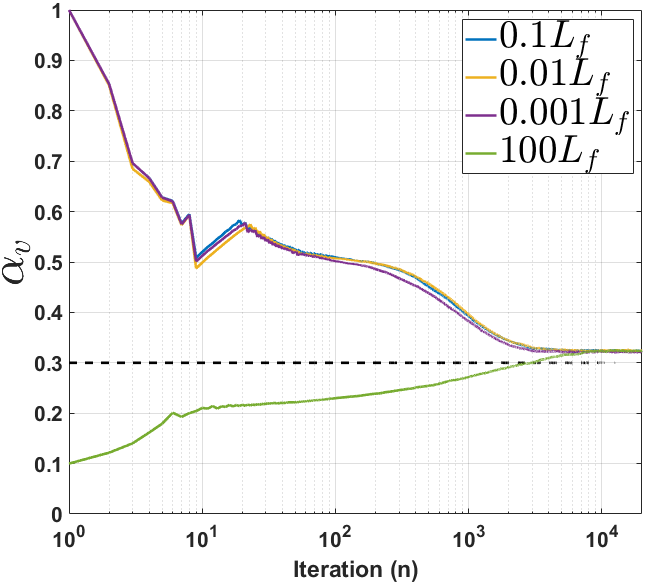}};
            \node [below=3.cm, align=flush center]{$(b)$  Iterates $\left(\alpha_{h,n}\right)_{n\in\N}$};
        \end{tikzpicture} 
    \end{tabular}
    \caption{Gaussian experiment under a noise regime achieving a BSNR value of $ 30$ dB.  (a) Evolution of the iterates $\left(\alpha_{h,n}\right)_{n\in\N}$ for different Lipschitz constants $L_f$. (b) Evolution of the iterates $\left(\alpha_{v,n}\right)_{n\in\N}$ for different Lipschitz constants $L_f$.}
\label{figapp: lipschitz}
\end{figure}

\paragraph{Step size: $\delta_n$} Lastly, we evaluate the robustness of the proposed method w.r.t. the step size $ c_0 * \delta_n$ with $c_0>0$. \Cref{fig: step-size} $(a)$ and $(b)$ depict the evolution of the sequence of iterates $\left(\alpha_{h,n}\right)_{n\in\mathbb{N}}$ and $\left(\alpha_{v,n}\right)_{n\in\mathbb{N}}$ for the Gaussian model, with noise variance set to achieve a BSNR value of $30$ dB. It can be seen that both the excessively large and very small step sizes significantly impact the computational efficiency of the method. Consequently, it is recommended to carefully set the step size to enhance the computational efficiency of the proposed \Cref{algorithm: SAPG}. 

\begin{figure}[H]
\centering
\begin{tabular}{cc}
    \centering 
    \begin{tikzpicture}[spy using outlines={rectangle, red,magnification=4, connect spies}]
        \node {\pgfimage[interpolate=true,width=.35\linewidth]{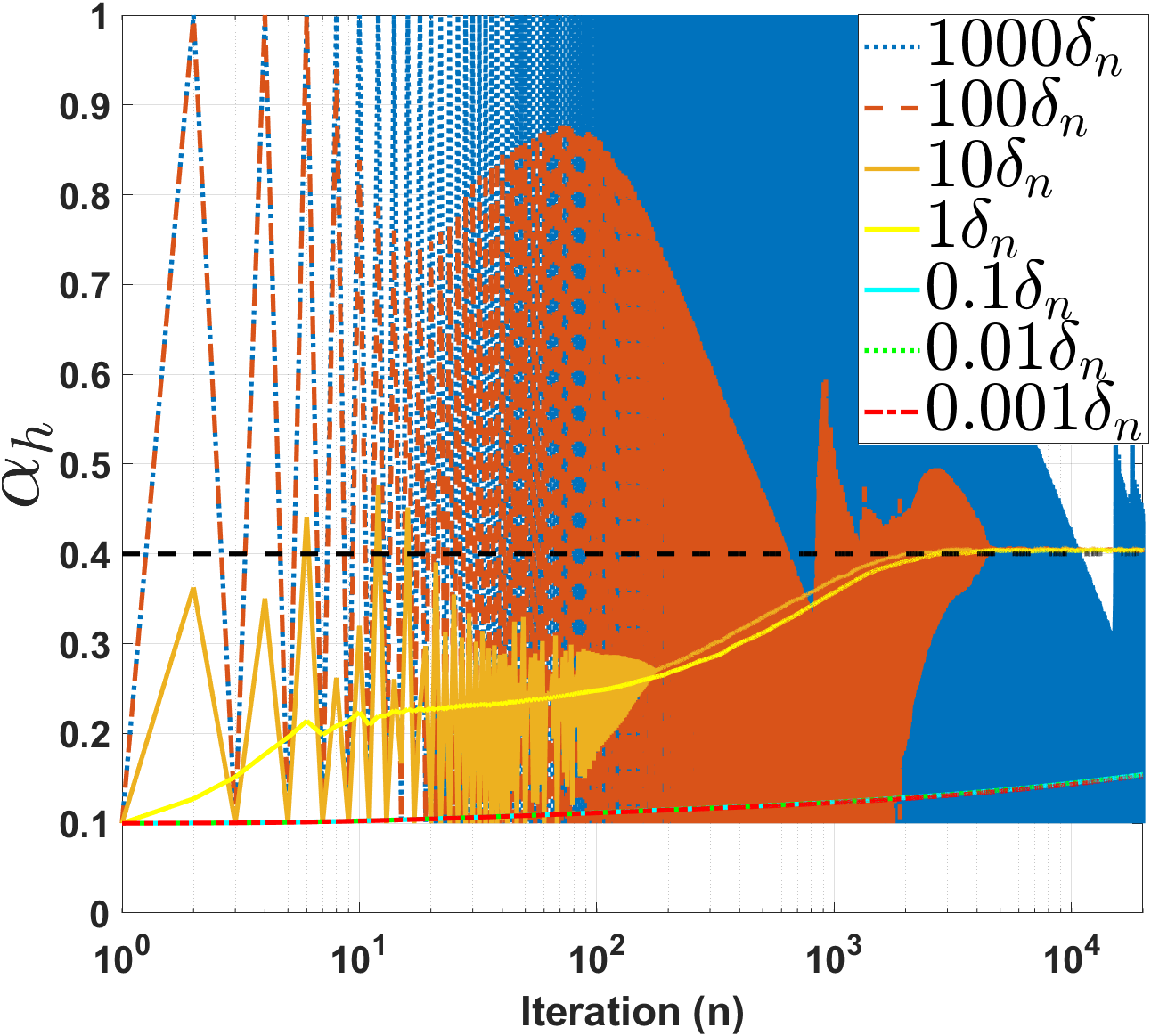}};
        \node [below=3.cm, align=flush center]{$(a)$  Iterates $\left(\alpha_{h,n}\right)_{n\in\N}$};
    \end{tikzpicture} 
    &
    \begin{tikzpicture}[spy using outlines={rectangle, red,magnification=4, connect spies}]
        \node {\pgfimage[interpolate=true,width=.35\linewidth]{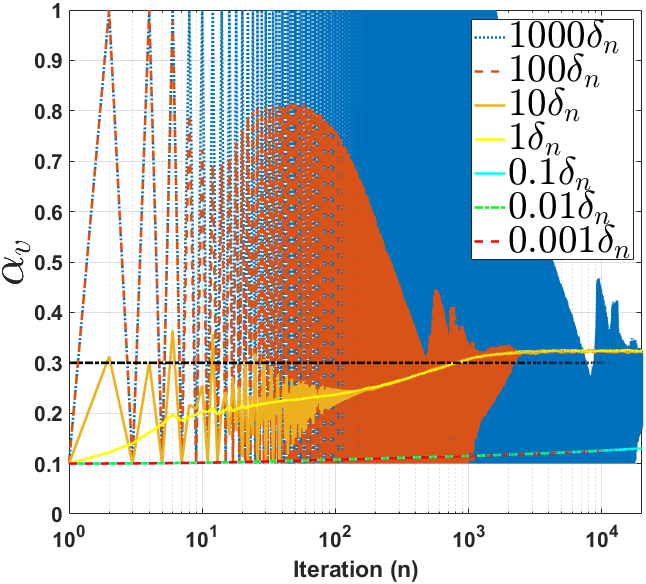}};
        \node [below=3.cm, align=flush center]{$(b)$  Iterates $\left(\alpha_{h,n}\right)_{n\in\N}$};
    \end{tikzpicture} 
\end{tabular}
\caption{Gaussian experiment under a noise regime achieving a BSNR value of $ 30$ dB. (a) Evolution of the iterates $\left(\alpha_{h,n}\right)_{n\in\N}$ for different step sizes $\delta_n$. (b) Evolution of the iterates $\left(\alpha_{v,n}\right)_{n\in\N}$ for different step sizes $\delta_n$.}
\label{fig: step-size}
\end{figure}

\newpage
\section{Additional Experiments}
\subsection{Model evaluation for $\theta$ and $\sigma^2$}
We recall that each test image is associated with a distinct true noise variance $\sigma^{\star2}$, and the optimal regularisation parameter $\theta^\star$. To evaluate the precision of the proposed method in calibrating $\sigma^2$ and $\theta$, we consider 8 different test images. \Cref{fig:Gaussian_theta_sigma77} illustrates the estimated values $\bar{\sigma}^2$ and $\bar{theta}$ along with the true values $\sigma^{\star2}$ and the optimal values $\theta^\star$. It can be seen that our proposed method is robust in calibrating $\sigma^2$ and $\theta$ from observation $y$.

\begin{figure}[H]
    \centering
    \begin{tabular}
    {cccc}
    \centering 
    \begin{tikzpicture}[spy using outlines={rectangle, red,magnification=4, connect spies}]
        \node {\pgfimage[interpolate=true,width=.22\linewidth]{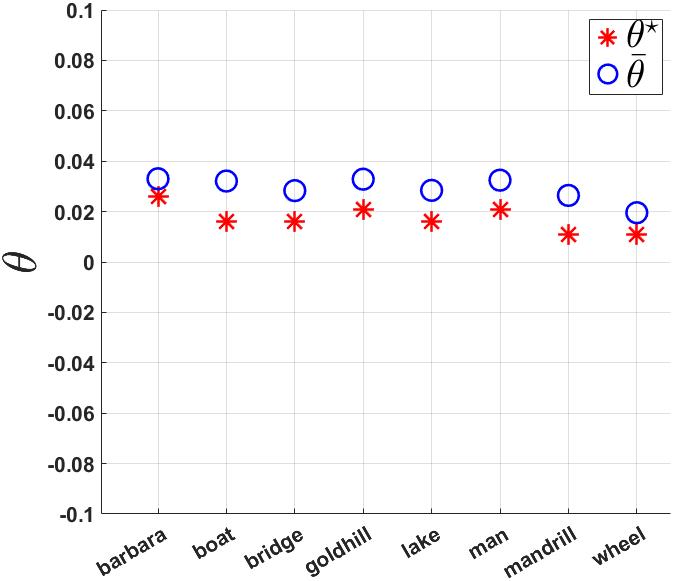}};
    \end{tikzpicture}
    &
    \begin{tikzpicture}[spy using outlines={rectangle, red,magnification=4, connect spies}]
        \node {\pgfimage[interpolate=true,width=.22\linewidth]{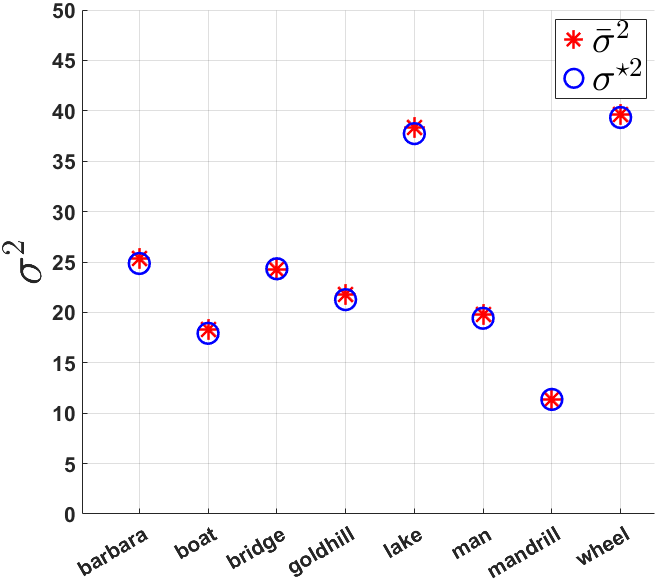}};
    \end{tikzpicture} 
    &
    \begin{tikzpicture}[spy using outlines={rectangle, red,magnification=4, connect spies}]
        \node {\pgfimage[interpolate=true,width=.22\linewidth]{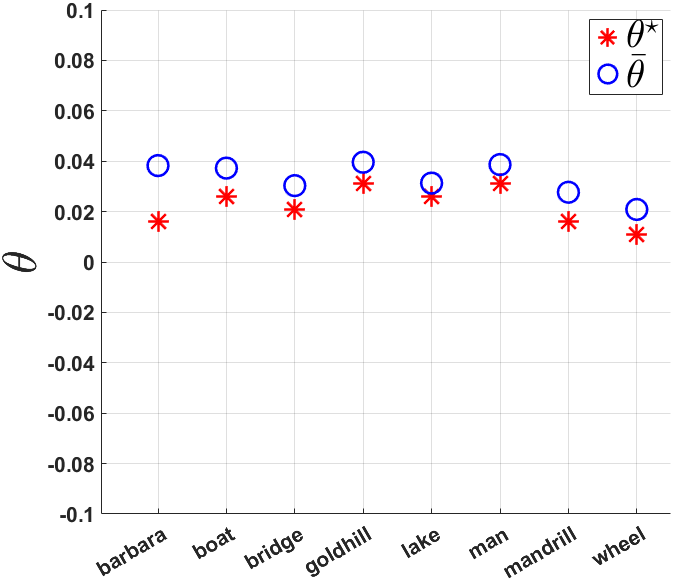}};
    \end{tikzpicture}
    &
    \begin{tikzpicture}[spy using outlines={rectangle, red,magnification=4, connect spies}]
        \node {\pgfimage[interpolate=true,width=.22\linewidth]{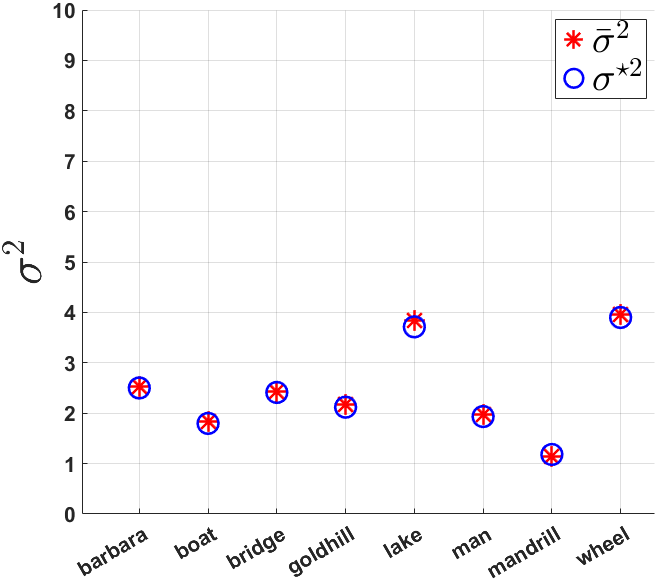}};
    \end{tikzpicture} 
    \end{tabular}
    \begin{tabular}
    {cccc}
    \centering 
    \begin{tikzpicture}[spy using outlines={rectangle, red,magnification=4, connect spies}]
        \node {\pgfimage[interpolate=true,width=.22\linewidth]{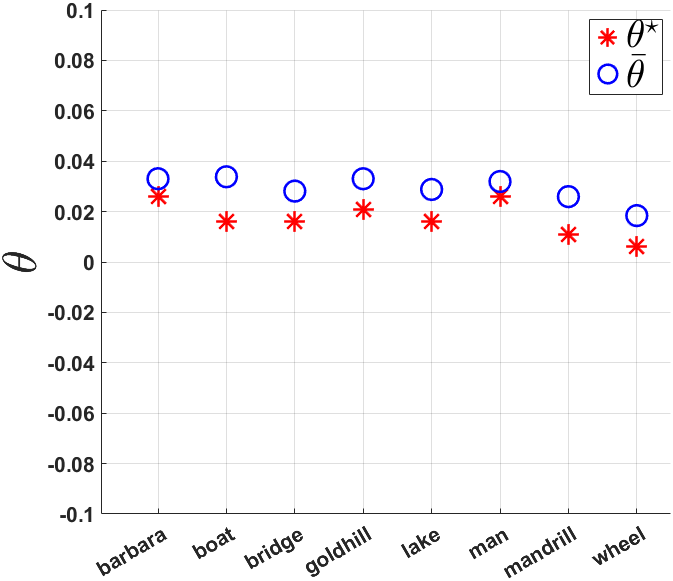}};
    \end{tikzpicture}
    &
    \begin{tikzpicture}[spy using outlines={rectangle, red,magnification=4, connect spies}]
        \node {\pgfimage[interpolate=true,width=.22\linewidth]{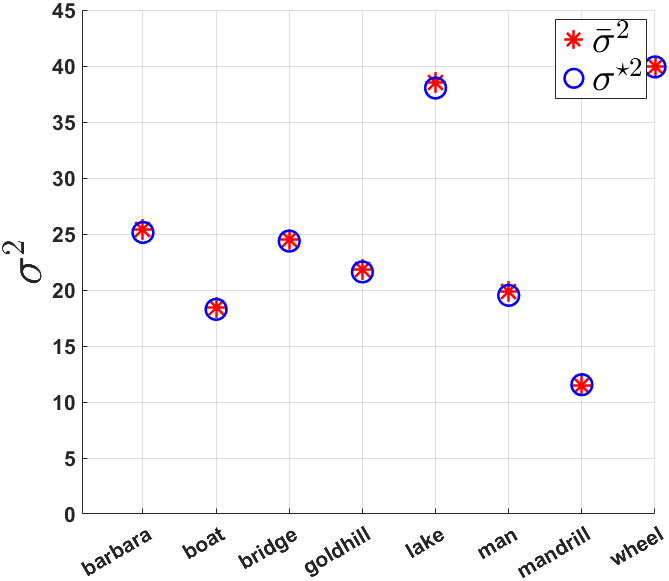}};
    \end{tikzpicture} 
    &
    \begin{tikzpicture}[spy using outlines={rectangle, red,magnification=4, connect spies}]
        \node {\pgfimage[interpolate=true,width=.22\linewidth]{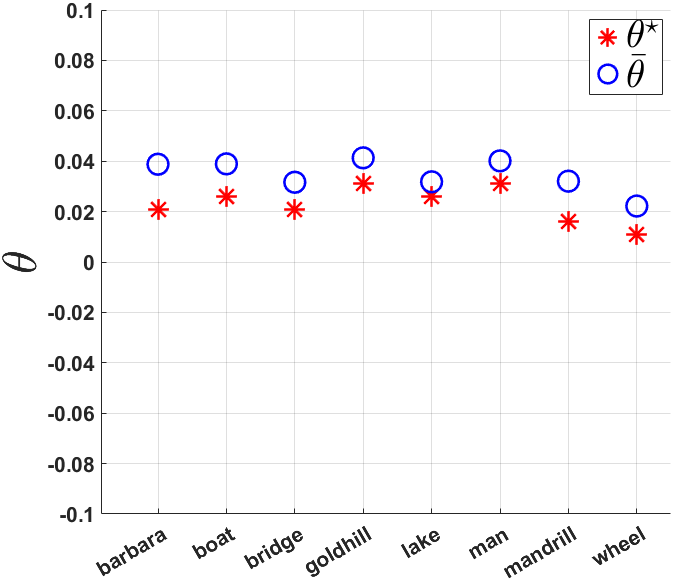}};
    \end{tikzpicture}
    &
    \begin{tikzpicture}[spy using outlines={rectangle, red,magnification=4, connect spies}]
        \node {\pgfimage[interpolate=true,width=.22\linewidth]{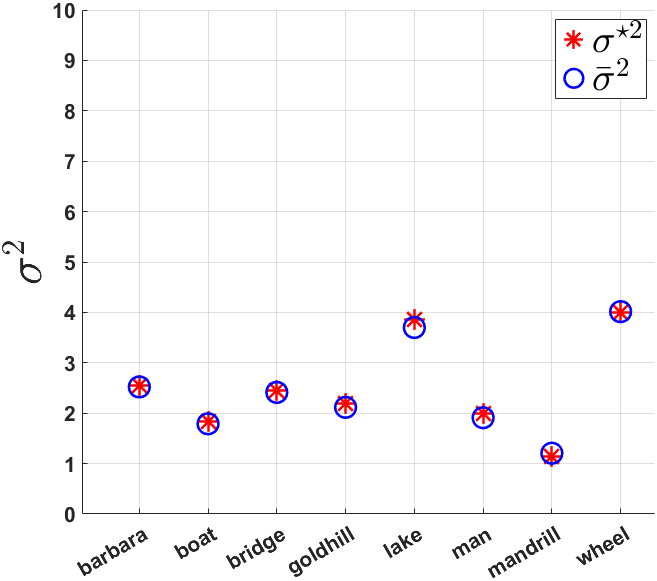}};
    \end{tikzpicture} 
    \end{tabular}
    \begin{tabular}
    {cccc}
    \centering 
    \begin{tikzpicture}[spy using outlines={rectangle, red,magnification=4, connect spies}]
        \node {\pgfimage[interpolate=true,width=.22\linewidth]{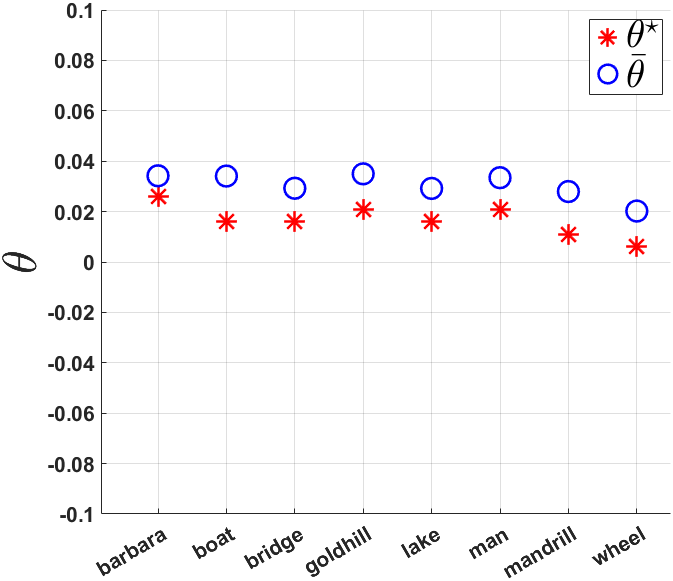}};
        \node [below=2cm, align=flush center]{$(a)$ $\bar{\theta}$ and $\theta^*$ (20 dB)};
    \end{tikzpicture}
    &
    \begin{tikzpicture}[spy using outlines={rectangle, red,magnification=4, connect spies}]
        \node {\pgfimage[interpolate=true,width=.22\linewidth]{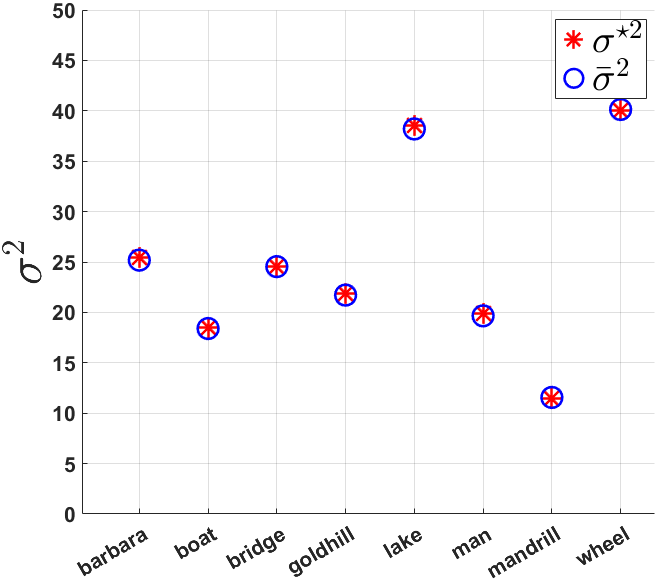}};
        \node [below=2cm, align=flush center]{$(b)$ $\bar{\sigma}^2$ and $\sigma^{2*}$ (20 dB)};
    \end{tikzpicture} 
    &
    \begin{tikzpicture}[spy using outlines={rectangle, red,magnification=4, connect spies}]
        \node {\pgfimage[interpolate=true,width=.22\linewidth]{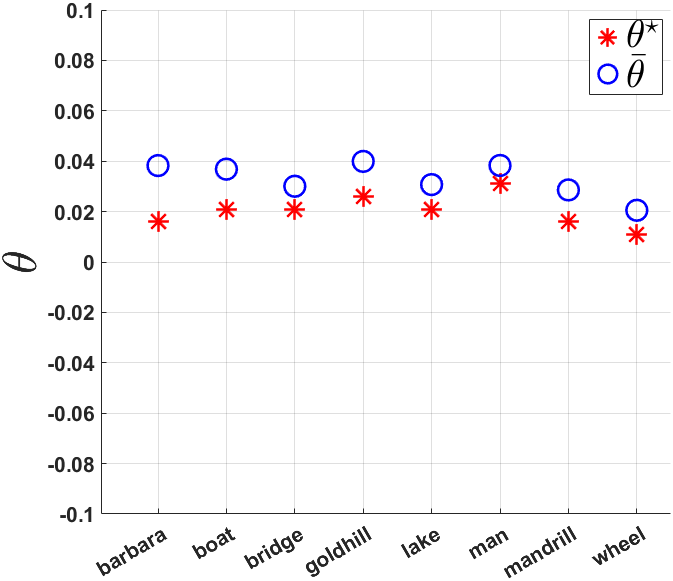}};
        \node [below=2cm, align=flush center]{$(c)$ $\bar{\theta}$ and $\theta^*$ (30 dB)};
    \end{tikzpicture}
    &
    \begin{tikzpicture}[spy using outlines={rectangle, red,magnification=4, connect spies}]
        \node {\pgfimage[interpolate=true,width=.22\linewidth]{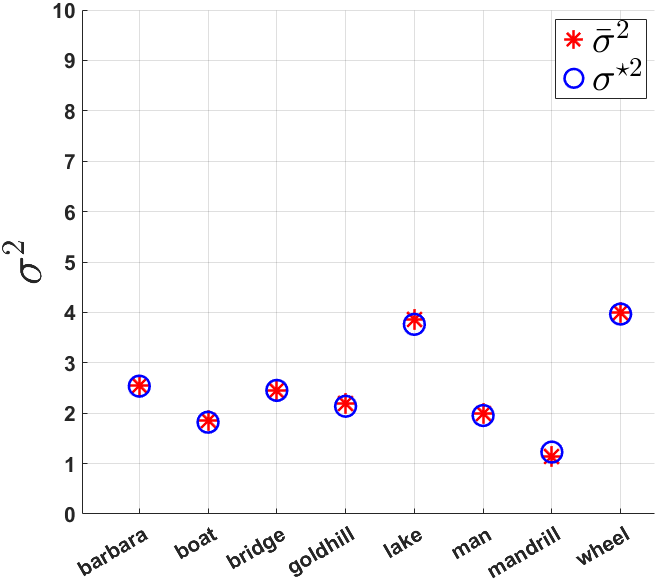}};
        \node [below=2cm, align=flush center]{$(d)$ $\bar{\sigma}^2$ and $\sigma^{2*}$ (30 dB)};
    \end{tikzpicture} 
    \end{tabular}
    
    \caption{Summary of the experiments involving the Gaussian model (first row), the Laplace model (second row), and the Moffat model (last row), each conducted under two noise regimes achieving a BSNR value of $20$ dB and $30$ dB, respectively. $(a)$ and $(b)$ depict the optimal value $\theta^\star$ alongside the estimated value $\bar{\theta}$ . $(b)$ and $(d)$ depict the optimal value $\sigma^{2^\star}$ together with the estimated value $\bar{\sigma}^2$.} 
    \label{fig:Gaussian_theta_sigma77}
\end{figure}

\subsection{Non-blind deblurring results}
This section presents additional results to illustrate the effectiveness of our proposed method when the noise variance is set to achieve a BSNR value of $20$ dB. \Cref{fig:Laplace_20}-$(a)$ depicts the true image along with the true blur kernel positioned in the upper right corner. \Cref{fig:Laplace_20}-$(b)$ shows the blurred image along with its corresponding PSNR value, and finally \Cref{fig:Laplace_20}-$(c)$ shows the MAP estimate, including the estimated blur kernel in the upper right corner, along with the associated PSNR value. 

\begin{figure}[H]
    \begin{tabular}
    {ccc}
    \centering 
    \begin{tikzpicture}[spy using outlines={rectangle, red,magnification=4, connect spies}]
        \node {\pgfimage[interpolate=true,width=.25\linewidth]{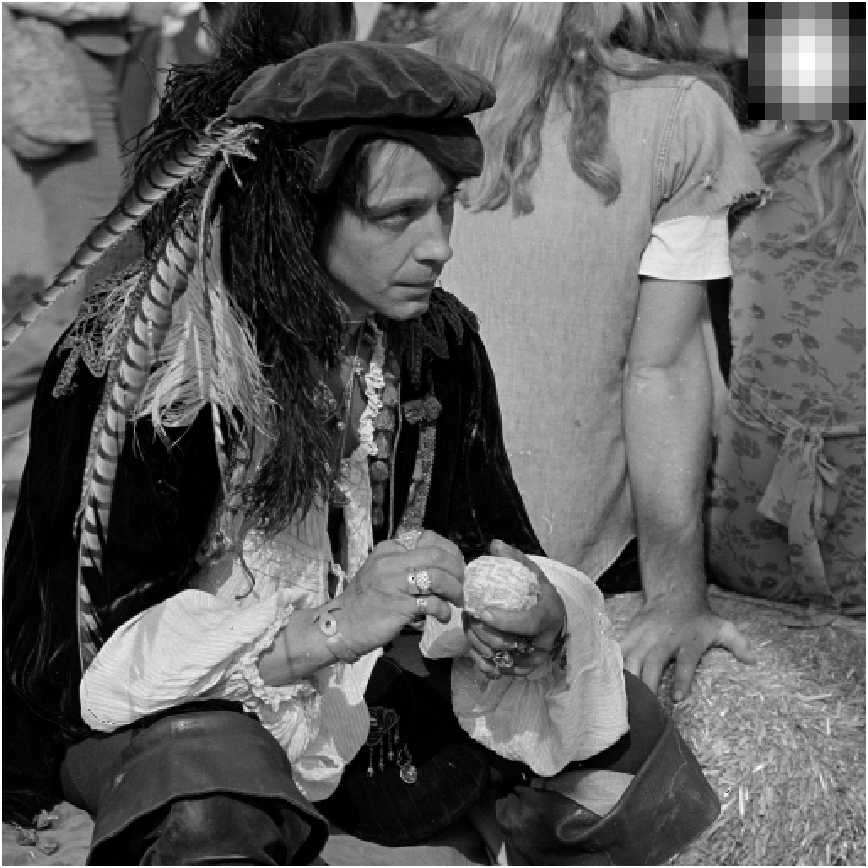}};
        \node [below=2.7cm, align=flush center]{$(a)$ Ground truth};
    \end{tikzpicture}
    &
    \begin{tikzpicture}[spy using outlines={rectangle, red,magnification=4, connect spies}]
        \node {\pgfimage[interpolate=true,width=.25\linewidth]{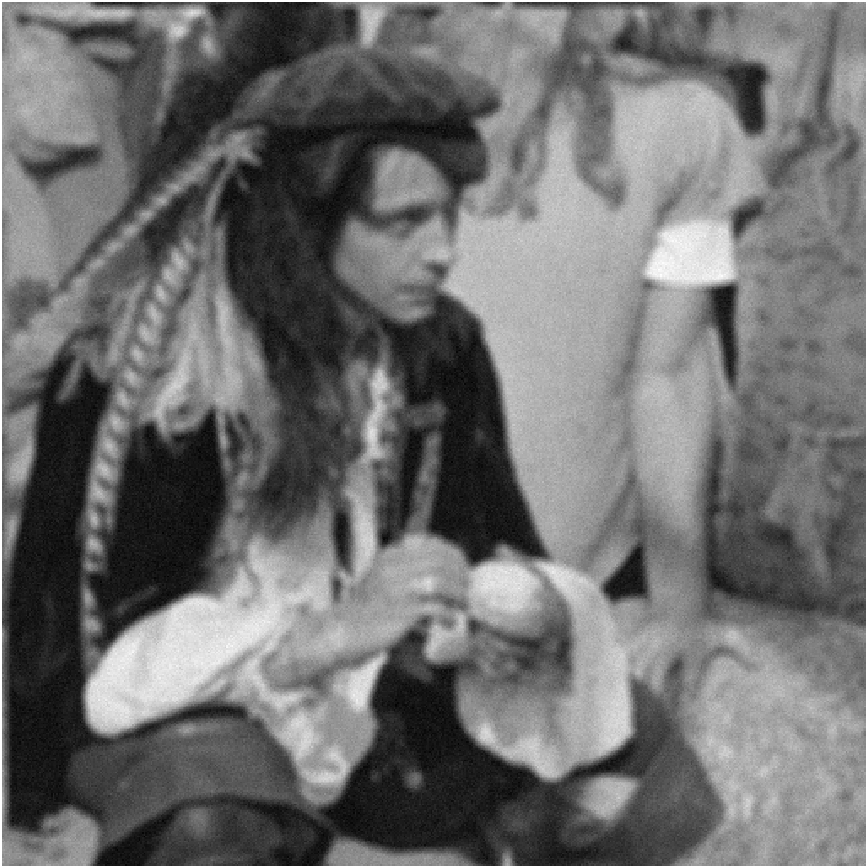}};
        \node [below=2.7cm, align=flush center]{$(b)$ Blurred (20.23 dB)};
    \end{tikzpicture} 
    &
    \begin{tikzpicture}[spy using outlines={rectangle, red,magnification=4, connect spies}]
        \node {\pgfimage[interpolate=true,width=.25\linewidth]{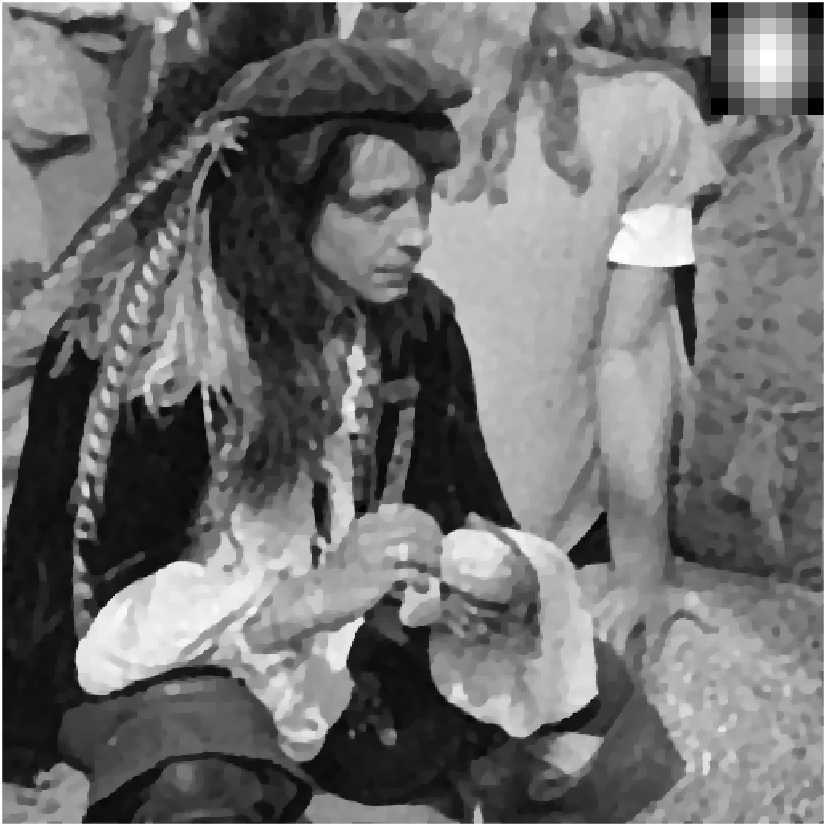}};
        \node [below=2.7cm, align=flush center]{$(c)$ $x_{MAP}$ (27.84 dB)};
    \end{tikzpicture}
    \end{tabular}
    \centering
    \begin{tabular}
    {ccc}
    \centering 
    \begin{tikzpicture}[spy using outlines={rectangle, red,magnification=4, connect spies}]
        \node {\pgfimage[interpolate=true,width=.25\linewidth]{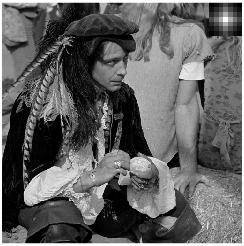}};
        \node [below=2.7cm, align=flush center]{$(a)$ Ground truth};
    \end{tikzpicture}
    &
    \begin{tikzpicture}[spy using outlines={rectangle, red,magnification=4, connect spies}]
        \node {\pgfimage[interpolate=true,width=.25\linewidth]{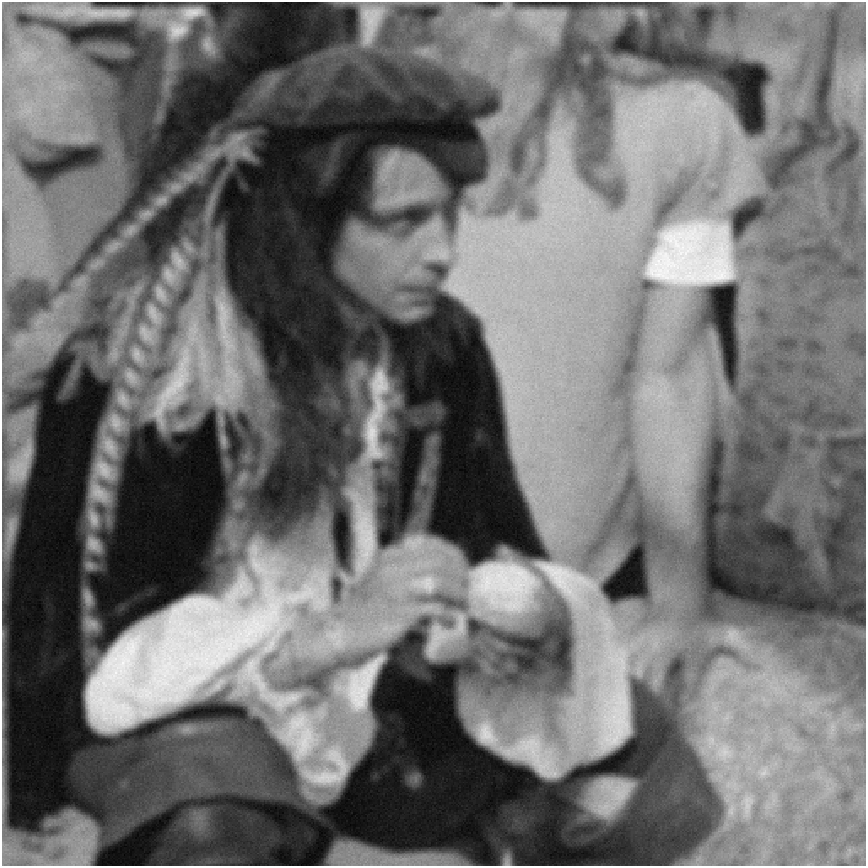}};
        \node [below=2.7cm, align=flush center]{$(b)$ Blurred (20.25 dB)};
    \end{tikzpicture} 
    &
    \begin{tikzpicture}[spy using outlines={rectangle, red,magnification=4, connect spies}]
        \node {\pgfimage[interpolate=true,width=.25\linewidth]{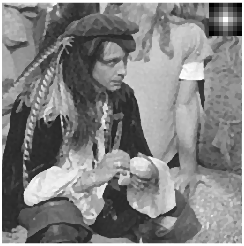}};
        \node [below=2.7cm, align=flush center]{$(c)$ $x_{MAP}$ (27.94 dB)};
    \end{tikzpicture}
    \end{tabular}
    \begin{tabular}
    {ccc}
    \centering 
    \begin{tikzpicture}[spy using outlines={rectangle, red,magnification=4, connect spies}]
        \node {\pgfimage[interpolate=true,width=.25\linewidth]{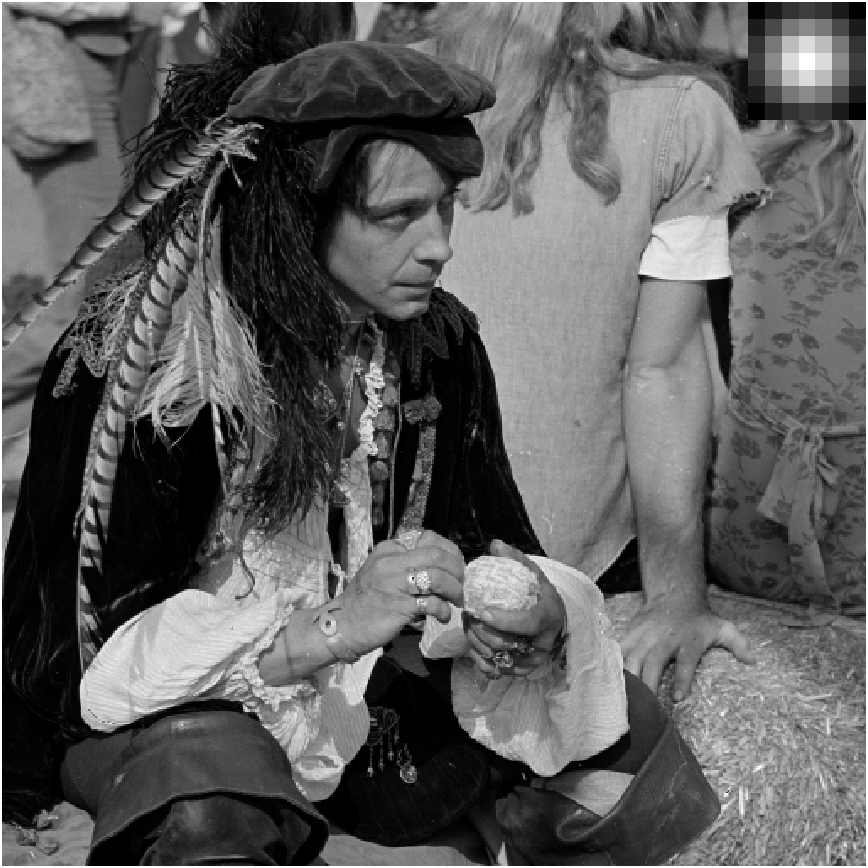}};
        \node [below=2.7cm, align=flush center]{$(a)$ Ground truth};
    \end{tikzpicture}
    &
    \begin{tikzpicture}[spy using outlines={rectangle, red,magnification=4, connect spies}]
        \node {\pgfimage[interpolate=true,width=.25\linewidth]{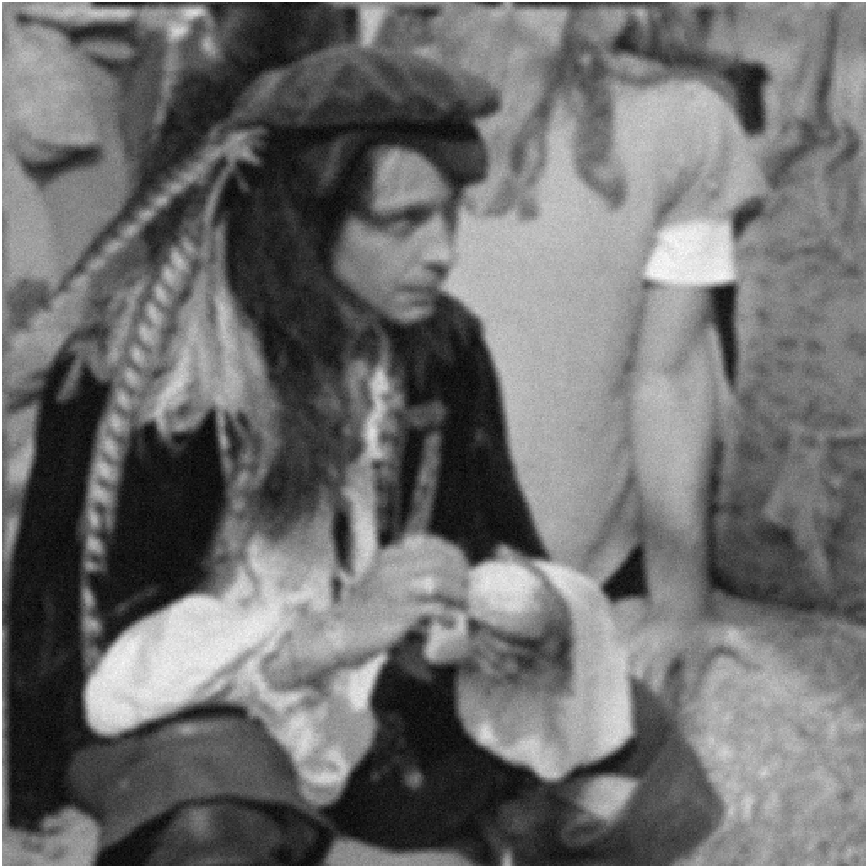}};
        \node [below=2.7cm, align=flush center]{$(b)$ Blurred (20.24 dB)};
    \end{tikzpicture} 
    &
    \begin{tikzpicture}[spy using outlines={rectangle, red,magnification=4, connect spies}]
        \node {\pgfimage[interpolate=true,width=.25\linewidth]{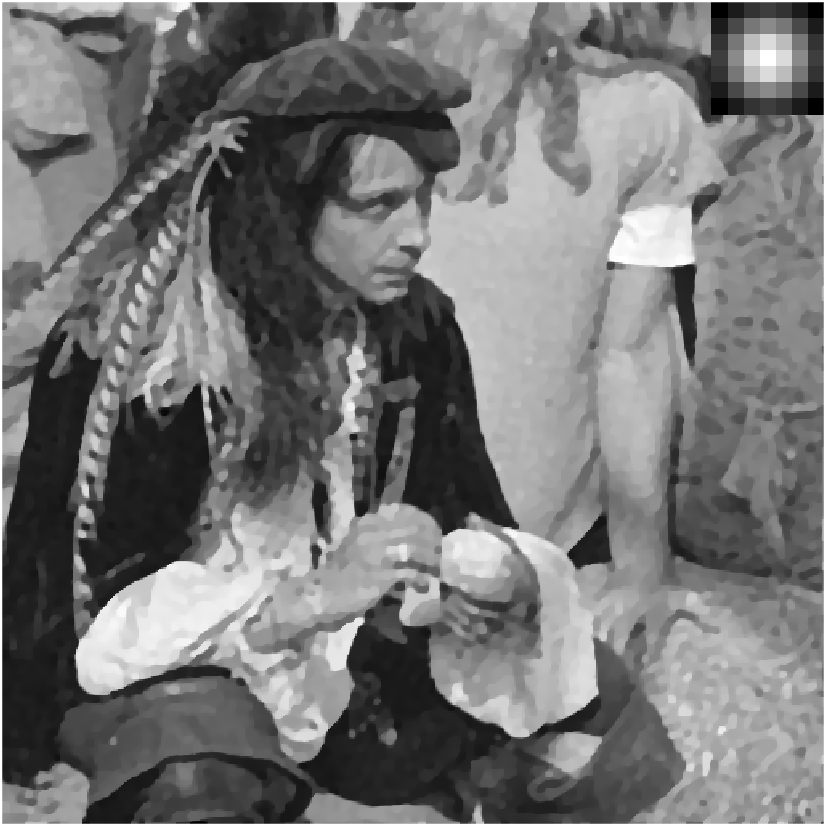}};
        \node [below=2.7cm, align=flush center]{$(c)$ $x_{MAP}$ (28.0 dB)};
    \end{tikzpicture}
    \end{tabular}
    
    \caption{Qualitative results for the noise variance set to achieve  a BSNR value of $20$ dB. From the top raw to the bottom raw, the reconstruction results obtained with Gaussian, Laplace, and Moffat blur kernels are presented, respectively. $(a)$ True test image \texttt{man} with the true blur kernel in the top right corner. $(b)$ Observation $y$. $(c)$ MAP estimate with the estimated blur kernel in the top right corner.} 
    \label{fig:Laplace_20}
\end{figure}

\end{document}